\pdfoutput=1
\documentclass[11pt]{article}
\usepackage[left=1in, top=1in, right=1in, bottom=1in]{geometry}
\usepackage{bookmark}
\usepackage{amsmath, amsthm, amssymb, bm, bbm}
\usepackage{mathtools}
\usepackage{appendix}
\usepackage{color}
\usepackage[semibold]{libertine}
\usepackage{aliascnt}
\usepackage{graphicx}
\usepackage[numbers]{natbib}
\usepackage{enumerate}
\usepackage{thmtools}
\usepackage{thm-restate}
\usepackage{float}
\usepackage{thmtools}
\usepackage{thm-restate}
\usepackage{algorithm2e}

\newtheorem{theorem}{Theorem}[section]
\newtheorem{lemma}[theorem]{Lemma}

\newtheorem{claim}[theorem]{Claim}

\newtheorem{definition}[theorem]{Definition}
\newtheorem{assumption}[theorem]{Assumption}
	\newaliascnt{remark}{theorem}
	
\newtheorem*{remark*}{Remark}
\newtheorem{example}[theorem]{Example}

\RequirePackage{comment}

\newcommand\hufu[1]{}
\newcommand\qunhu[1]{}

\bibliographystyle{plainnat}

\newcommand{\AutoAdjust}[3]{\mathchoice{ \left #1 #2  \right #3}{#1 #2 #3}{#1 #2 #3}{#1 #2 #3} }
\newcommand{\Xcomment}[1]{{}}

\newcommand{\InBrackets}[1]{\AutoAdjust{[}{#1}{]}}
\newcommand{\Ex}[2][]{\operatorname{\mathbf E}_{#1}\InBrackets{#2}}

\newcommand{\Prx}[2][]{\operatorname{\mathbf{Pr}}_{#1}\InBrackets{#2}}

\newcommand{\given}{\;\vert\;}
\newcommand{\eps}{\epsilon}

\newcommand{\noaccents}[1]{#1}

\newcommand{\newagentvar}[3][\noaccents]{%
\expandafter\newcommand\expandafter{\csname #2\endcsname}{#1{#3}}%
\expandafter\newcommand\expandafter{\csname #2s\endcsname}{#1{\boldsymbol{#3}}}%
\expandafter\newcommand\expandafter{\csname #2smi\endcsname}[1][i]{#1{\boldsymbol{#3}}_{\text{-}##1}}%
\expandafter\newcommand\expandafter{\csname #2i\endcsname}[1][i]{#1{#3}_{##1}}%
\expandafter\newcommand\expandafter{\csname #2ith\endcsname}[1][i]{#1{#3}_{(##1)}}%
}

\DeclareMathOperator{\argmin}{argmin}

\DeclareMathOperator{\reg}{Reg}

\newagentvar{alloc}{x}

\newcommand{\arrival}{\lambda}
\newcommand{\arrivals}{\boldsymbol{\lambda}}
\newcommand{\process}{\mu}
\newcommand{\processes}{\boldsymbol{\mu}}
\newcommand{\paths}{\mathcal P}
\newcommand{\queue}{Q}
\newcommand{\matchingmatrixset}{\mathcal M}
\newcommand{\matchingmatrix}{M}
\newcommand{\probabilitymatrix}{P}
\newcommand{\source}{S_1}
\newcommand{\medium}{S_2}
\newcommand{\terminal}{S_3}
\newcommand{\disjointpathset}{U}
\newcommand{\action}{a}
\newcommand{\potential}{\Phi}

\newcommand{\neighbor}{N}
\newcommand{\outneighbor}{\neighbor^{\textnormal{out}}}
\newcommand{\inneighbor}{\neighbor^{\textnormal{in}}}

\newcommand{\indicator}{\mathbbm{1}}

\newcommand{\history}{\mathcal F}

\newcommand{\alphas}{\boldsymbol{\alpha}}

\newagentvar{age}{T}
\newagentvar{strat}{p}

\title{Stability of Decentralized Queueing Networks \\ {Beyond Complete Bipartite Cases}}

\author{Hu Fu
  \thanks{Shanghai University of Finance and Economics, \href{mailto:fuhu@mail.shufe.edu.cn}{\texttt{fuhu@mail.shufe.edu.cn}}.  The work is supported by the Fundamental Research Funds for the Central Universities of China.}
  \and Qun Hu \thanks{Shanghai University of Finance and Economics, \href{mailto:2019212804@163.shufe.edu.cn}{\texttt{2019212804@163.shufe.edu.cn}}}
  \and Jia'nan Lin \thanks{Rensselaer Polytechnic Institute, \href{mailto:linj21@rpi.edu}{\texttt{linj21@rpi.edu}}.  Part of the work done when the third author was visiting Shanghai University of Finance and Economics.}
}

\begin{document}
\maketitle

\begin{abstract}
  \citet{gaitonde2020stability, gaitonde2021virtues} recently studied a model of queueing networks where queues compete for servers and re-send returned packets in future rounds.  They quantify the amount of additional processing power that guarantees a decentralized system's stability, both when the queues adapt their strategies from round to round using no-regret learning algorithms, and when they are patient and evaluate the utility of a strategy over long periods of time.

  In this paper, we generalize Gaitonde and Tardos's model and consider scenarios where not all servers can serve all queues (i.e., the underlying graph is an incomplete bipartite graphs) and, further, when packets need to go through more than one layer of servers before their completions (i.e., when the underlying graph is a DAG).  For the bipartite case, we obtain bounds comparable to those by Gaitonde and Tardos, with the factor slightly worse in the patient queueing model.  For the more general multi-layer systems, we show that straightforward generalizations of the utility function and servers' priority rules in \citep{gaitonde2020stability} may lead to unbounded gaps between centralized and decentralized systems when the queues use no regret strategies.  We define a new utility and a service priority rule that are aware of the queue lengths, and show that these suffice to restore the bounded gap between centralized and decentralized systems observed in bipartite graphs.
\end{abstract}

\section{Introduction}
\label{sec:intro}

A recurrent theme in algorithmic game theory is to analyze systems operated by decentralized, strategic agents, in comparison with those run by centralized authorities.
Since Koutsoupias and Papadimitriou \cite{KP99} introduced the concept of \emph{Price of Anarchy}, it has been applied and studied in various games such as routing in congestion games \cite{RT02}, network resource allocation \cite{JT04}, auctions \cite{CKS16}, among many other settings.
Recently, Gaitonde and Tardos \cite{gaitonde2020stability,gaitonde2021virtues} introduced a routing game in queueing systems, where queues compete for servers each round, and packets not processed sucessfully in one round go back to their queues and have to be re-sent in the future.
Unlike most games previously studied, in such systems, the strategies and outcomes of one round have carryover effect in future rounds, introducing intricate dependencies among the rounds.
Gaitonde and Tardos developed bicriteria bounds that quantify the loss of efficiency due to decentralized strategic behaviors in such systems in two settings:  in \cite{gaitonde2020stability}, the queues evaluate the utility of their strategies from round to round, and adopt no-regret learning algorithms in their routing decisions; in \cite{gaitonde2021virtues}, the queues are ``patient'', and fix their strategies over long periods of time over which they evaluate their performances.

In both \cite{gaitonde2020stability} and \cite{gaitonde2021virtues}, all servers can process requests from all queues, and a packet leaves the system once it is processed by a server.
These are simplifying modelling assumptions: in many queueing systems, each queue's packets may only be processed by certain servers, and a packet may need to go through more than one server before leaving the system.
In this work, we model such added complexities by seeing the queues and servers as nodes of a directed acyclic graph (DAG).  
A queue can send requests to a server only if it has an outgoing edge to the server.
Packets arrive at given rates to nodes with no incoming edges, and leave the system when they are successfully processed by servers with no outgoing edges; 
nodes with both incoming and outgoing edges are both servers and queues --- after it successfully processes a packet, the packet joins its queue and waits to be sent to the next server.
The case considered by Gaitonde and Tardos \cite{gaitonde2020stability} corresponds to complete bipartite graphs.
We examine whether and how their results generalize to more general settings.

\paragraph{Our Results.}
We first characterize networks that can be stable under a centralized policy, where stability roughly means that the number of packets accumulated in the system is bounded.  
As in \cite{gaitonde2020stability}, the main lesson of the characterization is that it is without loss of generality for a centralized policy to fix for each queue a distribution and sample a server from this distribution at each time step, independently of the history and all other happenings in the system.
For bipartite graphs (Theorem~\ref{thm:IB-central}) our proof takes a perspective  arguably simpler than that in~\cite{gaitonde2020stability}, and this perspective is instrumental in showing the conditions for general DAGs, which are considerably more involved.

We then consider decentralized systems where queues use no-regret learning strategies.
For general bipartite graphs, we show that the bound in \cite{gaitonde2020stability} generalizes with minor modification.  
We inherit much of the proof framework of \cite{gaitonde2020stability}, including a potential function argument and various apparatus for analyzing the random processes, although in the key step of the argument where one uses the no regret property to bound the number of ``old'' packets processed over a time window, our proof has to take into account the underlying graph structure, and makes a connection with the \emph{dual} form of the conditions for centralized stability.
The eventual stability conditions we give (Theorem~\ref{thm:IB-decentral}) when queues use no-regret learning strategies is also expressed as a scaled dual form of the centralized stability conditions. 
As a consequence, the main bicriteria comparison result in \cite{gaitonde2020stability} extends to general bipartite graphs: a decentralized system is stable if it can be made stable under a centralized policy with the arrival rates doubled. 
Interestingly, the dual variables in our decentralized stability condition take values from a smaller range ($\{0, 1\}$) than in the centralized stability condition (where they may be any nonnegative numbers).  
For complete bipartite graphs, it can be shown that even for the centralized stability condition, the dual variables need only take $0, 1$ values.
In this sense, our results suggest that the gap between the two conditions tends to be smaller for incomplete bipartite graphs.

Networks with more than one layer of servers are even more interesting.
A major conclusion reached in~\cite{gaitonde2020stability} is that a server's rule of priority for packets simultaneously sent to it is crucial for the system's stability.
In the complete bipartite graphs, it was shown that, if the servers pick a packet uniformly at random, then no said bicriteria bound can be given; in contrast, the bicriteria result was obtained when servers are assumed to prioritize older packets.
Another important factor in the model is the queues' utilities: it was assumed in~\cite{gaitonde2020stability} that a queue collects utility of 1 if its packet is successfully cleared by a server, and 0 otherwise.
Our results for general bipartite graphs inherit both these modelling assumptions.
However, for graphs of even three layers, we give an example showing that no finite bound of the bicriteria form can be obtained if one directly extends the utility and the priority rule from~\cite{gaitonde2020stability}.
Intuitively, in order for the system not to lose too much efficiency, information on the underlying graph is important when there are multiple layers: a server with strong processing capacity may be poorly connected in the next layer, and myopic strategies easily send too many packets to such a server.
Therefore, the queues' utilities need to incorporate more information for their strategies to better align with the system's stability; on the other hand, if they are fed with too much global information, the difference could blur between centrally controlled systems and decentralized ones.
A natural question to raise is whether it is possible to incentivize the queues using only local information so that their selfish behaviors do not hurt the system efficiency too much.
We answer this question in the affirmative, showing that the \emph{lengths} of queues in the neighboring nodes provide just this information.
We propose a new service priority rule, under which the servers prioritize packets from the \emph{longest} queues.
We also propose new utility functions for queues, with which a queue of length $L_i$, when it sends a packet to a server~$j$ whose own queue is of length~$L_j$, obtains utility $L_i - L_j$ if the packet is successfully processed by~$j$.
In particular, with this new utility function, a queue never sends its packets to a server whose current queue is longer than itself.
We show that when the new service priority rule and utilities are adopted, the bicriteria result is restored: a queueing system is stable with queues that use no-regret strategies as long as it is stable under a centralized policy even when the packet arrival rates are doubled.

Lastly, we extend the model with patient queues to bipartite graphs. Gaitonde and Tardos
\cite{gaitonde2021virtues} showed for complete bipartite graphs that, when queues are patient, with appropriately defined long-term utilities, a Nash equilibrium always exists, and a system is stable under any Nash equilibrium as long as it is stable under a centralized policy even with $\frac e {e - 1}$ times the original arrival rates.
To this end, they developed elaborate tools for computing the long-term utilities given the queues' strategies.  
These tools generalize straightforwardly in general bipartite graphs, but the delicate deformation argument in the proof of their bicriteria result does not easily generalize.
Our proof again makes use of the dual form of the condition for centralized stability, which provides a matching between the fastest growing queues in an equilibrium and servers.

In the appendix, we also consider two other variants of the problem: in one model, whether a server can process a packet is not determined by which queue the packet is from, but is an intrinsic property of the packet; in the other one, the arrival of packets at each queue is not from a Bernoulli distribution, but is controlled by an adversarial, as in the model of  Borodin et al \cite{BKRSW01}.
In both variants, we show that the bicriteria results persist when queues use no-regret strategies.
Lastly, we give a tighter bicriteria result for the model in \cite{gaitonde2020stability}, where the underlying graph is a complete bipartite graph. 
We show that a queueing system is stable with queues that play no-regret strategies as long as it is stable under a centralized policy even when the $k$-th largest packet arrival rate is increased by a factor $\frac{2k-1}{k}$ for each~$k$.

\paragraph{Further Related Works.}
We refer to Gaitonde and Tardos \cite{gaitonde2020stability,gaitonde2021virtues} for pointers to related works in algorithmic game theory and no regret learning.
Sentenac et al.~\cite{sentenac2021decentralized} considered the same model as in~\cite{gaitonde2020stability} but when queues use \emph{cooperative} learning.  When incentives are removed from the problem, they show that the queues can essentially learn the necessary system parameters and reach a stable outcome as long as the system is stable under a centralized policy.

\section{Preliminaries}
\label{sec:prelim}

\subsection{Queue-G Model}
A \emph{Queue-G Model} is a $G = (V, E, \arrivals, \processes)$, where $(V = \source \cup \medium \cup \terminal, E)$ constitutes a directed acyclic graph, and $\arrivals$ and $\processes$ are the arrival and processing rates on the nodes.
A node~$i$ with no incoming edge is a \emph{source}, and has an \emph{arrival rate}~$\arrival_i$. For each $i$, $\arrival_i \in (0,1)$.
$\source$ denotes the set of sources.
All the other nodes are \emph{servers}, and each server~$j$ has a \emph{processing rate}~$\process_j$.
A server with no outgoing edge is a \emph{terminal}.
$\terminal$ denotes the set of terminals.
The set of non-terminal server nodes is $\medium \coloneqq V - \source - \terminal$.
An edge $(i, j) \in E$ means that node~$i$ can send packets to node~$j$.
For $i \in \source \cup \medium$, we denote by $\outneighbor(i) \coloneqq \{j \in V: (i, j) \in E\}$ the set of out-neighbors of~$i$, and for a server~$i$, we denote by $\inneighbor(i) \coloneqq \{j \in V: (j, i) \in E\}$ the set of in-neighbors of~$i$.

Let $\queue_t^i$ denote the number of packets at node $i$ at the beginning of time step $t$. 
For all $i \in V$, $\queue_0^i = 0$.
At each time step $t$, the following events happen, in two phases:
\begin{enumerate}[(I)]
  \item Packet sending: 
each node~$i$ with $\queue_t^i>0$ 
chooses a server~$j$ from $\outneighbor(i)$ and sends to $j$ the oldest packet (with the earliest timestamp) in $i$'s queue.
In a centralized system, a central authority dictates for each node if and where to send its packet at each time step;
in a decentralized system, each node strategizes over this decision.
\item Packet arrival and processing: 
at each source $i \in \source$, a packet with timestamp~$t$ arrives with probability~$\arrival_i$;
each server $i \in \medium \cup \terminal$, if it receives any packet in phase (I), chooses one such packet according to some \emph{service priority rule} to process, and succeeds with probability~$\process_i$.
The arrivals of packets at each source and the successes of their processing at each server are all mutually independent events.
A packet cleared by server $j \in \medium$ joins the queue of server~$j$; a packet cleared by a server in~$\terminal$ leaves the system. 
A packet not chosen by or not successfully processed by a server goes back to the node that sends it.
It follows that any $i \in \terminal$ has $\queue_t^i=0$ at any time step~$t$. 
\end{enumerate}

Gaitonde and Tardos~\cite{gaitonde2020stability} considered a special case of the Queue-G Model, where there are no non-terminal servers and every source can send packets to every server, i.e., $\medium = \emptyset$ and $E = \source \times \terminal$, and the service priority rule at each server is to choose the oldest packet (breaking ties arbitrarily).\footnote{For ease of presentation, we made minor changes from \citet{gaitonde2020stability}'s model, in the order of packet sending and packet arrival. It is easy to see that the difference is negligible for the analysis of the system's stability, which is an asymptotic quality, to be defined below.}
 We refer to this special case as
 the \emph{Queue-CB Model} (``CB'' for complete bipartite). 
If we only have $\medium = \emptyset$ (and allow any $E \subseteq \source \times \terminal$), we have the \emph{Queue-B Model}. 

\subsection{Stability and no regret learning}\label{stability and no regret}
Let $\queue_t \coloneqq \sum_{i \in V}\queue_t^i$ be the total number of packets in the queueing system at the start of time step~$t$.
We inherit from \cite{gaitonde2020stability} the notion of stability:

\begin{definition}
  \label{def:stability}
  Under some scheduling policy (either with a central authority or with queues strategizing), a queueing system is \emph{strongly stable} if for any $a>0$, there is a constant $C_a$ only related to $a$, such that $\mathbb{E}[(\queue_t)^a] \leq C_a$ for all $t$. A queueing system is \textbf{almost surely strongly stable} if with probability 1, the following event happens: for any $a>0$, $\queue_t=o(t^a)$.
\end{definition}
\citet{gaitonde2020stability} showed that if a queueing system is strongly stable, then it is almost surely strongly stable. 
We therefore focus on showing strong stability, and often refer to a strongly stable system simply as stable.

The following theorem by \citet{pemantle1999moment}, also used in \citep{gaitonde2020stability}, is the workhorse for showing stability.

\begin{theorem}[\citeauthor{pemantle1999moment}] \label{pemantale}
  \label{thm:pemantle}
  Let $X_0, \cdots, X_n$ be nonnegative random variables. 
  If there are constants $b,c,d>0$ and $p>2$ such that $X_0 \leq b$ and, for all $n$, 
\begin{align}
  \mathbb{E}(|X_{n+1}-X_n|^p \given X_0, \cdots, X_n) & \leq d; \label{eq:bounded-jump} \\
  X_n>b \quad \Rightarrow  \quad \mathbb{E}(X_{n+1}-X_n \given X_0, \cdots, X_n) & \leq -c,  \label{eq:negative-drift}
\end{align}
then for any $a \in(0,p-1)$, there is $C=C(p,a,b,c,d)$ such that $\mathbb{E}(X_n)^a <C$ for all $n$.
\end{theorem}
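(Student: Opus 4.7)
The plan is a Lyapunov/Foster-type argument. Fix $a \in (0, p-1)$, set $\alpha = a+1 \in (1, p)$, and take $V(x) = (x+1)^\alpha$, so that $X_n^a \leq V(X_n)$; it suffices to show $\sup_n \E{V(X_n)} < \infty$.

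The core step is to establish strict negative drift of $V$ outside a bounded set. Writing $J_n = X_{n+1} - X_n$, a second-order Taylor expansion yields
\[
V(X_{n+1}) - V(X_n) = \alpha(X_n+1)^{\alpha-1} J_n + \tfrac{1}{2}\alpha(\alpha-1)(\xi_n+1)^{\alpha-2} J_n^2
\]
for some $\xi_n$ between $X_n$ and $X_{n+1}$. The first-order term, after conditional expectation and an application of \eqref{eq:negative-drift}, is at most $-\alpha c (X_n+1)^{\alpha-1}$ whenever $X_n > b$. Controlling the remainder is the crux: on the event $\{|J_n| \leq (X_n+1)/2\}$, Taylor's theorem gives an $O((X_n+1)^{\alpha-2}\, \E{J_n^2})$ contribution, which is $O((X_n+1)^{\alpha-2})$ since $p > 2$ lets Jensen's inequality applied to \eqref{eq:bounded-jump} bound $\E{J_n^2}$ by $d^{2/p}$; on $\{|J_n| > (X_n+1)/2\}$, the crude bound $|V(X_{n+1}) - V(X_n)| \leq 2\cdot 3^\alpha |J_n|^\alpha$ combined with the Markov tail $\Pr(|J_n| > (X_n+1)/2) \leq 2^p d\,(X_n+1)^{-p}$ and H\"older's inequality yields $O((X_n+1)^{\alpha-p})$. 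Since $\alpha < p$ and $\alpha - 2 < \alpha - 1$, both error terms are $o((X_n+1)^{\alpha-1})$, so for all $X_n$ above a threshold $B \geq b$,
\[
\E{V(X_{n+1}) - V(X_n) \given X_0, \dots, X_n} \leq -\tfrac{1}{2}\alpha c\,(X_n+1)^{\alpha-1}.
\]

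The final step converts this strict negative drift of $V$ into a uniform moment bound. A standard route is an excursion analysis: the negative drift of $X_n$ itself forces the return time $\tau = \inf\{m > n : X_m \leq b\}$ to satisfy $\E{\tau - n \given X_n} \lesssim X_n/c$, while during an excursion the centered process $X_n - \E{X_n \given \mathcal{F}_{n-1}}$ is a martingale with jumps of bounded $p$-th moment. A Doob/Burkholder--Davis--Gundy type maximal inequality then shows the excursion supremum lies in $L^{p-1}$, from which $\sup_n \E{X_n^a} < \infty$ follows for every $a < p-1$. The main obstacle throughout is the large-jump remainder estimate; balancing the H\"older exponents so that the error $(X_n+1)^{\alpha-p}$ beats the main drift term $(X_n+1)^{\alpha-1}$ uses $\alpha < p$, i.e.\ exactly the hypothesis $a < p - 1$, and it is also this step that forces $p > 2$ in order to control the second-order Taylor term.
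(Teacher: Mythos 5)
First, a point of reference: the paper does not prove this statement. Theorem~\ref{thm:pemantle} is imported verbatim from \citet{pemantle1999moment} and used as a black box, so there is no in-paper proof to compare against; your attempt has to stand on its own. Your drift computation for $V(x)=(x+1)^{a+1}$ is correctly executed (the small-jump/large-jump split and the H\"older exponents check out), but it cannot deliver the conclusion, and your write-up tacitly concedes this by never using it again. The drift you obtain is of order $-(X_n+1)^{\alpha-1}=-\Theta\bigl(V(X_n)^{1-1/\alpha}\bigr)$, i.e.\ \emph{sublinear} in $V$. Telescoping such a drift only bounds Ces\`aro averages of $\E{(X_k+1)^{\alpha-1}}$, and the recursion $M_{n+1}\le M_n-\eps\,\E{V(X_n)^{1-1/\alpha}}+K$ for $M_n=\E{V(X_n)}$ cannot be closed, because Jensen's inequality for the concave map $v\mapsto v^{1-1/\alpha}$ points the wrong way (mass hidden at large values makes $\E{V}$ large while $\E{V^{1-1/\alpha}}$ stays small). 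Only a geometric drift $\Delta V\le-(1-\rho)V+K$ would give uniform boundedness directly, and that is not what a constant drift $-c$ on $X_n$ produces for a power of $X_n$.

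The entire content of the theorem therefore sits in your final paragraph, and that is where the gap is. The assertion that ``a Doob/Burkholder--Davis--Gundy type maximal inequality shows the excursion supremum lies in $L^{p-1}$'' is not a routine application of either inequality: BDG over an excursion of random length $\tau$ requires control of $\E{\tau^{p/2}}$, whereas the negative drift gives only $\E{\tau}<\infty$ and a tail $\Pr(\tau>k)\lesssim k^{-p/2}$, so $\E{\tau^{q}}<\infty$ only for $q<p/2$; chasing H\"older exponents through this route loses a power and lands well short of $a<p-1$. Moreover, even granting an $L^{p-1}$ (or weak-$L^{p-1}$) bound on the excursion supremum, converting it into a bound on $\E{X_n^a}$ at a \emph{fixed} time $n$ requires summing over the possible start times $m$ of the live excursion, weighted by $\Pr(\tau>n-m)$; the obvious H\"older splitting of $\E{M^a\,\indicator[\tau\ge j]}$ again degrades the exponent (to roughly $a<(p-1)(p-2)/p$). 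The sharp exponent $p-1$ is exactly the technical heart of Pemantle and Rosenthal's argument --- it is obtained not from BDG but from an optional-stopping argument with a test function such as $(1+Y)^{p-1}$, whose supermartingale property above a threshold, together with the bookkeeping over last-visit times, must be verified separately. As written, your proof asserts this step rather than proving it, so the argument is incomplete at its decisive point.
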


We refer to \eqref{eq:negative-drift} as the \emph{negative drift condition}, and \eqref{eq:bounded-jump} as the \emph{bounded jump condition}.

We now introduce utilities of queues, as defined in~\citep{gaitonde2020stability}. The utility of a queue at time step~$t$ is the number of packets cleared from this queue at time step~$t$.
Let $a_i(t)$ denote the server that node~$i$ chooses at time step~$t$.  
(A node $i$ may not choose any server, in which case we let $a_i(t)=0$ and we set $\process_0=0$.)
Let $\mathcal{F}_{t}$ denote the history of the system up to the beginning of time step~$t$. 
We use $u_t^i(a_i(t),a_{-i}(t))|\mathcal{F}_{t})$ to denote the utility of node $i$ when node $i$ chooses server $a_i(t)$ and the other nodes choose $a_{-i}(t)$, given history~$\mathcal{F}_{t}$. 
We should specify the content of a history: $\mathcal{F}_{t}$ only includes information on which packets, up to time step~$t$, were cleared and the age of the currently oldest packet in each node, but does not include the queue size $Q_t^i$. 
This makes sure that, for the $k$-th packet in node $i$ that is cleared at time step~$t$, the time difference between its arrival and that of the $(k+1)$-st packet is independent of the history $\mathcal{F}_{t'}$ for all $t' <t$, and obeys the geometric distribution with parameter $\arrival_i$.

Lastly, we define the \emph{regret} of a node~$i$ up to time~$w$ as the difference between its utility in a real sample path and what it could have achieved by always playing a best fixed action in hindsight.  
\begin{definition}
  \label{def:regret}
For a time window from time step $t_0 - w$ to $t_0-1$, the \emph{regret} of queue~$i$ for actions $a_i(t_0 - w), \ldots, a_i(t_0-1)$ is
\begin{align*}
  \reg_i(w, t_0)\coloneqq \max \limits_{j:j\in \outneighbor(i)} \sum_{t=t_0 - w}^{t_0-1} u_t^i(j, \action_{-i}(t)|\mathcal{F}_{t}) - \sum_{t=t_0 - w}^{t_0-1} u_t^i(\action_i(t), \action_{-i}(t) |\mathcal{F}_{t}).
\end{align*}
\end{definition}

Note that, in this definition, the utility obtained by playing the best fixed strategy is evaluated using ``real'' histories ($\mathcal F_t$'s) observed under the actual actions taken by the node.
It does not use counterfactual histories generated by playing the fixed strategy.
We often drop the parameter $t_0$ when it is clear from the context.  

\begin{definition}
  \label{def:no-regret}
  Given fixed $\delta \in (0,1)$, queue~$i$'s scheduling policy is \emph{no regret} if, for any time window from time step $t_0 - w$ to~$t_0-1$, with probability at least $1-\delta$, 
$\reg_i(w, t_0)\leq \varphi_{\delta}(w)$,
where $\varphi_{\delta}(w) = o(w)$ may depend only on $\delta$ and the number of nodes in the queueing system.
\end{definition}

\section{Bipartite Queueing Systems}
\label{sec:queue-b}
\label{sec:IB}

In this section we derive necessary and sufficient conditions for the existence of a centralized policy that stabilizes a queueing system in a bipartite graph (a Queue-B model). 
We then give a sufficient condition that guarantees the stability of such systems when all queues adopt no-regret strategies.  
For the special case when the underlying graph is complete bipartite, our conditions degenerate to the ones given by \citet{gaitonde2020stability}.

\subsection{Stability Conditions under Centralized Policies}
\label{sec:IB-central}


A Queue-B model as defined in Section~\ref{sec:prelim} simply consists of $n$ queues on one side and $m$ servers on the other.  
Server~$j$ is able to clear a packet from queue~$i$ if and only if there is an edge between the two.  
It is easy to see that a centralized policy never benefits from sending packets from two queues to a same server in a single time step, as the server picks up only one of them.
Therefore, with loss of generality, the routing dictated by a centralized policy at any step gives a matching of the queues to the servers.  (Some queues may be asked not to send their packets, and some servers may be allowed to be idle for that round.)
It is less clear whether a centralized policy benefits from making intricate use of the history when it decides on the matching at each step.
It turns out, for the system to be stable (Definition~\ref{def:stability}), it is without loss of generality to consider history oblivious centralized policy, which samples a matching from a fixed distribution over matchings from step to step.
A Queue-B model can be stable under any centralized policy if and only if it can be stable under such a policy.
This is the essence of the following theorem.



Recall that a fractional matching matrix $P \in [0, 1]^{n \times m}$ is such that $\sum_j P_{ij} \leq 1$ for all $i \in [n]$ and $\sum_i P_{ij} \leq 1$ for all $j \in [m]$.
\begin{restatable}{theorem}{IBcentral}
\label{centralize IB}
  \label{thm:queue-b-centralized}
  \label{thm:IB-central}
  Given a Queue-B model with $n$ queues and $m$~servers, with arrival rates $\arrivals = (\arrival_1, \cdots, \arrival_n)$ and processing rates $\processes = (\process_1, \ldots, \process_m)$, there is a centralized policy under which the system is stable if and only if there exists a fractional matching matrix~$P \in [0, 1]^{n \times m}$, such that $P \processes \succ \arrivals$, where $\succ$ denotes element-wise greater than.  
\end{restatable}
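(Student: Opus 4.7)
The plan is to establish both directions of the biconditional. For sufficiency, given a fractional matching $P$ with $P\processes \succ \arrivals$, I construct a history-oblivious centralized policy as follows. Set $\epsilon := \min_i\bigl((P\processes)_i - \arrival_i\bigr) > 0$. By integrality of the bipartite partial-matching polytope---equivalently, by extending $P$ to a square doubly stochastic matrix with slack rows and columns and applying Birkhoff--von Neumann---write $P = \sum_\ell \lambda_\ell M^{(\ell)}$ as a convex combination of $0/1$ partial matchings compatible with $E$. The policy samples $M^{(\ell)}$ with probability $\lambda_\ell$ independently across steps, and every nonempty queue $i$ sends its oldest packet to $M^{(\ell)}(i)$ whenever $i$ is covered. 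The key observation is that under this policy, the marginal evolution of $\queue_t^i$ is that of a discrete-time birth--death chain with arrival parameter $\arrival_i$ and clearance parameter $(P\processes)_i$, \emph{irrespective of the other queues' states or the history}: conditional on $\queue_t^i > 0$, the probability that queue $i$'s packet is cleared equals $\sum_j P_{ij}\process_j = (P\processes)_i$, and arrivals are independent Bernoullis.

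This marginal structure makes Theorem~\ref{thm:pemantle} immediate for $X_t := \queue_t^i$: jumps are bounded by $1$ in absolute value, and whenever $\queue_t^i \geq 1$ the one-step drift equals $\arrival_i - (P\processes)_i \leq -\epsilon$. Choosing $p > a+1$ then bounds $\E{(\queue_t^i)^a}$ uniformly in $t$ for every $a > 0$. Strong stability of $\queue_t = \sum_i \queue_t^i$ follows from the convexity bound $(\sum_i \queue_t^i)^a \leq n^{\max(a-1,0)} \sum_i (\queue_t^i)^a$ combined with the per-queue moment bounds.

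For necessity, suppose the system is strongly stable under some (possibly history-dependent, randomized) policy, and let $p_{ij}^*$ denote a subsequential limit of the long-run rate at which server $j$ successfully clears a packet from queue $i$; such limits exist by the uniform moment bounds on $\queue_t$. Stability of queue $i$ forces $\sum_j p_{ij}^* = \arrival_i$, while server $j$'s per-step throughput bound gives $\sum_i p_{ij}^* \leq \process_j$. For any $S \subseteq [n]$, letting $N(S) := \bigcup_{i \in S}\outneighbor(i)$ and summing the first identity over $i \in S$ (using $p_{ij}^* = 0$ for non-edges) yields the Hall-type inequality
\[
  \sum_{i \in S}\arrival_i \;=\; \sum_{i \in S,\, j \in N(S)} p_{ij}^* \;\leq\; \sum_{j \in N(S)} \process_j,
\]
which by LP duality on the fractional-matching feasibility LP is equivalent to the existence of a $P$ with $P\processes \geq \arrivals$. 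The main obstacle is upgrading this weak inequality to the strict one: if equality holds for some $S$, then setting $Z_t := \sum_{i \in S}\queue_t^i$, one checks that $\E{Z_{t+1} - Z_t \mid \queue_t} \geq 0$ (the arrival rate into $S$ equals the maximal service capacity of $N(S)$) and $\mathrm{Var}(Z_{t+1}-Z_t \mid \queue_t) \geq \sum_{i\in S}\arrival_i(1-\arrival_i) > 0$ uniformly (arrivals into $S$ are independent Bernoullis and are independent of the server-side randomness). Iterating $\E{Z_{t+1}^2 - Z_t^2 \mid \queue_t} \geq \mathrm{Var}(Z_{t+1}-Z_t \mid \queue_t)$ forces $\E{Z_t^2} = \Omega(t)$, contradicting strong stability; hence the strict Hall condition holds, and by LP duality the desired $P$ with $P\processes \succ \arrivals$ exists.
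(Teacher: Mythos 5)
Your sufficiency direction is correct and is essentially the paper's argument: Birkhoff--von Neumann decomposition of $P$, i.i.d.\ sampling of a matching each step so that there is no contention, and Theorem~\ref{thm:pemantle} applied per queue.

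The necessity direction has a genuine gap. You reduce stability to the Hall-type condition $\sum_{i\in S}\arrival_i\le\sum_{j\in N(S)}\process_j$ for all $S$ and then assert that this is ``by LP duality\ldots equivalent to the existence of a $P$ with $P\processes\ge\arrivals$.'' That equivalence is false. Your derivation uses only the server-side capacity bound $\sum_i p^*_{ij}\le\process_j$ and discards the queue-side constraint that queue $i$ sends at most one request per step, which is exactly the row-sum constraint $\sum_j P_{ij}\le 1$ of a fractional matching. Concretely, take one queue and two servers, both adjacent to the queue, with $\arrival_1=0.9$ and $\process_1=\process_2=0.5$: every Hall inequality holds strictly ($0.9<1$), yet no fractional matching satisfies $(P\processes)_1>0.9$ since $P_{11}+P_{12}\le1$ caps $(P\processes)_1$ at $0.5$ (and indeed this system is unstable). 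The correct dual of the feasibility system (the paper's Lemma~\ref{lem:IB-dual}) quantifies over arbitrary $\alphas\in\mathbb R_+^n$ and over \emph{matchings}, not over subsets $S$ and their neighborhoods, and the paper stresses (Example~\ref{ex:fractional-dual}) that even restricting $\alphas$ to $\{0,1\}^n$ is strictly weaker in incomplete graphs. Your argument could be repaired by setting $P_{ij}:=p^*_{ij}/\process_j$ directly and checking that both row and column sums are at most $1$, which yields $P\processes\succeq\arrivals$ without any duality detour; but then your weak-to-strict upgrade, which is keyed to tight \emph{subsets}, is aimed at the wrong tight structure and would also need to be redone. The paper sidesteps both issues: it conditions on the constant-probability event that every queue sees $\arrival_iT+\sqrt T$ arrivals by time $T$, observes that the conditional time-averaged routing $\Ex{\frac1T\sum_tM_t\given A_T}$ is automatically a fractional matching matrix, and derives $\Ex{Q_T^{i^*}}=\Omega(\sqrt T)$ for some $i^*$ whenever no $P$ with the \emph{strict} inequality exists, so strictness comes for free from the $\sqrt T$ surplus.
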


Sufficiency of the condition is a consequence of Birkhoff-von Neumann theorem.  
The argument of necessity makes use of the observation that, conditioning on any event in the system, the expected routing decision made by a centralized policy is expressible as a fractional matching matrix.  
One may as well condition on the event that all queues have arrivals considerably larger than the expectations, which occurs with constant probability.
This part of the argument is arguably simpler than the proof in \citep{gaitonde2020stability}, and makes possible the more involved proof for more general graphs (Theorem~\ref{thm:G-central}).
We relegate the details to Appendix~\ref{sec:app-IB-central}. 

Before moving on to decentralized Queue-B models, we derive a dual form of the conditions in Theorem~\ref{thm:queue-b-centralized}.
The dual form plays a crucial role in our analysis of the systems' stability under no-regret policies.  

\begin{restatable}{lemma}{IBdual}
  \label{lem:IB-dual}
  Given a Queue-B model with arrival rates $\arrivals$ and processing rates~$\processes$, the following two conditions are equivalent: 
\begin{enumerate}[(1)]

  \item There is a fractional matching matrix $\probabilitymatrix$ such that $P\boldsymbol{\process} \succ \boldsymbol{\arrival}$.

  \item For any $\boldsymbol{\alpha} \in \mathbb R_+^n$, there is a matching matrix $M \in \{0, 1\}^{n \times m}$, such that 
    $\boldsymbol{\alpha}^\top M \processes > \boldsymbol{\alpha}^\top \arrivals$.
\end{enumerate}
\end{restatable}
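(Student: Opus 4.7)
The plan is to treat direction $(1) \Rightarrow (2)$ as a direct consequence of Birkhoff--von Neumann, and direction $(2) \Rightarrow (1)$ by a convex separation argument. Throughout I use the key fact that every (doubly substochastic) fractional matching matrix is a convex combination of integer matching matrices (partial permutation matrices), so the polytope $\mathcal{C} \coloneqq \{P\processes : P \text{ is a fractional matching matrix}\}$ is the convex hull of its integer vertices $\matchingmatrix\processes$.

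For $(1) \Rightarrow (2)$, fix any nonzero $\alphas \in \mathbb{R}_+^n$. Given $P$ with $P\processes \succ \arrivals$, multiplying the strict coordinatewise inequality by the coordinates of $\alphas$ and summing yields $\alphas^\top P\processes > \alphas^\top \arrivals$ (the strictness comes from any $i$ with $\alpha_i > 0$). Decomposing $P = \sum_k c_k \matchingmatrix_k$ as a convex combination of integer matching matrices, at least one $\matchingmatrix_k$ in the support must satisfy $\alphas^\top \matchingmatrix_k \processes > \alphas^\top \arrivals$, which is exactly the claim of~(2).

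For $(2) \Rightarrow (1)$, I argue by contrapositive. If~(1) fails, the compact convex polytope $\mathcal{C}$ is disjoint from the open convex set $\arrivals + \mathbb{R}_{>0}^n$, so the separating hyperplane theorem produces a nonzero $\alphas \in \mathbb{R}^n$ and a scalar $c$ with $\alphas^\top y \leq c$ on $\mathcal{C}$ and $\alphas^\top y \geq c$ on $\arrivals + \mathbb{R}_{>0}^n$. Any negative coordinate of $\alphas$ would drive the second bound to $-\infty$ (because the open cone is unbounded in each positive coordinate direction), so $\alphas \in \mathbb{R}_+^n \setminus \{0\}$; taking the infimum along the cone then yields $c \leq \alphas^\top \arrivals$. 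Hence $\alphas^\top P\processes \leq \alphas^\top \arrivals$ for every fractional matching $P$, and since the maximum of a linear functional over $\mathcal{C}$ is attained at an integer vertex, the same bound holds for every integer matching $\matchingmatrix$, contradicting~(2) for this $\alphas$.

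The main subtlety is in the contrapositive direction, where I must ensure that the separating direction can be chosen nonnegative and nonzero so that it is a valid witness against~(2); the unboundedness of $\arrivals + \mathbb{R}_{>0}^n$ in every positive coordinate direction handles exactly this. An equally clean alternative is to write $v \coloneqq \max_P \min_i [(P\processes)_i - \lambda_i]$ as a saddle value of the bilinear form $\alphas^\top(P\processes - \arrivals)$ over $P$ and $\alphas \in \Delta^{n-1}$, and apply the minimax theorem to extract the desired $\alphas$ whenever $v \leq 0$; either route invokes only standard convex-geometric tools and I do not expect further obstacles.
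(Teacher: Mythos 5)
Your proposal is correct, and while it lands on the same convex-duality territory as the paper, it packages the argument differently. The paper decomposes $P$ via Birkhoff--von Neumann to rewrite condition~(1) as a linear feasibility problem over the probability weights $x$, and then invokes Farkas' lemma as a black box to get the dual infeasibility statement, which is condition~(2). You instead split the two implications: the forward direction $(1)\Rightarrow(2)$ you handle by an elementary averaging argument over the Birkhoff--von Neumann decomposition (a convex combination exceeding a threshold strictly forces some term to exceed it), and the reverse direction you prove by contrapositive with a direct separating-hyperplane argument on the compact polytope $\mathcal C=\{P\processes\}$ versus the open shifted orthant $\arrivals+\mathbb R_{>0}^n$. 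This is essentially the proof of Farkas' lemma spelled out, so the underlying mathematics is the same; what your route buys is that the strict inequalities are handled transparently by the geometry of the open cone (the paper's one-line appeal to Farkas glosses over which variant of the lemma accommodates the mixed strict/non-strict system), and the forward direction becomes a one-paragraph computation rather than a detour through infeasibility of a dual. One small point you implicitly handle correctly: condition~(2) as literally stated is vacuous at $\alphas=0$ (it would demand $0>0$), so the quantifier must be read over nonzero $\alphas$; both your forward argument (which restricts to nonzero $\alphas$) and your separating-hyperplane argument (which produces a nonzero normal) respect this reading.
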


The lemma is an application of Farkas' lemma.  We relegate its proof to Appendix~\ref{sec:app-IB-central}. 
It is worth pointing out that, when the underlying graph is a complete bipartite graph, it suffices to have the condition (2) satisfied for only $\boldsymbol \alpha \in \{0, 1\}^n$.
This difference plays a role in the contrast between complete and incomplete bipartite graphs when the system is decentralized, as we explain in the next section.

\subsection{Stability Conditions under Decentralized, No-Regret Policies}
\label{sec:IB-decentral}

In this section we give conditions under which, in a queueing system on an incomplete bipartite graph (the Queue-B model), if all queues use no-regret strategies, the system is stable.
Our conditions are most easily comparable with the dual form of centralized stability conditions stated in Lemma~\ref{lem:IB-dual}.  
When the underlying graph is a complete bipartite graph, the conditions are identical to those by \citet{gaitonde2020stability}, as we discuss below.
The technique in this part is largely inherited from \citep{gaitonde2020stability}, although our proof reveals an interesting connection between the dual form of stability conditions and key steps in the proof.
The sufficient condition is the following:
\begin{assumption}\label{decentralized IB stability condition}
  \label{assump:IB}
  There is a constant $\beta>0$, such that for any $\boldsymbol{\alpha}=(\alpha_1,...,\alpha_n)\in \{0,1\}^n$, there is a matching matrix $M$, such that $\frac 1 2 (1 - \beta) \boldsymbol{\alpha}^\top M \processes > \boldsymbol{\alpha}^\top \arrivals$.
\end{assumption}

A quick comparison between this and dual condition in Lemma~\ref{lem:IB-dual} suggests that, if one has a Queue-B model which can be made stable by a centralized policy, then, doubling its processing capabilities guarantees its stability when the queues use no-regret strategies.
Note though that the range of $\alphas$ is much smaller in Assumption~\ref{assump:IB} ($\{0, 1\}^n$) than in Lemma~\ref{lem:IB-dual} ($\mathbb R_+^n$). 
For complete bipartite graphs, this difference vanishes (see remark following Lemma~\ref{lem:IB-dual}), but in general bipartite graphs, this difference is real.
(See Example~\ref{ex:fractional-dual}.)
This suggests that in incomplete bipartite graphs, the gap between centralized and decentralized systems tends to be smaller than in complete bipartite graphs.

\begin{restatable}{theorem}{thmIBdecentral}\label{decentralized IB}
  \label{thm:IB-decentral}
  If a Queue-B model queueing system satisfies Assumption \ref{decentralized IB stability condition}, and queues use no-regret learning strategies with 
  $\delta=\frac{\beta}{128n}$, then the system is strongly stable.
\end{restatable}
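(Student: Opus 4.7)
The plan is to apply Pemantle's Theorem~\ref{thm:pemantle} to a potential function that tracks the oldest packet in the system, namely $\potential(t) = \max_{i} T_i(t)$, where $T_i(t)$ is the age of the oldest packet in queue~$i$ at time~$t$ (and $T_i(t) = 0$ if $i$ is empty). Per-step increments of $T_i$ are bounded by $1$, so the bounded-jump condition~\eqref{eq:bounded-jump} for any moment $p$ is immediate. The task reduces to the negative-drift condition~\eqref{eq:negative-drift}: for some threshold $b$ and constant $c > 0$, whenever $\potential(t_0) > b$, one has $\mathbb{E}[\potential(t_0+w) - \potential(t_0) \mid \mathcal{F}_{t_0}] \leq -c$ for a window $w = w(\potential(t_0))$ chosen as a suitable sublinear function of $\potential(t_0)$.

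To establish the drift I would follow the template of~\citet{gaitonde2020stability}. Fix $t_0$ with $\potential(t_0) > b$ and look back over $[t_0 - w, t_0 - 1]$. Set $\alpha_i = 1$ if and only if queue~$i$'s oldest packet at $t_0 - w$ has age at least $\Omega(w)$; this $\boldsymbol{\alpha} \in \{0,1\}^n$ is $\mathcal{F}_{t_0 - w}$-measurable. Applying Assumption~\ref{assump:IB} to $\boldsymbol{\alpha}$ yields a matching $M$ of the ``old'' queues to servers with $\tfrac{1}{2}(1-\beta)\, \boldsymbol{\alpha}^\top M \processes > \boldsymbol{\alpha}^\top \arrivals$. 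For each queue~$i$ with $\alpha_i = 1$, I consider the fixed counterfactual deviation ``always send to $j = M(i)$'' appearing in Definition~\ref{def:regret}. Since $i$'s oldest packet already has age exceeding $w$ at $t_0 - w$, throughout the window $i$'s offered packet at~$j$ is older than every packet that originates (at any source) strictly after $t_0 - w$; under the oldest-first service rule it beats every such competitor, and can only lose to packets from other ``old'' queues that also target~$j$. The disjointness of $M$ leaves at most one other such deviator at~$j$, and a priority analysis adapted from~\citep{gaitonde2020stability} then yields a hypothetical utility of at least $\tfrac{1}{2}\,\process_{M(i)}\cdot w - O(1)$ for this deviation, the factor $\tfrac{1}{2}$ absorbing these ``old-vs-old'' contests at~$j$.

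Invoking the no-regret guarantee at queue~$i$, its actual clearance over the window is within $\varphi_\delta(w) = o(w)$ of this bound, except on an event of probability~$\delta$. Union-bounding over the $n$ queues with $\delta = \beta/(128n)$, all no-regret guarantees and ancillary Chernoff tail bounds on arrivals hold simultaneously with probability at least $1 - \beta/64$. Summing over $\{i : \alpha_i = 1\}$ on the good event,
\begin{align*}
\sum_{i:\alpha_i=1} (\text{clearance at } i) - \sum_{i:\alpha_i=1}(\text{arrivals at } i) \;\geq\; \Bigl(\tfrac{1}{2}(1-\beta)\, \boldsymbol{\alpha}^\top M \processes - \boldsymbol{\alpha}^\top \arrivals\Bigr)\, w - o(w),
\end{align*}
which is $\Omega(\beta w)$ by Assumption~\ref{assump:IB}. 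Consequently the total number of packets held in the ``old'' queues drops by $\Omega(\beta w)$ in expectation, and this translates into $\potential(t_0 + w) \leq \potential(t_0) - \Omega(\beta w)$. The complementary event (probability $O(\beta/n)$) contributes at most $w$ to $\potential$, easily absorbed into the overall drift, yielding the negative-drift hypothesis required by Theorem~\ref{thm:pemantle}.

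The principal obstacle is proving the utility lower bound $\tfrac{1}{2}\process_{M(i)}\cdot w - O(1)$ for the counterfactual deviation, because Definition~\ref{def:regret} insists on evaluating it under the \emph{actual} histories $\mathcal{F}_t$ rather than the histories that the deviation would have generated. The ``old-packet'' argument of~\citet{gaitonde2020stability} gives this for complete bipartite graphs by noting that once $i$'s packet is older than~$w$ it dominates every packet spawned inside the window at every server; the graph-theoretic analogue requires exploiting the disjointness of the matching $M$ returned by Assumption~\ref{assump:IB}, which is precisely the dual condition of Lemma~\ref{lem:IB-dual} restricted to $\boldsymbol{\alpha}\in\{0,1\}^n$. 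This is where the gap between centralized and decentralized stability in bipartite graphs manifests quantitatively as the factor $\tfrac{1}{2}(1-\beta)$.
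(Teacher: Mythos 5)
Your high-level template (Pemantle's theorem on windowed snapshots, a good event combining Chernoff bounds with the no-regret guarantees, and Assumption~\ref{assump:IB} applied to an indicator vector of ``old'' queues to extract a matching) matches the paper's. But the central quantitative step is wrong as stated. You claim that for an old queue $i$, the fixed deviation ``always send to $j=M(i)$'' earns at least $\tfrac12\process_{M(i)}w-O(1)$, justified by saying that the disjointness of $M$ ``leaves at most one other such deviator at $j$.'' The matching $M$ is purely counterfactual: the other queues' \emph{actual} no-regret strategies are not constrained by $M$, so arbitrarily many other old queues may pile their old packets onto $j$ in the realized history, and $i$'s counterfactual deviation can lose to all of them. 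There is no per-queue lower bound of the form $\tfrac12\process_j w$. The paper's factor $2$ arises from an entirely different accounting: whenever $i$'s deviation to $j$ is blocked at step $t$ (some queue sends a $\tau$-old packet to $j$, i.e.\ $X^\tau_{jt}=1$) and $C^j_t=1$, server $j$ nonetheless clears a $\tau$-old packet, so $N^i_\tau + L^j_\tau \ge \sum_t C^j_t - \reg_i$ for each matched pair; summing over the matching and using $\sum_i N^i_\tau + \sum_j L^j_\tau = 2N_\tau$ (each cleared old packet is counted once as leaving a queue and once as cleared by a server) gives the loss of the factor $2$ in aggregate, not per queue. Without this pairing identity your $\Omega(\beta w)$ surplus does not follow.

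Two further problems. First, your potential $\max_i T_i$ does not support the drift conclusion even granting the aggregate clearance bound: the surplus of clearances over arrivals is summed across old queues and says nothing about the particular queue attaining the maximum age, whose oldest packet may never be cleared while other old queues do all the clearing. The paper instead uses $\Phi(\mathbf T_t)=\tfrac12\sum_i\arrival_i T^i_t(T^i_t-1)=\sum_{\tau\ge1}\Phi_\tau$ and runs the clearance argument for \emph{every} threshold $\tau$ (with the sets $J_\tau$ defined at the \emph{end} of the window, so that membership guarantees a $\tau$-old packet throughout the window), which is exactly what converts the per-$\tau$ counts $N_\tau$ into a potential decrease of order $\sum_i\arrival_i\tau_i w$; your single threshold $\Omega(w)$ evaluated at the start of the window gives neither. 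Second, your bounded-jump claim does not survive your own drift setup: if the window length $w$ grows with $\potential(t_0)$, the jump of the subsampled process is up to $w$, which is unbounded. The paper keeps $w$ a large constant and applies Pemantle to $Z_\ell=\sqrt{\Phi(\mathbf T_{\ell w})}$, where the square root is what makes the jumps bounded.
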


  Following \citet{gaitonde2020stability}, we introduce a potential function with the intention to apply Theorem~\ref{thm:pemantle} to its square root.
  The \emph{age} of a packet that arrives in the system at time $t_1$ is defined to be $t_2-t_1$ at time $t_2$.
  Let $T_t^i$ be the age of the oldest packet in queue $i$ at time step~$t$, and let $\ages$ be the vector $(\age_t^1, \cdots, \age_t^n)$.
  Note that $Q_t^i$, the length of the queue, is at most $T_t^i$.  
  For a positive integer $\tau > 0$, define 
\begin{align*}
  \potential_{\tau}\left(\mathbf{T}_{t}\right)\coloneqq \sum_{i:T_t^i \geq \tau} \arrival_i(T_t^i-\tau).
\end{align*}
The potential function~$\potential$ is defined as
\begin{align*}
  \potential \left(\mathbf{T}_{t}\right) \coloneqq \sum_{\tau=1}^{\infty} \potential_{\tau}\left(\mathbf{T}_{t}\right)= \sum_{\tau=1}^{\infty} \sum_{i:T_t^i \geq \tau} \arrival_i(T_t^i-\tau) = \frac{1}{2} \sum_{i=1}^{n}  \arrival_i T_t^i(T_t^i-1).
\end{align*}

We analyze the system by dividing the time steps into windows of length~$w$ each, for some large enough~$w$.
Let $Z_\ell \coloneqq \sqrt{\potential\left(\mathbf{T}_{\ell \cdot w} \right)}$ be the square root of the potential function at the beginning of the $\ell$-th window.
The main work lies in showing that $(Z_{\ell})_\ell$ satisfies the conditions of Theorem~\ref{thm:pemantle}, which implies $\Ex{Z_\ell^a}$ is bounded for any $a > 0$.
This in turn implies that $\Ex{(\sum_i T_t^i)^a}$ is bounded, and so is $\Ex{(Q_t)^a}$.  

\begin{restatable}{lemma}{negativedrift}[Negative drift condition.]\label{threshold value 1}
  \label{lem:negative-drift}
Denote by $\lambda_{(n)}$ the minimum element of $\arrivals$. \\
Let $b=\frac{w}{\sqrt{2 \lambda_{(n)}}} \max \left(\frac{8}{\beta}\left(\sum_{i=1}^{n} \lambda_{i}\right), 16 n^{2}\right)$, $c=-\frac{\sqrt{2\arrival_{(n)}}\beta w}{64}$. Then $Z_0=0 \leq b$ and, for all $\ell$,
\begin{align*}
    Z_{\ell}>b \quad \Rightarrow  \Ex{Z_{\ell+1}- Z_{\ell} \given Z_0, \cdots, Z_{\ell}} \leq -c.
\end{align*}

\end{restatable}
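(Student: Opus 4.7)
The plan is a standard potential-function drift argument, reducing the claim to a bound on the expected change of $\Phi$ over the window. Using $Z_{\ell+1}-Z_\ell \le (\Phi(\ages_{(\ell+1)w})-\Phi(\ages_{\ell w}))/(Z_{\ell+1}+Z_\ell)$ together with a separately verified bounded-jump condition (obtained from Chernoff bounds on Bernoulli arrivals and services), it suffices to show that when $Z_\ell>b$ the expected change of $\Phi$ over the window is negative of order $\beta\sqrt{\sum_i\arrival_i}\cdot Z_\ell\cdot w$. From $\Phi(\ages_t)=\tfrac12\sum_i\arrival_i T_t^i(T_t^i-1)$, the per-step expected change decomposes into a positive \emph{aging} contribution of order $\sum_i\arrival_i T_t^i$ (each nonempty queue's oldest packet ages by one) and a negative \emph{clearance} contribution, where each packet cleared from queue~$i$ reduces $\Phi$ by roughly $T_t^i$ (the next oldest packet is $1/\arrival_i$ younger on average). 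By Cauchy--Schwarz, $\sum_i\arrival_i T_{\ell w}^i = O(\sqrt{\sum_i\arrival_i}\cdot Z_\ell)$, so the positive aging contribution over the window is $O(\sqrt{\sum_i\arrival_i}\cdot Z_\ell\cdot w)$, and the task is to produce a matching negative clearance contribution with an extra $\beta$-factor of slack.

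The key step is to invoke Assumption~\ref{decentralized IB stability condition}. Choose a threshold $\tau^*$ of order $Z_\ell/\sqrt{\sum_i\arrival_i}$; by the choice of $b$ this is much larger than both $w$ and $n$, and the heavy set $S=\{i:T_{\ell w}^i\ge\tau^*\}$ carries a constant fraction of $\sum_i\arrival_i T_{\ell w}^i$. Applying Assumption~\ref{decentralized IB stability condition} with $\alphas=\indicator_S\in\{0,1\}^n$ produces an integral matching $M$ with $\tfrac12(1-\beta)\indicator_S^\top M\processes > \indicator_S^\top\arrivals$. For each $i\in S$ take the no-regret benchmark ``always send to $M(i)$''. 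Because $M$ is an integral matching no two heavy queues target the same server in the counterfactual, and because $\tau^*\gg w$ queue~$i$'s oldest packet outranks anything a non-heavy queue (whose ages are $<\tau^*$) can submit to $M(i)$ throughout the window under the oldest-first priority rule. A Chernoff bound on the $\mathrm{Bernoulli}(\process_{M(i)})$ service then shows that the benchmark clears at least $(1-o(1))\process_{M(i)}w$ packets of queue~$i$ with probability at least $1-\delta$. Invoking Definition~\ref{def:no-regret} with $\delta=\beta/(128n)$ and union-bounding over the $n$ queues, the actual clearances at each $i\in S$ exceed $(1-o(1))\process_{M(i)}w-\varphi_\delta(w)$.

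Summing the clearance lower bounds over $i\in S$ and weighting each by the corresponding $\Phi$-drop of order $T_{\ell w}^i\ge\tau^*$, the expected negative contribution over the window is at least of order $w\,\tau^*\cdot \indicator_S^\top M\processes$, which by Assumption~\ref{decentralized IB stability condition} exceeds $w\,\tau^*\cdot \tfrac{2}{1-\beta}\indicator_S^\top\arrivals \ge (1+\Omega(\beta))\cdot w\sum_i\arrival_i T_{\ell w}^i$ (using that $S$ carries a constant fraction of the aging sum and that $\tau^*\lesssim T_{\ell w}^i$ for $i\in S$). This dominates the positive aging contribution by an $\Omega(\beta)$ factor, and division by $Z_{\ell+1}+Z_\ell=\Theta(Z_\ell)$ yields the advertised drift $-c$ of the stated form. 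The main obstacle I anticipate is the per-queue benchmark-clearance claim and the matching of weights in the final inequality: turning ``$T_{\ell w}^i\ge\tau^*$ and $M$ is an integral matching'' into a robust lower bound $(1-o(1))\process_{M(i)}w$ on clearances of queue~$i$ requires a careful coupling tracking how queue~$i$'s sequence of old packets survives the oldest-first competition throughout the window, and the passage from the unweighted inequality in Assumption~\ref{decentralized IB stability condition} to the age-weighted clearance bound requires keeping $\tau^*$ balanced against the spread of $T_{\ell w}^i$ inside $S$. The particular forms of $b$ and $c$ in the statement reflect precisely this balance, and $w$ must be chosen polynomially large in $n,1/\beta,1/\arrival_{(n)}$ so that $\varphi_\delta(w)$, the Chernoff slack, and the Cauchy--Schwarz looseness are all subdominant to the $\beta$-term.
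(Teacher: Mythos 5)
Your high-level plan (potential function, windowing, invoking Assumption~\ref{assump:IB} via the no-regret benchmark ``always send to $M(i)$'', Cauchy--Schwarz to relate $\sum_i\arrival_i T^i$ to $Z_\ell$) is the right skeleton, and it matches the paper's outline. However, there are two genuine gaps in the key clearance step, and both are precisely where the paper's actual proof does something nontrivial.

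First, the per-queue benchmark-clearance claim is false as stated. You argue that queue~$i$'s counterfactual packet outranks anything a non-heavy queue can send to $M(i)$, and that ``no two heavy queues target the same server in the counterfactual.'' But the regret benchmark is evaluated against the \emph{real} sample path (Definition~\ref{def:regret}), not against counterfactual actions of the other queues. Another heavy queue $i'\in S$ may, in the real run, send an equally old or older packet to $M(i)$ at many steps, blocking $i$'s counterfactual. There is no coupling trick that avoids this: the claim that the benchmark for a single $i$ clears $(1-o(1))\process_{M(i)}w$ packets is simply not true. The paper sidesteps the issue with the two-sided bound in Lemma~\ref{lem:lower-bound-Ntau}: either queue~$i$'s counterfactual packet is served (credit to $N_\tau^i$) or some $\tau$-old packet is served by $M(i)$ (credit to $L_\tau^{M(i)}$), giving $N_\tau^i+L_\tau^{M(i)}\ge\sum_t C_t^{M(i)}-\reg_i$. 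Summing over the matched pairs and using $N_\tau=\sum_i N_\tau^i=\sum_j L_\tau^j$ yields $2N_\tau\ge\sum_{(i,j)\in U_\tau}(N_\tau^i+L_\tau^j)$; the factor $2$ in Assumption~\ref{assump:IB} is exactly the price of this double-count.

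Second, the single-threshold $\tau^*$ does not produce the required age weighting. You need the negative drift to dominate $\sum_i\arrival_i T_{\ell w}^i$, but clearance credit at level $\tau^*$ only gives $\tau^*\sum_{i\in S}\arrival_i$, which can be much smaller than $\sum_{i\in S}\arrival_i T_{\ell w}^i$ when the ages inside $S$ are spread out (a single queue with $T^i=10\tau^*$ already breaks the intended inequality). You acknowledge this but do not resolve it. The paper's fix is to sum Lemma~\ref{lem:lower-bound-Ntau} over \emph{all} thresholds $\tau\ge 1$, which telescopes to $\sum_\tau\sum_{i\in J_\tau}\arrival_i w=\sum_i\arrival_i\tau_i w$ (see Lemma~\ref{lem:phi-decrease}); this delivers the correct linear-in-$T^i$ weighting automatically, with no need to balance a single threshold against the spread. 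Without both of these ideas the argument does not close, so I would call this a gap rather than a correct alternative route.
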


\begin{restatable}{lemma}{boundedjump}[Bounded jump condition.]
  \label{lem:bounded-jump}
  For each even integer $p \geq 2$, there is a
  constant $d_p$, such that for all $\ell$
  \begin{align*}
  \Ex{|Z_{\ell+1}-Z_{\ell}|^p \given Z_0, \cdots, Z_{\ell}} \leq d_p.
  \end{align*}
\end{restatable}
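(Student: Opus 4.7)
The plan is to derive a deterministic bound of the form $|Z_{\ell+1}-Z_\ell|\le C_1R+C_2R^2+C_3$, where $R:=\sqrt{\sum_i\lambda_i(\Delta^i)^2}$ and $\Delta^i:=T^i_{(\ell+1)w}-T^i_{\ell w}$, with constants $C_1,C_2,C_3$ independent of the state at time $\ell w$, and then to bound the conditional moments of $R$ using the geometric-gap structure built into $\mathcal F_t$. First I would write
\[
Z_{\ell+1}^2-Z_\ell^2 \;=\; \Phi(\mathbf T_{(\ell+1)w})-\Phi(\mathbf T_{\ell w}) \;=\; \sum_i\lambda_i T^i_{\ell w}\Delta^i + \tfrac12\sum_i\lambda_i(\Delta^i)^2 - \tfrac12\sum_i\lambda_i\Delta^i \;=:\; D.
\]
The naive estimate $|Z_{\ell+1}-Z_\ell|\le\sqrt{|D|}$ is insufficient, because the cross term $\sum_i\lambda_i T^i_{\ell w}\Delta^i$ is linear in the unbounded state. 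Instead I would use the identity $|Z_{\ell+1}-Z_\ell|=|D|/(Z_\ell+Z_{\ell+1})$ (on the event that the denominator is positive) combined with Cauchy--Schwarz: $\bigl|\sum_i\lambda_iT^i_{\ell w}\Delta^i\bigr|\le UR$ with $U:=\sqrt{\sum_i\lambda_i(T^i_{\ell w})^2}$. Since $U^2=2\Phi_{\ell w}+\sum_i\lambda_iT^i_{\ell w}$ and $\sum_i\lambda_iT^i_{\ell w}\le\sqrt{\sum_i\lambda_i}\,U$ by a second Cauchy--Schwarz, a short quadratic calculation gives $U\le\sqrt{2}\,Z_\ell+\sqrt{\sum_i\lambda_i}$. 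Dividing by $Z_\ell+Z_{\ell+1}\ge Z_\ell$ cancels the $Z_\ell$ factor on the event $\{Z_\ell\ge 1\}$ and leaves a bound $|Z_{\ell+1}-Z_\ell|\le C_1R+C_2R^2$. On the complementary event $\{Z_\ell<1\}$, $\Phi_{\ell w}<1$ forces $T^i_{\ell w}\le 1+\sqrt{2/\lambda_{(n)}}$ for every $i$, so $T^i_{(\ell+1)w}\le 1+\sqrt{2/\lambda_{(n)}}+w$ and $Z_{\ell+1}$ is bounded by a constant depending only on $w,n,\boldsymbol\lambda$, giving $|Z_{\ell+1}-Z_\ell|\le C_3$. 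Combining the two cases yields the desired $|Z_{\ell+1}-Z_\ell|\le C_1R+C_2R^2+C_3$ uniformly in $\mathcal F_{\ell w}$.

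To bound $\Ex{R^{2p}\mid\mathcal F_{\ell w}}$ uniformly in the history, I would invoke the fact, built into the paper's definition of $\mathcal F_t$ in Section~\ref{stability and no regret}, that conditional on $\mathcal F_{\ell w}$ the gaps between consecutive packets in each queue $i$ are i.i.d.\ $\mathrm{Geom}(\lambda_i)$. Since $T^i$ grows by at most $1$ per step and decreases only when the oldest packet is cleared (at most $w$ times per window), this yields the conditional deterministic upper bound $|\Delta^i|\le w+X^i_1+\dots+X^i_w$ with $X^i_k\stackrel{\mathrm{iid}}{\sim}\mathrm{Geom}(\lambda_i)$, whose $2p$-th moment is a constant depending on $w,p,\lambda_i$. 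A Jensen/power-mean step $(\sum_i a_i)^q\le n^{q-1}\sum_ia_i^q$ then gives $\Ex{R^{2p}\mid\mathcal F_{\ell w}}$ and $\Ex{R^p\mid\mathcal F_{\ell w}}$ bounded by constants, so $\Ex{|Z_{\ell+1}-Z_\ell|^p\mid\mathcal F_{\ell w}}\le 3^{p-1}\bigl(C_1^p\Ex{R^p\mid\mathcal F_{\ell w}}+C_2^p\Ex{R^{2p}\mid\mathcal F_{\ell w}}+C_3^p\bigr)\le d_p$; taking one more expectation over $\mathcal F_{\ell w}$ given $Z_0,\dots,Z_\ell$ preserves the bound.

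The main obstacle is the state-dependent cross term $\sum_i\lambda_iT^i_{\ell w}\Delta^i$: bounding $|Z_{\ell+1}-Z_\ell|^p$ by $|D|^{p/2}$ leaves this term linear in $T^i_{\ell w}$, which is unbounded, so the Cauchy--Schwarz cancellation enabled by the ratio identity for $|Z_{\ell+1}-Z_\ell|$ is essential. Once this cancellation is in place, the rest is routine moment estimation using the independence of the packet-gap distribution from the history.
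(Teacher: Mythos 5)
Your argument is correct and follows the same route as the omitted proof (which the paper defers to Gaitonde--Tardos): the ratio identity $|Z_{\ell+1}-Z_\ell|=|Z_{\ell+1}^2-Z_\ell^2|/(Z_\ell+Z_{\ell+1})$ to cancel the unbounded cross term $\sum_i\lambda_i T^i_{\ell w}\Delta^i$ against the denominator via Cauchy--Schwarz, the split on $\{Z_\ell\ge 1\}$ versus $\{Z_\ell<1\}$, and the moment bound for $R$ obtained from the deterministic estimate $|\Delta^i|\le w+\sum_{s\le w} X_s^i$ where the $X_s^i$ are the conditionally i.i.d.\ geometric inter-arrival gaps that the paper's definition of $\mathcal F_t$ is set up to guarantee. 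One minor loose end worth pinning down when writing it out in full: in the case where queue $i$ empties during the window, $T^i_{(\ell+1)w}$ drops to $0$ rather than to a gap-indexed value, so the bound $|\Delta^i|\le w+\sum_s X_s^i$ should be justified by observing that the arrival time of the next (not-yet-arrived) packet still exceeds $(\ell+1)w$, which gives $T^i_{\ell w}<\sum_s X_s^i-w$ and hence the same estimate; the rest of your reasoning goes through unchanged.
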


Lemma~\ref{lem:bounded-jump} is identical to the corresponding part
in~\citet{gaitonde2020stability}, and we omit its proof.  
The main difference between our proof and \citep{gaitonde2020stability} is in the proof of the negative drift condition (Lemma~\ref{lem:negative-drift}).  
We present here the key steps of our proof, and relegate the rest to Appendix~\ref{sec:app-IB-decentral}.  

Following \citet{gaitonde2020stability}, for a given $\tau > 0$, we say a packet is \emph{$\tau$-old} if its age is at least $\tau$ at time step $\ell \cdot w$, i.e., if its arrival time is no later than $\ell w - \tau$.
Let $J_{\tau}$ be the set of queues which have $\tau$-old packets at time step $(\ell+1)\cdot w$. 
For a queue~$i$, if by time step~$(\ell + 1) \cdot w$, it still has packets that arrived before time step $\ell \cdot w$, let $\tau_i = \max_{\tau > 0: J_{\tau} \ni i} \tau$ be the age of the oldest packet in queue $i$;
otherwise, 
set $\tau_i=0$.  
Let $N_{\tau}^i$ be the number of $\tau$-old packets cleared from queue $i$ during the time window from time step $\ell \cdot w$ to $(\ell + 1) \cdot w$. 
Similarly, for a server~$j$, let $L_{\tau}^j$ be the number of $\tau$-old packets cleared by server~$j$ during this time window.
Next, define $N_{\tau}=\sum_{i\in [n]}N_{\tau}^i=\sum_{j\in [m]}L_{\tau}^j$ as the number of $\tau$-old packets cleared during this time window.
Lastly, let $C_t^j$ be the indicator variable for server~$j$ succeeding in processing a packet if it picks one up.
\begin{lemma}
  \label{lem:lower-bound-Ntau}
  For any $\tau > 0$ and $\eps > 0$, if \ $\sum_{t = \ell \cdot w}^{(\ell + 1)\cdot w - 1} C_t^j \geq (1 - \eps) \process_j w$ for each~$j$, then $N_{\tau} \geq \frac {1 - \eps}{1 - \beta} \sum_{i \in J_\tau} \arrival_i w - \sum_{i = 1}^n \reg_i(w, (\ell+1) \cdot w)$.
\end{lemma}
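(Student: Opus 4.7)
The plan is to invoke the no-regret guarantee of each $i \in J_\tau$ against the counterfactual strategy of always sending to its matched server, using the matching supplied by Assumption~\ref{assump:IB}. Specifically, apply the assumption with $\alphas = \indicator_{J_\tau}$ to obtain a matching $M$ such that
\[
\sum_{i \in J_\tau} \process_{M(i)} > \frac{2}{1-\beta}\sum_{i \in J_\tau}\arrival_i,
\]
using the convention $\process_{M(i)} = 0$ if $i$ is unmatched; such $i$'s are omitted from the argument below.

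The key observation distinguishing $J_\tau$: if $i \in J_\tau$, then at time $(\ell+1)w$ queue $i$ holds a packet that arrived no later than $\ell w - \tau$, and this packet was present throughout the window, so the oldest packet of queue $i$ at every time $t \in [\ell w, (\ell+1)w)$ has age at least $\tau$. Hence every packet cleared from queue $i$ during the window is $\tau$-old. Writing $U_i$ for the total utility of queue $i$ and $U_i^j$ for the counterfactual of always playing $j$, this gives $U_i = N_\tau^i$. Applying no-regret and $N_\tau \geq \sum_{i \in J_\tau} N_\tau^i$ then yields
\[
N_\tau \geq \sum_{i \in J_\tau} U_i^{M(i)} - \sum_{i=1}^n \reg_i(w, (\ell+1)w).
\]

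The main obstacle is lower-bounding $\sum_{i \in J_\tau} U_i^{M(i)}$. Let $j = M(i)$. In the counterfactual where $i$ always sends to $j$, $i$'s packet (age $\geq \tau$) wins the oldest-packet tie-break at $j$ unless some other queue $i'$ (acting as in the real play with $a_{i'}(t) = j$) sends a strictly older packet, which is itself $\tau$-old. The delicate point, checked by splitting on whether $a_i(t) = j$ in the actual play, is that removing $i$ from the senders at $j$ does not affect the argmax among the remaining senders: the same $i'$ wins at $j$ in the actual play, and when $C_t^j = 1$ an actual $\tau$-old clearance occurs at $j$. Since $M$ is a matching, the servers $M(i)$ for $i \in J_\tau$ are distinct; if $\ell_i$ counts these counterfactual losses of $i$ paired with $C_t^{M(i)} = 1$, then
\[
\sum_{i \in J_\tau} \ell_i \leq \sum_{i \in J_\tau} L_\tau^{M(i)} \leq N_\tau.
\]

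Combining these with the hypothesis $\sum_t C_t^{M(i)} \geq (1-\eps)\process_{M(i)} w$ gives $U_i^{M(i)} \geq (1-\eps)\process_{M(i)} w - \ell_i$, which sums to
\[
\sum_{i \in J_\tau} U_i^{M(i)} \geq (1-\eps) w \sum_{i \in J_\tau} \process_{M(i)} - N_\tau.
\]
Adding this to the no-regret inequality and applying the matching bound from Assumption~\ref{assump:IB} yields $2 N_\tau \geq \frac{2(1-\eps)}{1-\beta}\sum_{i \in J_\tau}\arrival_i w - \sum_{i=1}^n \reg_i$, i.e.,
\[
N_\tau \geq \frac{1-\eps}{1-\beta}\sum_{i \in J_\tau}\arrival_i w - \frac{1}{2}\sum_{i=1}^n\reg_i(w, (\ell+1)w),
\]
which is slightly stronger than the claimed bound.
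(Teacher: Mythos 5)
Your proposal is correct and follows essentially the same route as the paper: pair each queue in $J_\tau$ with a server via the matching supplied by Assumption~\ref{assump:IB}, invoke no-regret against the counterfactual of always sending to the matched server, charge each counterfactual loss (with $C_t^j=1$) to an actual $\tau$-old clearance at that server so that the losses sum to at most $N_\tau$, and combine. Your per-queue loss counts $\ell_i$ are just a repackaging of the paper's indicators $X_{jt}^\tau$ and the identity $L_\tau^j = \sum_t C_t^j X_{jt}^\tau$, and the slightly stronger bound with $\tfrac{1}{2}\sum_i \reg_i$ is exactly what the paper's own chain of inequalities also yields before it is weakened to the stated form.
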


\begin{proof}
Any queue $i \in  J_{\tau}$ has a $\tau$-old packet throughout the time window.
For a server~$j$ which can serve queue~$i$, consider the counterfactual utility $i$ may gain during this time window by sending a request to~$j$ at each step.
Let $X_{jt}^{\tau}$ be the indicator variable for the event that \emph{some queue} (which may not be $i$) sends a $\tau$-old packet to server $j$ at time step $t$.
Then at any time step~$t$ when $X_{jt}^\tau = 0$, server~$i$'s packet would have been picked up by server~$j$ had $i$ sent a request, because no other packet sent to~$j$ is $\tau$-old, so the packet from~$i$ has priority.
Recall that $C_t^j$ is the indicator variable for server~$j$ succeeding in processing a packet if it picks one up.
So queue~$i$ would have gained utility $1$ at time~$t$ by sending a request to~$j$ if $C_t^j = 1$ and $X_{jt}^\tau = 0$.
Over the time window, queue~$i$'s counterfactual utility could have been $\sum_{t = \ell \cdot w}^{(\ell + 1)\cdot w - 1}(1-X_{jt}^{\tau})$.
Note that queue $i$'s actual utility is $N_\tau^i$, so by definition of regret, we have
\begin{align*}
    N_{\tau}^i\geq \sum_{t = \ell \cdot w}^{(\ell + 1)\cdot w - 1} C_t^j(1-X_{jt}^{\tau})-\reg_i(w, (\ell+1) \cdot w).
\end{align*}
On the other hand, whenever $X_{jt}^{\tau}C_{t}^{j}=1$, server~$j$ successfully clears a $\tau$-old packet.
Therefore, $L_{\tau}^j=\sum_{t = \ell \cdot w}^{(\ell + 1)\cdot w - 1} C_t^jX_{jt}^{\tau}$.
Then, for a pair of queue $i \in J_\tau$ and server $j$ that can serve~$i$, we have
\begin{align}\label{eq: lower bound of a pair}
    N_{\tau}^i+L_{\tau}^j \geq \sum_{t = \ell \cdot w}^{(\ell + 1)\cdot w - 1} C_t^j- \reg_i(w,(\ell+1) \cdot w).
\end{align}
Now we are ready to apply Assumption~\ref{assump:IB}.
Let $\alphas$ be the indicator vector for the set $J_\tau \subseteq [n]$, i.e., 
$\alpha_i=1$ if $i \in J_{\tau}$, and $\alpha_i = 0$ otherwise. 
By Assumption \ref{decentralized IB stability condition}, we can find a matching matrix $\matchingmatrix_{\tau}$ such that $\frac{1}{2} (1 - \beta) \boldsymbol{\alpha}^\top \matchingmatrix_{\tau} \processes > \boldsymbol{\alpha}^\top \arrivals$. 
Let $\disjointpathset_{\tau}$ be the edge set such that $(i, j) \in U_\tau \Leftrightarrow \matchingmatrix_{\tau}(i,j)=1$. Then,
\begin{align}\label{eq: decentralized-IB condition}
\frac{1}{2} (1 - \beta)\sum_{(i,j)\in \disjointpathset_{\tau}} \process_j > \sum_{ i \in J_{\tau}} \arrival_i.
\end{align} Now, we are ready to give a lower bound for $N_{\tau}$:
\begin{align}
2N_{\tau} &= \sum_{i=1}^n N_{\tau}^i+\sum_{j=1}^m L_{\tau}^j \geq \sum_{(i,j) \in \disjointpathset_{\tau}}(N_{\tau}^i+ L_{\tau}^j) \geq \sum_{(i,j) \in \disjointpathset_{\tau}} \left( \sum_{t = \ell \cdot w}^{(\ell + 1)\cdot w - 1} C_t^j- \reg_i(w, (\ell+1) \cdot w) \right) \nonumber \\
& \geq \sum_{(i,j) \in \disjointpathset_{\tau}}(1-\epsilon)\process_jw- \sum_{i=1}^n \reg_i(w, (\ell+1) \cdot w) \geq \frac{2(1-\epsilon)}{1-\beta} \sum_{ i \in J_{\tau}} \arrival_i w-\sum_{i=1}^n \reg_i(w, (\ell+1) \cdot w), \label{eq: lower bound of N_tau}
\end{align}
where the second inequality uses \eqref{eq: lower bound of a pair}, 
and the last inequality uses \eqref{eq: decentralized-IB condition}.
\end{proof}

We sketch the rest of the proof, and relegate all details to Appendix~\ref{sec:app-IB-decentral}.
When the servers' realized processing capacities are close to their expectations (as in the condition of Lemma~\ref{lem:lower-bound-Ntau}) and when the queues' regret are small (which should happen with high probability by assumption), the lower bound given by Lemma~\ref{lem:lower-bound-Ntau} on $N_\tau$ implies a lower bound on the decrease in the potential function due to packet clearing (Lemma~\ref{lem:phi-decrease}).
We can further bound the increase in the potential function due to aging over the time interval. (Lemma~\ref{lem:phi-increase}).
We define an event~$A$ (Definition~\ref{def:eventA}), which happens with high probability (Claim~\ref{claim:eventA}), and under which all of these events (of concentration and no regret) happen.

Recall that $\tau_i$ is the age of the oldest packet in queue $i$ at time step $(\ell+1)\cdot w$, where the age is measured by time step $\ell \cdot w$. Let $\bm{\tau}=\{\tau_1, \cdots, \tau_n\}$.
  \begin{restatable}{lemma}{phidecrease}
    \label{lem:phi-decrease}
  Under \emph{event A}, $\Phi(\mathbf{T_{\ell \cdot w}})-\Phi(\bm{\tau}) \geq \frac{1-2\epsilon}{1-\beta}\sum_{i=1}^n \arrival_i \tau_i w$.
\end{restatable}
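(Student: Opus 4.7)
The plan is to decompose the potential drop layer-by-layer via $\Phi(\mathbf{T}_{\ell w}) - \Phi(\bm{\tau}) = \sum_{\tau \geq 1} [\Phi_\tau(\mathbf{T}_{\ell w}) - \Phi_\tau(\bm{\tau})]$ and, at each level $\tau$, connect the per-level drop to the count $N_\tau$ of $\tau$-old clearances, which is bounded below by Lemma~\ref{lem:lower-bound-Ntau}.  Unpacking the definitions, the per-level drop equals $\sum_{i \in J_\tau} \arrival_i(T^i_{\ell w} - \tau_i) + \sum_{i \notin J_\tau,\, T^i_{\ell w} \geq \tau} \arrival_i(T^i_{\ell w} - \tau)$, both terms nonnegative.

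The bridge between this arrival-weighted age difference and the count $N_\tau^i$ relies on concentration of the Bernoulli arrivals at each queue, which we bundle into event~$A$.  For each queue~$i$ and each $\tau \leq T^i_{\ell w}$, any $\tau$-old packet cleared from queue~$i$ must have arrived during the interval $[\ell w - T^i_{\ell w},\, \ell w - \tau]$, so $N_\tau^i$ is at most the number of arrivals to queue~$i$ in that interval.  Under event~$A$, this count concentrates around its mean $\arrival_i(T^i_{\ell w} - \tau + 1)$ up to a sublinear fluctuation, yielding $\arrival_i(T^i_{\ell w} - \tau) \geq N_\tau^i - 1 - o(w)$.  For $i \in J_\tau$, the oldest-first sending rule implies $N_\tau^i = m_i$ for every $\tau \leq \tau_i$ (since all cleared old packets of queue~$i$ have age strictly greater than $\tau_i$), and the analogous concentration yields $\arrival_i(T^i_{\ell w} - \tau_i) \geq N_\tau^i - o(w)$.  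Summing over queues therefore gives the per-level inequality $\Phi_\tau(\mathbf{T}_{\ell w}) - \Phi_\tau(\bm{\tau}) \geq N_\tau - O(n) - o(w)$.

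Summing over $\tau = 1, \ldots, T_{\max}$ with $T_{\max} := \max_j \tau_j$ (beyond which both sides vanish), applying Lemma~\ref{lem:lower-bound-Ntau} at each level, and using the swap $\sum_{\tau \geq 1} \mathbf{1}[i \in J_\tau] = \tau_i$, we obtain $\Phi(\mathbf{T}_{\ell w}) - \Phi(\bm{\tau}) \geq \frac{1-\eps}{1-\beta}\, w \sum_i \arrival_i \tau_i - T_{\max}\bigl(\sum_i \reg_i(w) + O(n) + o(w)\bigr)$.  Under event~$A$, the per-queue regret is $o(w)$ by the no-regret assumption and $T_{\max}$ is polynomially controlled, so the penalty is strictly lower-order than the main term for large enough $w$; absorbing the slack sharpens the coefficient from $\frac{1-\eps}{1-\beta}$ to $\frac{1-2\eps}{1-\beta}$.

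The main technical hurdle is to set up event~$A$ so that it simultaneously enforces the uniform arrival concentrations across all queues $i$ and all levels $\tau$, together with the no-regret bounds.  Granting this (through Definition~\ref{def:eventA} and Claim~\ref{claim:eventA}), the remainder is a careful bookkeeping of the per-level bounds against the summed version of Lemma~\ref{lem:lower-bound-Ntau}.
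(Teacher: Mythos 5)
Your decomposition $\Phi(\mathbf{T}_{\ell w}) - \Phi(\bm\tau) = \sum_\tau \bigl[\Phi_\tau(\mathbf{T}_{\ell w}) - \Phi_\tau(\bm\tau)\bigr]$, the per-level bound in terms of $N_\tau$, the invocation of Lemma~\ref{lem:lower-bound-Ntau}, and the appeal to event~$A$ all match the paper's route; the only mechanical difference is that the paper phrases the concentration via the interarrival gaps $G_{i,s}$ (so the per-level drop is $\sum_i \arrival_i \sum_{s < N_\tau^i} G_{i,s}$), while you count arrivals in the interval $[\ell w - T^i_{\ell w}, \ell w - \tau_i]$, which is the same statement read backwards.

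However, the last paragraph of your sketch has a reasoning error that would not survive scrutiny, even though the conclusion is recoverable. After summing over $\tau$, you write the penalty as $T_{\max}\bigl(\sum_i \reg_i + O(n) + o(w)\bigr)$ and argue it is ``strictly lower-order than the main term'' because ``$T_{\max}$ is polynomially controlled.'' Neither claim is right: $T_{\max}$ is \emph{not} controlled (it is precisely the quantity we want to show is small in expectation), and the concentration slack coming from~\eqref{eq:concentrate on arrival 1} is $n\eps_1 w$, which is $\Theta(w)$, not $o(w)$. The absorption works for a different reason: the main term is at least $\frac{1-\eps}{1-\beta}\,\arrival_{(n)} T_{\max}\, w$, so the $T_{\max}$ prefactor \emph{cancels} on both sides, reducing the required inequality to $\sum_i \reg_i + n + n\eps_1 w \leq \frac{\eps}{1-\beta}\arrival_{(n)} w$, a constant-factor comparison of two $\Theta(w)$ quantities. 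This is exactly what the choices $\eps_1 = \frac{\eps\arrival_{(n)}}{2n}$ and the bound~\eqref{eq:small no regret 1} in event~$A$ are engineered to deliver. The paper sidesteps this bookkeeping entirely by absorbing the slack at \emph{each} level $\tau$ (using $\sum_{i \in J_\tau}\arrival_i \geq \arrival_{(n)}$ whenever $J_\tau \neq \emptyset$) before summing, so no $T_{\max}$ factor ever appears; you may wish to adopt that order of operations, or else make the cancellation explicit.
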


\begin{restatable}{lemma}{phiincrease}
  \label{lem:phi-increase}
  Under \emph{event A}, $\Phi(\mathbf{T_{(\ell+1) \cdot w}})-\Phi(\bm{\tau}) \leq \sum_{i=1}^n  \lambda_{i} \tau_i w+\frac{1}{2} \sum_{i=1}^n \lambda_{i}w^{2}$.
\end{restatable}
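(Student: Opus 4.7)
The plan is to show that Lemma~\ref{lem:phi-increase} in fact holds \emph{deterministically}; ``event~$A$'' plays no role in this bound, and appears in the statement only for uniformity with Lemma~\ref{lem:phi-decrease}. The argument is purely combinatorial, based on a pointwise comparison of ages followed by a short algebraic manipulation.

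The crux is the pointwise bound $T^i_{(\ell+1)\cdot w} \le \tau_i + w$ for every queue~$i$, which I would verify by case analysis on $\tau_i$. If $\tau_i > 0$, then by the maximality in the definition of $\tau_i$ the oldest packet present in queue~$i$ at time $(\ell+1)\cdot w$ arrived at time $\ell \cdot w - \tau_i$, so its age at $(\ell+1)\cdot w$ is exactly $\tau_i + w$, giving $T^i_{(\ell+1)\cdot w} = \tau_i + w$. If $\tau_i = 0$, then by definition no packet present in queue~$i$ at time $(\ell+1)\cdot w$ arrived at or before time $\ell \cdot w$; hence any remaining packet arrived at some time $t \ge \ell \cdot w + 1$, giving $T^i_{(\ell+1)\cdot w} \le w - 1 < \tau_i + w$ (and $T^i_{(\ell+1)\cdot w} = 0$ if the queue is empty).

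Using the closed form $\Phi(\mathbf{T}) = \frac{1}{2}\sum_i \lambda_i T^i(T^i - 1)$ and the fact that $x(x-1)$ is non-decreasing on non-negative integers, I would conclude
\begin{align*}
\Phi(\mathbf{T}_{(\ell+1)\cdot w}) - \Phi(\bm{\tau}) &\le \frac{1}{2} \sum_{i=1}^n \lambda_i \bigl[(\tau_i + w)(\tau_i + w - 1) - \tau_i(\tau_i - 1)\bigr] \\
&= \frac{1}{2} \sum_{i=1}^n \lambda_i (2 \tau_i w + w^2 - w) \\
&\le \sum_{i=1}^n \lambda_i \tau_i w + \frac{1}{2} \sum_{i=1}^n \lambda_i w^2,
\end{align*}
which matches the claimed bound.

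Since no probabilistic input (no concentration, no regret estimate) is invoked, there is no substantive obstacle in this lemma; the one detail worth a brief sanity check is the case $\tau_i = 0$, where one uses monotonicity of $x(x-1)$ on the integers $\{0,1,2,\dots\}$ rather than on all of $\mathbb{R}$. The conceptual role of the lemma is complementary to Lemma~\ref{lem:phi-decrease}: the decrease lemma captures the random clearing of packets and so genuinely needs event~$A$, whereas the increase lemma just bookkeeps how much older packets get and how much new arrivals can contribute, both of which are bounded deterministically.
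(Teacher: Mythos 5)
Your proposal is correct and matches the paper's own proof: both use the pointwise bound $T^i_{(\ell+1)\cdot w} \le \tau_i + w$ together with the closed form of $\Phi$ and the same algebraic simplification. Your added observation that the bound is purely deterministic (event~$A$ is superfluous here) and the explicit note on why $x(x-1)$ is monotone on the relevant domain are both accurate and slightly more careful than the paper's terse ``by definition of the potential function'' step.
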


With a small probability, event~$A$ does not happen, and it is relatively straightforward to upper bound the increase in the potential in this case. (Most pessimistically, no packets is cleared during the time window and $T_t^i$ in each queue grows by~$w$.)

\begin{restatable}{lemma}{notA}
  \label{lem:not-A}
  If event~$A$ does not happen, $\Phi(\mathbf{T_{(\ell+1) \cdot w}})-\Phi(\mathbf{T_{\ell \cdot w}}) \leq \sum_{i=1}^n  \lambda_{i}T_{\ell\cdot w}^i w+\frac{1}{2}\sum_{i=1}^n\lambda_{i}w^{2} $.
\end{restatable}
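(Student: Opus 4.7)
The plan is to exploit the fact that the desired inequality is purely deterministic once we have a worst-case bound on how fast the ages $T^i$ can grow over a single window, so the argument requires no appeal to concentration, regret, or the event $A$ itself. In particular, the statement follows whether or not $A$ occurs; the phrase ``if event $A$ does not happen'' simply signals that we are in the complementary case where the sharper bound of Lemma~\ref{lem:phi-increase} is unavailable.

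The first step is the deterministic age bound
\[
T_{(\ell+1)\cdot w}^i \;\le\; T_{\ell\cdot w}^i + w \qquad \text{for every } i.
\]
This is immediate from the definition: the oldest packet in queue $i$ at time $(\ell+1)\cdot w$ is either the same packet that was oldest at time $\ell \cdot w$ (whose age has grown by exactly~$w$) or a strictly younger packet that arrived at some later time; in either case the age is bounded by $T_{\ell\cdot w}^i+w$. No clearing information is needed for this.

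The second step is a direct computation using the closed form $\Phi(\mathbf{T}_t)=\tfrac{1}{2}\sum_i \lambda_i T_t^i(T_t^i-1)$. Writing $a_i := T_{\ell\cdot w}^i$ and $b_i := T_{(\ell+1)\cdot w}^i$, we use the identity $b(b-1)-a(a-1)=(b-a)(b+a-1)$, whence
\[
\Phi(\mathbf{T}_{(\ell+1)\cdot w})-\Phi(\mathbf{T}_{\ell\cdot w})
= \tfrac{1}{2}\sum_{i=1}^n \lambda_i (b_i-a_i)(b_i+a_i-1).
\]
Plugging in $b_i - a_i \le w$ and $b_i + a_i - 1 \le 2a_i + w - 1$ yields
\[
\Phi(\mathbf{T}_{(\ell+1)\cdot w})-\Phi(\mathbf{T}_{\ell\cdot w})
\;\le\; \tfrac{1}{2}\sum_{i=1}^n \lambda_i\, w\,(2 T_{\ell\cdot w}^i + w - 1)
\;\le\; \sum_{i=1}^n \lambda_i T_{\ell\cdot w}^i\, w + \tfrac{1}{2}\sum_{i=1}^n \lambda_i w^2,
\]
which is exactly the claimed bound.

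Since each step is elementary, there is no real obstacle; the only thing to be careful about is verifying the age-monotonicity claim $T_{(\ell+1)\cdot w}^i \le T_{\ell\cdot w}^i + w$ cleanly, distinguishing the (harmless) case where queue~$i$ empties during the window (in which case $T_{(\ell+1)\cdot w}^i=0$ and the bound is trivial) from the case where an older packet survives or a new packet becomes the oldest one.
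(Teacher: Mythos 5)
Your proof is correct and takes essentially the same route as the paper: both rest on the deterministic bound $T_{(\ell+1)\cdot w}^i \le T_{\ell\cdot w}^i + w$ (the paper phrases it as the worst case of no packets being cleared) followed by expanding $\tfrac12\lambda_i\bigl((T^i+w)(T^i+w-1)-T^i(T^i-1)\bigr)$. Your added observation that the bound is deterministic and does not actually use the failure of event~$A$ is accurate, and your handling of the sign/edge cases in the factored form is fine.
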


Lemma~\ref{lem:negative-drift} follows from combining Lemma \ref{lem:phi-decrease}, \ref{lem:phi-increase} and~\ref{lem:not-A}.

\section{Queueing Systems with Multiple Layers}
\label{sec:G}

In this section we study queueing systems where packets or tasks may need to go through more than one servers before their completions.  
After a packet is successfully processed by an intermediate server, it immediately joins the queue forming at their server, waiting to be sent to the next server.
In Section~\ref{sec:G-central}, we give sufficient and necessary conditions for such a queueing system to be stable under a centralized policy.  
In Section~\ref{sec:G-example}, we show that, when one extends the utility and service priority rules from \citet{gaitonde2020stability}'s model to such networks, it is impossible to obtain a PoA result comparable to Theorem~\ref{thm:IB-decentral}.
In Section~\ref{sec:G-decentral-sol}, we introduce new utilities and service priority rules that are aware of local queue lengths, and show that they suffice to restore conditions for stability under decentralized, no-regret strategies.

\subsection{Stability under Centralized Policies}
\label{sec:G-central}

As we reasoned for the bipartite case, it never benefits a central planner to send packets from more than one queues to the same server in a single time step, therefore, it is without loss of generality to consider policies under which, at each time step, the edges along which packets are sent from a set of vertex-disjoint paths. 
(Note that one such path need not start from a source or end at a terminal.)
In general, at each step this set of paths may be sampled from a distribution that depends on the history.
As in the bipartite case, the following characterization of stable systems shows it without loss of generality to let this distribution be the same from step to step, regardless of what happened in the past.
The proof though is considerably more involved than in the bipartite case.\footnote{It is relatively easy to extend the argument in Theorem~\ref{thm:IB-central} to show the necessity of the conditions in Theorem~\ref{thm:G-central}, except for the strictness of the signs in \eqref{eq:source-process1} and \eqref{eq:middle-process1}.}
\hufu{Say something about the alternative conditions.}

\begin{restatable}{theorem}{Gcentral}
  \label{thm:G-central}\label{central network}
  Given a Queue-G model $(V, E, \arrivals, \processes)$, the following statements are equivalent.

  \begin{enumerate}
    \item There exists a centralized policy under which the system is stable.

    \item The following linear system is feasible:
\begin{align}
\arrival_i &< \sum_{j} z_{ij} \process_j, \quad & \forall i \in \source;
\label{eq:source-process1}
\\
\process_i \sum_j z_{ji}  & < \sum_j z_{ij} \process_j, & \forall i \in \medium \textnormal{ with } \prod_{j \in \inneighbor(i)} z_{ji} > 0;
\label{eq:middle-process1}
\\
\sum_j z_{ij} & \leq 1, & \forall i \in \source \cup \medium;
\label{eq:flow-outgoing1}
\\
\sum_j z_{ji} & \leq 1, & \forall i \in \medium \cup \terminal;
\label{eq:flow-incoming1}
\\
z_{ij}& =0, &\forall (i,j)\notin E; 
\\
z_{ij} & \geq 0, & \forall (i, j) \in E.
\label{eq:zij-nonnegative}
\end{align}

\item The following linear system in $(f_{i\pi})_{i \in \source, \pi \in \Pi}$ is feasible, where $\Pi$ is the set of paths from a node in~$\source$ to a node in~$\terminal$:
\begin{align}
  \sum_{i \in \source} \sum_{y \in \outneighbor(x)} \sum_{\pi \in \Pi: \pi \ni (x, y)} \frac{f_{i\pi}}{\process_y} & \leq 1, & \forall x \in \source \cup \medium; 
  \label{eq:flow-source}
  \\
  \sum_{i \in \source} \sum_{y \in \inneighbor(x)}\sum_{\pi \in \Pi: \pi \ni (y, x)} f_{i\pi} & \leq \process_x, & \forall x \in \medium \cup \terminal; 
  \label{eq:flow-incoming}
  \\
  \sum_{\pi \in \Pi} f_{i \pi} & > \arrival_i, & \forall i \in \source; 
  \label{eq:flow-process}
  \\
\sum_{i \in \source} \sum_{\pi \in \Pi: \exists y, (y, x) \in \pi} f_{i \pi} - 
\sum_{i \in \source} \sum_{y \in \outneighbor(x)}\sum_{\pi \in \Pi: \pi \ni (x, y)} f_{i\pi} &= 0, & \forall x \in \medium;
\label{eq:flow-conservation}
\\
  f_{i\pi}&=0, & \forall i \in \source, \pi \in \Pi \, \text{s.t. } i \textnormal{ not on } \pi.
  \label{eq:flow-zero}
\end{align}
  \end{enumerate}
\end{restatable}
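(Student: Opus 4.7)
The plan is to close the cycle $(3) \Rightarrow (1) \Rightarrow (2) \Rightarrow (3)$, with the equivalence $(2) \Leftrightarrow (3)$ coming from a standard edge-to-path flow decomposition. Interpreting $z_{ij}\process_j$ as an edge-flow rate, the constraints \eqref{eq:source-process1}--\eqref{eq:middle-process1} become ``outflow exceeds inflow'' at sources and fully active middle nodes, while \eqref{eq:flow-outgoing1}--\eqref{eq:flow-incoming1} become doubly-substochastic time-capacity constraints; standard flow decomposition then maps a feasible $(z_{ij})$ to a feasible $(f_{i\pi})$ and vice versa. The strictness in \eqref{eq:flow-process} transfers from \eqref{eq:source-process1} by pushing the slack at each source along an augmenting path to some terminal, which exists because \eqref{eq:middle-process1} gives strict outflow at every active middle node.

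For $(3) \Rightarrow (1)$, I would aggregate the path flows into $z_{ij} = \tfrac{1}{\process_j}\sum_{i' \in \source}\sum_{\pi \ni (i,j)} f_{i'\pi}$ and apply a Birkhoff--von Neumann-style decomposition to write the doubly substochastic edge-matrix $(z_{ij})$ as a convex combination of $0/1$ matrices in which every node is incident to at most one selected outgoing and at most one selected incoming edge. Each such $0/1$ matrix encodes a valid single-step simultaneous routing (a union of vertex-disjoint edges), and the centralized policy samples one per step i.i.d.\ and sends the oldest packet along each selected outgoing edge. The strict inequalities \eqref{eq:source-process1}--\eqref{eq:middle-process1} then translate into a strictly negative expected drift on each individual node's potential $\potential$ from Section~\ref{sec:IB-decentral} when the node's queue is large; Theorem~\ref{thm:pemantle} applied to an appropriate weighted sum of these per-node potentials yields strong stability.

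For $(1) \Rightarrow (2)$, the plan is to essentially repeat the argument of Theorem~\ref{thm:IB-central} on the full DAG. Fix a stabilizing policy and a horizon~$T$, and let $z_{ij}^T$ be the expected fraction of time steps in $[0,T]$ in which the policy successfully transfers a packet along $(i,j)$, divided by $\process_j$. Strong stability forces $\Ex{\queue_T^i}$ to stay bounded for every $i$, so the per-source outflow matches $\arrival_i$ up to $o_T(1)$ and the per-middle-node outflow matches its inflow up to $o_T(1)$; passing to a convergent subsequence of $(z_{ij}^T)$ produces a feasible weak form of \eqref{eq:source-process1}--\eqref{eq:middle-process1}. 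To upgrade to strict inequalities, condition on the constant-probability event that each source $i$ receives at least $(1+\eta)\arrival_i T$ arrivals in $[0,T]$; stability under this conditioning forces each source's outflow rate to strictly exceed $\arrival_i$, and propagating the slack through the flow conservation \eqref{eq:flow-conservation} at middle nodes (restricted to those with $\prod_j z_{ji}>0$, which is precisely where all in-edges carry positive limiting flow) yields the strictness in \eqref{eq:middle-process1}.

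The main obstacle is the Birkhoff--von Neumann step in $(3) \Rightarrow (1)$. Since the DAG is not bipartite, one must view $(z_{ij})$ as a matrix whose rows are indexed by the ``outgoing slots'' $\source \cup \medium$ and columns by the ``incoming slots'' $\medium \cup \terminal$, with support inside $E$; the row and column sums are both at most $1$ by \eqref{eq:flow-source}--\eqref{eq:flow-incoming}, so a BvN-type convex decomposition into $0/1$ matrices supported on $E$ exists. The conceptually delicate point is that a single time step can simultaneously execute such a $0/1$ routing without conflict, because a middle node may send one packet to an out-neighbor and receive another from an in-neighbor in the same step; this decouples the single-step feasibility constraint from the DAG path structure, which is why edge-level BvN suffices rather than something path-level. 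The secondary subtlety, flagged in the footnote, is the strictness of \eqref{eq:middle-process1} in Part~C: here one must simultaneously condition on all sources' large-arrival events and carefully track how the excess flow propagates through middle nodes, which requires a more delicate conditioning than in the bipartite case.
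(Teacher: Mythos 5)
There is a genuine gap, and it occurs at both points of your argument where a \emph{strict} inequality at a middle node must be produced; this is precisely the difficulty the paper's footnote flags and the reason the paper introduces condition~(3) at all. First, in your $(3)\Rightarrow(1)$ step: the edge variables $z_{ij}=\frac{1}{\process_j}\sum_{i'}\sum_{\pi\ni(i,j)}f_{i'\pi}$ obtained by aggregating a feasible path flow satisfy \eqref{eq:middle-process1} with \emph{equality}, not strict inequality, because of the conservation constraint \eqref{eq:flow-conservation}. Sampling the BvN decomposition of this $z$ i.i.d.\ gives each middle node's queue zero expected drift (inflow rate $=$ outflow rate), so Theorem~\ref{thm:pemantle} does not apply and strong stability does not follow. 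Second, in your $(1)\Rightarrow(2)$ step: under any stable policy the long-run inflow and outflow at a middle node are equal, so the empirical rates $z_{ij}^T$ again yield only weak inequality in \eqref{eq:middle-process1}; conditioning on inflated arrivals at the sources creates slack \emph{at the sources}, but ``propagating the slack through flow conservation'' is not a mechanism --- conservation at middle nodes is an equality and produces no strictness downstream. Both gaps require an explicit deformation of the limiting solution, which your proposal does not contain.

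The paper's resolution is Lemma~\ref{lem:3to2}: starting from a feasible path flow, set $\gamma=\min_{i\in\source}\sum_\pi f_{i\pi}/\arrival_i>1$, fix a topological order $t(\cdot)$, and rescale $z_{xy}\coloneqq z^1_{xy}\,\gamma^{(t(x)+1)/(m+1)-1}$. Edges deeper in the DAG are shrunk less than edges closer to the sources, which converts the conservation equality at each middle node into a strict inequality by a factor $\gamma^{-1/(m+1)}<1$, while the total loss at a source is at most a factor $1/\gamma$, which the multiplicative slack $\gamma$ at the sources can absorb. With this the paper routes $(3)\Rightarrow(2)\Rightarrow(1)$ and, for necessity, proves $(1)\Rightarrow(3)$ (Lemma~\ref{lem:1to3}), where the only strict inequality needed, \eqref{eq:flow-process}, sits at the sources and does follow from the conditioning argument you describe. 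Your reverse orientation $(2)\Rightarrow(3)$ is plausible (one trims the excess at middle nodes in topological order to restore conservation before decomposing into source--terminal paths), but it does not help you, since your cycle still needs strictness at middle nodes in the two places above. Incorporating the topological rescaling and rerouting the necessity direction through condition~(3) would essentially reconstruct the paper's proof.
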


We relegate the proof of the theorem to Appendix~\ref{sec:app-G-central}.  
The fact that constraints \eqref{eq:source-process1}-\eqref{eq:zij-nonnegative} being feasible implies the stability of a centralized policy is a relatively straightforward consequence of a generalization of Birkhoff-von Neumann theorem (Lemma \ref{lem:path-decomp} and~\ref{lem:2to1}).
Proving the other direction is considerably more involved than for the bipartite case, and it is for this purpose that we introduce the third condition in Theorem~\ref{thm:G-central}.
We show that the feasibility of constraints \eqref{eq:flow-source}-\eqref{eq:flow-zero} implies the feasibility of constraints \eqref{eq:source-process1}-\eqref{eq:zij-nonnegative} (Lemma~\ref{lem:3to2}), then we show that a system for which constraints \eqref{eq:flow-source}-\eqref{eq:flow-zero} are not feasible cannot be stable (Lemma~\ref{lem:1to3}).


Again we give a dual form of conditions for centralized stability,
which are central to our analysis of the systems' stability under no-regret policies.  
Its proof can be found in Appendix~\ref{sec:app-G-central}. 

\begin{definition}
A vertex-disjoint path (one such path need not start from a source or end at a terminal) is a collection of edges. Any two edges in the path can't have the same head or the same tail.
\end{definition}
\begin{restatable}{lemma}{Gdual}
  \label{lem:G-dual}
  Given a Queue-G model $(V, E, \arrivals, \processes)$ with $n$ sources and $m$ servers, the following two conditions are equivalent:
\begin{enumerate}[(1)]

  \item The linear system of the second statement in Theorem \ref{thm:G-central} is feasible.
  \item For any $ \bm{\alpha}\in \{\mathbb{R}_{+}^{n+m} \given \alpha_{i}=0 \ \text{if}\ i \in \terminal\}$, there is a vertex-disjoint path set $U$, such that  $\sum_{(i,j)\in U}(\alpha_{i}-\alpha_{j})\process_{j}>\sum_{i\in \source}\alpha_{i}\arrival_{i}$.
\end{enumerate}
\end{restatable}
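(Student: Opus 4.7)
The plan is to prove the two directions separately: the easier $(1) \Rightarrow (2)$ by a direct telescoping calculation combined with Birkhoff-von Neumann, and the harder $(2) \Rightarrow (1)$ by Farkas' lemma applied to an equivalent LP.

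For $(1) \Rightarrow (2)$: Fix a solution $(z_{ij})$ of the system in Theorem~\ref{thm:G-central}(2) and an arbitrary nonzero $\alphas$ vanishing on $\terminal$. I would begin from the telescoping identity obtained by grouping every edge's contribution at its head and its tail, together with $\alpha_j = 0$ for terminals:
\begin{align*}
  \sum_{(i,j) \in E} z_{ij}(\alpha_i - \alpha_j)\process_j = \sum_{i \in \source} \alpha_i \sum_{j} z_{ij}\process_j + \sum_{i \in \medium} \alpha_i \left( \sum_{j} z_{ij}\process_j - \process_i \sum_{j} z_{ji} \right).
\end{align*}
Constraint \eqref{eq:source-process1} gives $\sum_j z_{ij}\process_j > \arrival_i$ for sources, while \eqref{eq:middle-process1}, together with the trivial case of zero inflow, yields the unconditional weak inequality $\sum_j z_{ij}\process_j \geq \process_i \sum_j z_{ji}$ at every middle node. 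Substituting, the left-hand side strictly exceeds $\sum_{i \in \source} \alpha_i \arrival_i$ whenever some source carries positive weight. Since $(z_{ij})$ lies in the assignment-style polytope cut out by \eqref{eq:flow-outgoing1} and \eqref{eq:flow-incoming1}, whose vertices are precisely the $\{0,1\}$ vertex-disjoint path sets, Birkhoff-von Neumann gives $z = \sum_k \lambda_k M_k$ with $\lambda_k \geq 0$ and $\sum_k \lambda_k = 1$; averaging the fractional inequality over this decomposition produces some $M_k$ realizing the strict inequality, which I take to be $U$. The degenerate case where every source has $\alpha_i = 0$ but some middle $v$ has $\alpha_v > 0$ is handled without $z$: walking along any directed path from $v$ to a terminal, $\alphas$ must decrease strictly across some edge $(i,j)$, and the singleton $U = \{(i,j)\}$ suffices because the right-hand side is $0$.

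For $(2) \Rightarrow (1)$: I would proceed by contrapositive and apply Farkas' lemma. To avoid the conditional middle constraint \eqref{eq:middle-process1}, I switch to the equivalent path-flow formulation \eqref{eq:flow-source}--\eqref{eq:flow-zero}, which has only a strict inequality at the sources. Parameterize by $\epsilon$ the source constraints as $\sum_\pi f_{i\pi} \geq \arrival_i + \epsilon$ and maximize $\epsilon$; infeasibility of (1), equivalently of (3), forces the optimum to be nonpositive. LP duality then produces one nonnegative multiplier per capacity constraint at each vertex (on the outgoing side at nodes in $\source \cup \medium$ and on the incoming side at nodes in $\medium \cup \terminal$), together with free multipliers for flow conservation at middle vertices. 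Since each vertex contributes at most one capacity constraint per side, I bundle these multipliers into a single $\alpha_v \geq 0$ at each vertex, forced to $0$ on $\terminal$ (which has no outgoing-capacity constraint). The dual inequality, evaluated along a single source-to-terminal path $\pi$, reads $\sum_{(i,j) \in \pi}(\alpha_i - \alpha_j)\process_j \leq \sum_{i \in \source}\alpha_i \arrival_i$; because the capacity constraints decouple across vertex-disjoint paths, the same bound persists for every vertex-disjoint path set $U$, producing the desired witness to the failure of (2).

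The main obstacle is the Farkas step in $(2) \Rightarrow (1)$: the path-flow LP has many dual variables (one per capacity constraint and one per conservation constraint), and packaging them into a single $\alpha_v$ per vertex so that the resulting inequality matches the form in (2) exactly requires careful bookkeeping and a verification that the worst-case dual certificate is supported on a vertex-disjoint path set. A secondary subtlety is the conditional middle constraint \eqref{eq:middle-process1}, which I sidestep by working through formulation (3) in this direction and by using only its unconditional weakening in the other.
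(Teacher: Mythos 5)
Your forward direction $(1)\Rightarrow(2)$ is correct, and the telescoping-plus-averaging argument is more transparent than the paper's own proof, which runs both directions through Farkas' lemma on a reformulation over path ensembles. The identity you start from is exactly what makes the decomposition of Lemma~\ref{lem:path-decomp} interact cleanly with condition~(2), and your separate treatment of the degenerate case where $\alphas$ vanishes on all of $\source$ is the right move.

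The reverse direction $(2)\Rightarrow(1)$ has a real gap. Dualizing the path-flow LP~(3) gives one multiplier per outgoing-capacity constraint~\eqref{eq:flow-source}, one per incoming-capacity constraint~\eqref{eq:flow-incoming}, and one per source constraint~\eqref{eq:flow-process}; call them $a_x, b_x, c_i$. The dual inequality attached to a source-to-terminal path $\pi=(v_0,\dots,v_k)$ reads $\sum_{\ell<k} a_{v_\ell}/\process_{v_{\ell+1}} + \sum_{\ell\geq 1} b_{v_\ell} \geq c_{v_0}$: the $a$'s appear divided by processing rates, the $b$'s do not, and no single multiplier at a middle vertex $v$ can play the role of $\alpha_v$ the way condition~(2) needs. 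Worse, these dual constraints are indexed only by source-to-terminal paths, while the set $U$ in~(2) may contain paths that begin and end at middle vertices; and summing the per-path inequality over a vertex-disjoint family repeats the right-hand side once per path, which is too weak. The way out --- and, reading through the terse write-up, essentially what the paper does --- is to dualize the \emph{edge} LP \eqref{eq:source-process1}--\eqref{eq:zij-nonnegative} instead, with \eqref{eq:middle-process1} relaxed to the unconditional nonstrict inequality $\sum_j z_{ij}\process_j \geq \process_i\sum_j z_{ji}$ (the perturbation in Part~2 of Lemma~\ref{lem:3to2} shows this relaxation is equivalent) and the source constraint tightened to $\sum_j z_{ij}\process_j \geq \arrival_i+\epsilon$ with $\epsilon$ maximized. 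That dual carries a single throughput multiplier $\alpha_i\geq 0$ at each $i\in\source\cup\medium$ (with $\alpha_j=0$ on $\terminal$), plus $\beta_i$ for out-capacity and $\gamma_j$ for in-capacity, and the dual constraint is \emph{per edge}: $(\alpha_i-\alpha_j)\process_j \leq \beta_i+\gamma_j$. Summing over the edges of any vertex-disjoint $U$ uses each $\beta_i$ and each $\gamma_j$ at most once, giving $\sum_{(i,j)\in U}(\alpha_i-\alpha_j)\process_j \leq \sum_i\beta_i+\sum_j\gamma_j \leq \sum_{i\in\source}\alpha_i\arrival_i$ by the dual objective bound, which is exactly the negation of~(2).
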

\subsection{Decentralized Multi-Layer Networks}
\label{sec:G-decentral}

\subsubsection{System Failure with Myopic Queues}
\label{sec:G-example}

In a queueing system with multiple layers, (i.e., when $\medium \neq \emptyset$), a natural extension of the utility in bipartite systems as defined in Section~\ref{sec:prelim} is to let a queue earn utility 1 at a time step if one of its packet is successfully processed by the server it is sent to.
The hope is that when all queues focus on getting their packets processed by the \emph{next} server, the system runs relatively efficiently.
Unfortunately, as the following example shows, when the queues run no-regret strategies on such utilities, they may be too short-sighted for the decentralized system to have performance comparable to a centralized one, even if one increases the processing capacities by any constant factor.

\begin{figure}[H]
  \centering
  \includegraphics[scale=0.25]{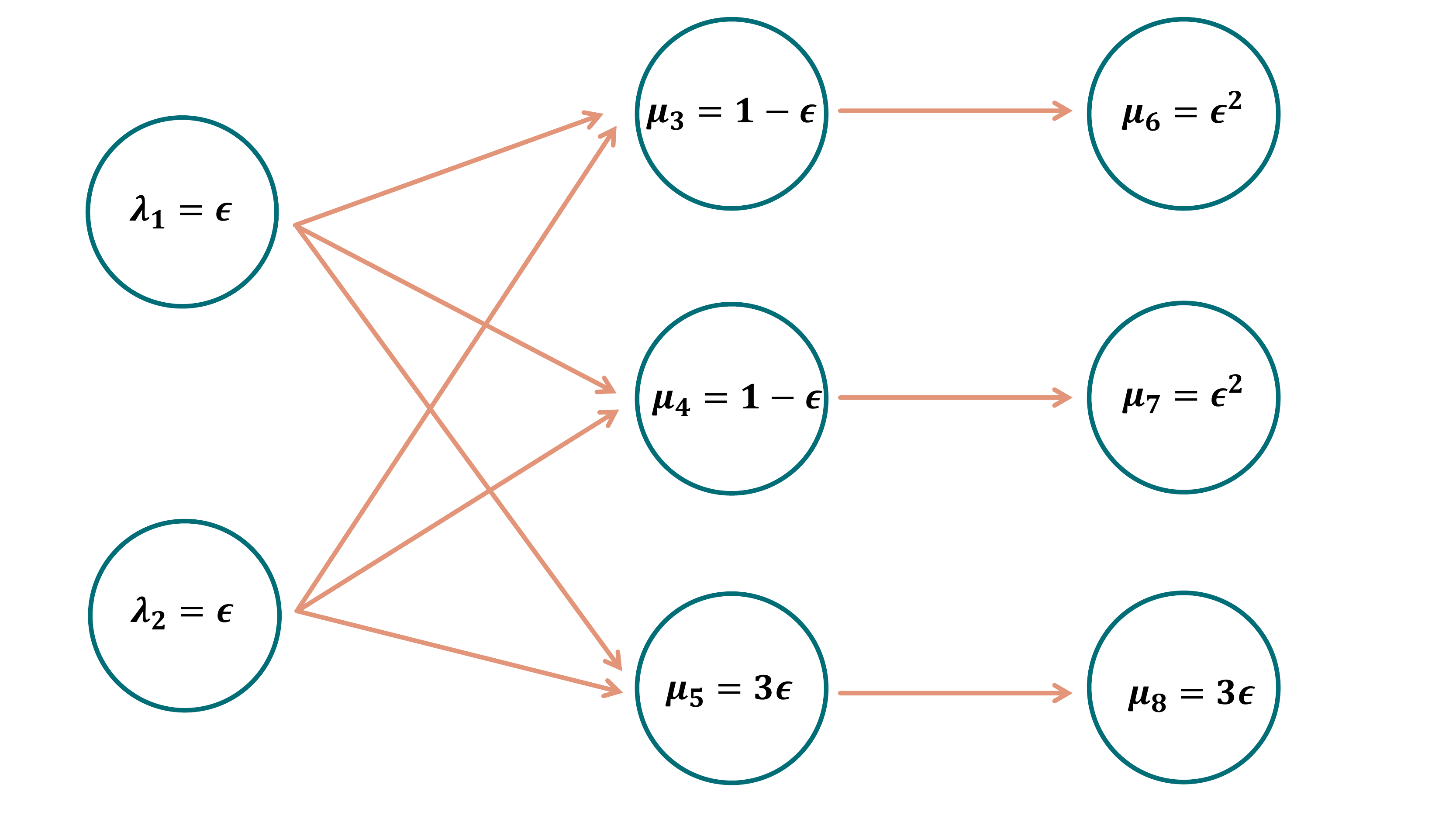}
  \caption{A queueing system with two layers of servers.  A centralized policy sending packets from both sources to server~5 makes the system stable, but the two sources may find it a no-regret strategy to send requests to servers 3 and~4, respectively.}
  \label{fig:G-incomplete}
\end{figure}

\begin{example}
  \label{ex:G-unstable}
  The system shown in Figure~\ref{fig:G-incomplete} is stable under a centralized policy.  
  One feasible solution to the linear system given in Theorem~\ref{thm:G-central} is $z_{15} = z_{25} = 0.4$, $z_{58} = 1$, with all other coordinates of $\mathbf z$ set to 0. 
  It is not difficult to see that, if both queue~1 and queue~2 send their requests to server 3 and server~4, respectively, they play no-regret strategies, but the system is unstable because packets accumulate at servers 3 and~4.
  The phenomenon persists even when the processing capacities are increased by a factor of $\frac 1 \eps$.

\end{example}

\subsubsection{Stability with Queue Length Aware Utilities}
\label{sec:G-decentral-sol}


Example~\ref{ex:G-unstable} suggests that the instantaneous, local feedback is not enough to align the queues' interests with the system's efficiency.
In this section we show that, when we incorporate one other piece of local information, the \emph{queue lengths}, into the queues' utilities and the service priority rule, we can recover the bicriteria results we showed for single-layer systems in Section~\ref{sec:IB}.

Recall that $Q_t^i$ denotes the length of queue~$i$ at time~$t$.
Our new utility for queue~$i \in \source \cup \medium$ for sending a request to server~$j$ at time~$t$ is
$$
u_t^i(j, a_{-i}(t)\given \mathcal{F}_{t})=
\left\{
\begin{aligned}
  &Q_t^i-Q_t^j, &&\text{if the packet sent to $j$ is successfully processed;}\\
&0, && \text{otherwise.}
\end{aligned}
\right.
$$

Note that this utility function immediately implies that it is never in a queue's interest to send a request to a server with a queue longer than itself.
  Also recall that $Q_t^j = 0$ for any $j \in \terminal$ at any time~$t$.
The history $\mathcal{F}_{t}$ now includes information on which packets have been cleared and the queue size $Q_t^i$.\footnote{This is different from the setup in Section~\ref{sec:prelim}.
  We now no longer have the independence between the time interval between packet arrivals and histories prior to the their clearing.  
  As will be clear in the proof, this independence is no longer needed in the proof.
  The introduction of queue lengths makes the change in the potential more directly connected with the queues' utilities.
}

We also change the servers' priority rules to preferring requests from longer queues.  With the new utilities and service priority rules, the sufficient condition we obtain for decentralized stability is:


\begin{assumption}\label{decentralized G stability condition}
  \label{assump:G-decentral}
There is a $\beta>0$ such that for any $ \bm{\alpha}\in \{\mathbb{R}_{+}^{n+m} \given \alpha_{i}=0 \ \text{if}\ i \in \terminal\}$, there is a vertex-disjoint path set $\disjointpathset$ such that 
$$
\frac{1}{2}(1-\beta)\sum_{(i,j)\in \disjointpathset}(\alpha_{i}-\alpha_{j})\process_{j}>\sum_{i\in \source}\alpha_{i}\arrival_{i}
$$
\end{assumption}

\begin{restatable}{theorem}{thmGdecentral}
\label{thm:decentralized G}
\label{thm:G-decentral}
If a Queue-G model queueing system satisfies Assumption \ref{decentralized G stability condition} holds, and nodes use no-regret learning strategies with $\delta=\frac{\beta \process_{(m)}}{96(n+m)^2}$, then the queueing system is strongly stable.
\end{restatable}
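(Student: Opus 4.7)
The plan mirrors the proof of Theorem~\ref{thm:IB-decentral}: divide time into windows of length~$w$, introduce a potential $\Phi$ tailored to the new utility, and verify the hypotheses of Theorem~\ref{pemantale} for $Z_\ell = \sqrt{\Phi(\mathbf{Q}_{\ell w})}$.  A natural choice is
\begin{equation*}
  \Phi(\mathbf{Q}_t) \;=\; \frac{1}{2}\sum_{i \in \source \cup \medium} Q_t^i(Q_t^i - 1),
\end{equation*}
whose crucial property is that a successful packet transfer from~$i$ to~$j$ changes $\Phi$ by $-Q_t^i+Q_t^j+1$, exactly (up to additive $O(1)$) the negative of the new utility $Q_t^i-Q_t^j$.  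Hence the total decrease of~$\Phi$ over a window from packet clearings equals, up to $O(w)$, the aggregate utility of all queues---precisely the quantity that the no-regret assumption controls---while a packet arrival at source~$i$ contributes $+Q_t^i+O(1)$ to~$\Phi$.  Note that, as flagged in the footnote after the new utility definition, the change in $\Phi$ is now directly tied to utilities, and we no longer rely on independence between inter-arrival times and prior histories.

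For the negative drift condition, the expected increase in $\Phi$ from arrivals over a window is at most $\sum_{i\in \source}\arrival_i Q_{\ell w}^i\,w+O(w^2)$, using standard concentration to handle within-window fluctuations of the queue lengths.  For the decrease, I apply Assumption~\ref{assump:G-decentral} with $\alpha_i = Q_{\ell w}^i$ for $i\in\source\cup\medium$ (and $\alpha_i = 0$ for $i\in\terminal$) to obtain a vertex-disjoint path set~$U$; after discarding any edges where $Q_{\ell w}^i<Q_{\ell w}^j$ (which only increases the sum) we may assume every $(i,j)\in U$ satisfies $Q_{\ell w}^i\geq Q_{\ell w}^j$ and
\begin{equation*}
  \frac{1-\beta}{2}\sum_{(i,j)\in U}(Q_{\ell w}^i - Q_{\ell w}^j)\process_j \;>\; \sum_{i\in\source}\arrival_i Q_{\ell w}^i.
\end{equation*}

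The main technical step is the analogue of Lemma~\ref{lem:lower-bound-Ntau}: lower bound queue~$i$'s counterfactual utility from always sending to~$j$, for each $(i,j)\in U$.  The longest-queue priority rule plays the role that the oldest-packet rule played in the bipartite case.  Let $X_{jt}^i$ indicate that some queue $i'\ne i$ with $Q_t^{i'}\geq Q_t^i$ actually sends to~$j$ at time~$t$; when $X_{jt}^i=0$ the counterfactual packet from~$i$ has the longest queue among requesters at~$j$ and is picked, contributing $(Q_t^i-Q_t^j)C_t^j$ to~$i$'s utility; when $X_{jt}^i=1$ the blocking queue $i'$ earns real utility at $(i',j)$ of at least $Q_t^{i'}-Q_t^j\geq Q_t^i-Q_t^j$, so blocking merely reassigns the utility---directly paralleling the $X_{jt}^{\tau}$ argument.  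Working on a high-probability event~$A$ analogous to Definition~\ref{def:eventA} on which server successes concentrate and every $\reg_i(w)\le\varphi_\delta(w)$, summing the per-pair inequality over $(i,j)\in U$ and exploiting vertex-disjointness to count each server at most once gives
\begin{equation*}
  2\cdot(\text{total utility}) \;\geq\; (1-\epsilon)\sum_{(i,j)\in U}(Q_{\ell w}^i - Q_{\ell w}^j)\process_j\,w \;-\; \sum_{i}\reg_i(w) \;-\; O(w^{3/2}),
\end{equation*}
after absorbing $O(w)$ corrections arising from $|Q_t^i-Q_{\ell w}^i|\le w$.  Combining with Assumption~\ref{assump:G-decentral} then yields the analogues of Lemmas~\ref{lem:phi-decrease}, \ref{lem:phi-increase}, and~\ref{lem:not-A}, and hence the negative drift of~$\Phi$.

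The hardest step will be translating the drift of~$\Phi$ into one for $Z_\ell=\sqrt{\Phi}$ of order~$-c$, especially in states where $\Phi$ is dominated by middle-queue lengths while sources are near-empty and $\sum_{i\in\source}\arrival_i Q_{\ell w}^i$ is small.  I plan to leverage the flexibility $\alpha\in\mathbb{R}_+^{n+m}$ in Assumption~\ref{assump:G-decentral}: since the Assumption applies to \emph{any} non-negative~$\alpha$ (not only $\{0,1\}$-valued ones as in Assumption~\ref{assump:IB}), one can either pick $\alpha$ so as to certify $\sum_{(i,j)\in U}(Q^i-Q^j)\process_j=\Omega(\sqrt{\Phi})$ directly by choosing $\alpha_i$ tied to the contribution of~$i$ to~$\sqrt{\Phi}$, or integrate over thresholds by applying Assumption~\ref{assump:G-decentral} with $\alpha_i=\indicator[Q_{\ell w}^i\geq L]$ in the style of the bipartite proof, summing over~$L$ and absorbing the threshold-boundary fluctuations into the $O(w^{3/2})$ error term.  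The bounded-jump condition is essentially unchanged from Lemma~\ref{lem:bounded-jump}, as $Q_t^i$ changes by at most one per time step and the same moment estimates go through verbatim.
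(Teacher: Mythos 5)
Your architecture matches the paper's: the same potential $\Phi(\mathbf{Q}_t)=\frac12\sum_{i\in\source\cup\medium}Q_t^i(Q_t^i-1)$, the same windowing, and the same key counterfactual lemma in which the longest-queue priority rule replaces the oldest-packet rule, blocking reassigns utility to the server's ``contribution'' $v_t^j$, and a double count over a vertex-disjoint path set gives $2\cdot(\text{total utility})\geq\sum_{(i,j)\in U}(\cdots)\process_j w-\sum_i\reg_i$ (this is exactly Lemma~\ref{lem:lower-bound-utility}). One cosmetic difference: the paper instantiates Assumption~\ref{assump:G-decentral} with $\alpha_i=\tau_i$, the number of packets surviving at node $i$ at the window's \emph{end} that were already present at its start, rather than with $Q_{\ell w}^i$; this choice simultaneously guarantees that queue $i$ is non-empty throughout the window (needed for the counterfactual) and makes $\Phi(\mathbf{Q}_{\ell w})-\Phi(\bm\tau)$ exactly the total utility. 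Since $|\tau_i-Q_{\ell w}^i|\le w$, your version is repairable at the cost of the $O(w^2)$ terms you already budget for.

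The genuine gap is precisely at the step you flag as hardest, and neither of your two proposed routes closes it. Route (b) (summing over thresholds $\alpha_i=\indicator[Q_{\ell w}^i\geq L]$) does not transfer from the bipartite proof: there, the quantities $N_\tau$ for distinct $\tau$ are \emph{distinct} random variables whose sum over $\tau$ equals the potential decrease, so one may stack the per-$\tau$ lower bounds. Here the potential decrease equals the single quantity ``total utility,'' and Lemma~\ref{lem:lower-bound-utility} gives, for each choice of path set, a lower bound on that \emph{same} quantity; you cannot add these bounds over $L$. Route (a) is not an argument. The paper's resolution (Lemma~\ref{lem: lower bound of potential decrease}) is concrete and you should adopt it: split the $\beta$ slack so that the utility lower bound is written as $\frac{\beta}{4}\sum_{(i,j)\in\disjointpathset^*}(\tau_i-\tau_j)\process_j w+(1+\frac{\beta}{8})\sum_{i\in\source}\arrival_i\tau_i w-O(w^2)$, use the second term to cancel the arrival increase, and lower bound the first term by taking the node $k$ with the longest queue (so $Q_{\ell w}^k\geq\sqrt{2/(n+m)}\,Z_\ell$), following any directed path from $k$ to a terminal (where $\tau=0$), and telescoping $\sum(\tau_i-\tau_j)=\tau_k$ along it; since the path set may be taken to maximize $\sum(\tau_i-\tau_j)\process_j$, this yields
\begin{align*}
\sum_{(i,j)\in\disjointpathset^*}(\tau_i-\tau_j)\process_j\;\geq\;\process_{(m)}\tau_k\;\geq\;\tfrac12\process_{(m)}Q_{\ell w}^k\;\geq\;\tfrac{\process_{(m)}}{2(n+m)}\sum_iQ_{\ell w}^i,
\end{align*}
which is $\Omega(Z_\ell)$ and gives the constant negative drift for $Z_\ell=\sqrt{\Phi}$ exactly in the regime you worry about, where the sources are near-empty and $\Phi$ lives in the middle layers. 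Without this (or an equivalent) argument the proof is incomplete.
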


A quick comparison between Assumption~\ref{assump:G-decentral} and the dual form of condition for centralized stability in Lemma \ref{lem:G-dual} shows that, a queueing system is guaranteed to be stable with queues using no-regret learning strategies as long as the system can be stable under a centralized policy with twice as many packet arrivals.


As in the proof for Theorem~\ref{thm:IB-decentral}, we introduce a similar potential function
\begin{align}
  \Phi\left(\mathbf{\queue}_{t}\right) \coloneqq \frac{1}{2} \sum_{i\in \source \cup \medium} \queue_t^i(\queue_t^i-1),
\end{align}
and define $Z_\ell$ as its square root at the beginning of the $\ell$-th window of length~$w$.
There is change in the proofs for both the negative drift condition and the bounded jump condition.  
We detail here the main different steps in Lemma~\ref{lem:negative-drift-G} for the negative drift condition, and relegate the rest to Appendix~\ref{sec:app-G-decentral}.


\begin{restatable}{lemma}{Gnegativedrift}[Negative drift condition.]
  \label{lem:negative-drift-G}
Let $b=\frac{8\sqrt{2(n+m)}}{\beta  \process_{(m)}}( \sum_{i\in \source}\lambda_i^2 w^2+  \sum_{i\in \medium \cup \terminal} \process_j w^2+w)$, $c=-\frac{\sqrt{2}\process_{(m)}\beta w}{128(n+m)}$. Then $Z_0=0 \leq b$ and, for all $\ell$,
\begin{align*}
    Z_{\ell}>b \quad \Rightarrow  \Ex{Z_{\ell+1}- Z_{\ell} \given Z_0, \cdots, Z_{\ell}} \leq -c.
\end{align*}

\end{restatable}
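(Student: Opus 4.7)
The plan is to adapt the proof of Lemma~\ref{lem:negative-drift}, substituting queue-length bookkeeping for packet-age bookkeeping and invoking Assumption~\ref{assump:G-decentral} in place of Assumption~\ref{assump:IB}.

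I first compute the one-step change of $\Phi$. An arrival at source~$i$ raises $\Phi$ by $Q_t^i$ (since $\tfrac12 Q(Q-1)$ grows by $Q$ when $Q\to Q+1$). A successful processing that moves a packet from~$i$ to~$j$ (with $Q_t^j=0$ when $j\in\terminal$) changes $\Phi$ by $Q_t^j-(Q_t^i-1)=-u_t^i+1$, where $u_t^i$ is queue~$i$'s realized utility. Summing over the window yields
\begin{align*}
\Phi(\mathbf{\queue}_{(\ell+1)w})-\Phi(\mathbf{\queue}_{\ell w})
\;=\;\sum_{i\in\source}\sum_t\indicator[\text{arrival at }i\text{ at }t]\,Q_t^i
\;-\;\sum_{i,t}u_t^i
\;+\;(\text{\# clearings}),
\end{align*}
so the task is to lower-bound the realized total utility $\sum_{i,t}u_t^i$.

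I then invoke Assumption~\ref{assump:G-decentral} with $\alpha_i=Q_{\ell w}^i$ for $i\in\source\cup\medium$ and $\alpha_i=0$ for $i\in\terminal$, obtaining a vertex-disjoint path set~$U$ with $\tfrac12(1-\beta)\sum_{(i,j)\in U}(Q_{\ell w}^i-Q_{\ell w}^j)\process_j>\sum_{i\in\source}Q_{\ell w}^i\arrival_i$. For each edge $(i,j)\in U$ I would use queue~$i$'s no-regret guarantee against the counterfactual strategy ``always send to~$j$''. The key coupling, in the spirit of Lemma~\ref{lem:lower-bound-Ntau}, lets $X_{jt}$ indicate that some queue other than~$i$ whose current length is at least $Q_t^i$ sends to~$j$ at time~$t$. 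Under the longest-queue priority rule, if $X_{jt}=0$, queue~$i$'s counterfactual packet wins priority at~$j$ and earns utility $(Q_t^i-Q_t^j)C_t^j$; if $X_{jt}=1$, an actual successful processing at~$j$ occurs and drops $\Phi$ by at least $Q_t^i-Q_t^j-1$ (the winner's queue has length $\geq Q_t^i$). Pairing the two halves as in~\eqref{eq: lower bound of a pair} produces an edge-level bound
\begin{align*}
\sum_t u_t^i+\sum_t[\text{potential drop due to server }j]
\;\geq\;(Q_{\ell w}^i-Q_{\ell w}^j)(1-\epsilon)\process_j w-\reg_i(w)-O(w).
\end{align*}
Summing over $(i,j)\in U$---vertex-disjointness prevents double-counting of the potential drops---and invoking the assumption yields $\sum_{i,t}u_t^i\geq\tfrac{1-\epsilon}{2}\sum_U(Q_{\ell w}^i-Q_{\ell w}^j)\process_j\,w-\tfrac12\sum_i\reg_i(w)-O(w^2)$, which exceeds the arrivals contribution $\sum_{i\in\source}\arrival_i w\,Q_{\ell w}^i$ by a margin proportional to~$\beta$.

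The main technical hurdle is that queue lengths can drop within a window, unlike packet ages in the bipartite analysis, so the couplings above require $Q_t^i\approx Q_{\ell w}^i$ throughout. I would address this by defining a ``good event''~$A$ paralleling Section~\ref{sec:IB-decentral}: under~$A$, each server's successful processings over the window lie in $(1\pm\epsilon)\process_j w$, each source's arrivals lie in $(1\pm\epsilon)\arrival_i w$, and every queue's regret is at most $\varphi_\delta(w)$. The choice $\delta=\beta\process_{(m)}/[96(n+m)^2]$ makes $\Pr[\bar A]$ small enough that a pessimistic bound $|\Delta\Phi|\leq O(w\sum_i Q_{\ell w}^i+w^2)$ outside~$A$ is absorbed; on~$A$ the fluctuations $|Q_t^i-Q_{\ell w}^i|$ are $O(w)$, so the edge-level couplings go through with $O(w)$ slack per edge. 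The concavity inequality $Z_{\ell+1}-Z_\ell\leq(\Phi(\mathbf{\queue}_{(\ell+1)w})-\Phi(\mathbf{\queue}_{\ell w}))/(2Z_\ell)$ then turns the bound on $\Ex{\Delta\Phi\mid\mathcal{F}_{\ell w}}$ into a drift on~$Z_\ell$, and the threshold~$b$ is chosen so that all $O(w^2)$ error terms, together with $\varphi_\delta(w)$, are dominated once $Z_\ell>b$. The factor $\process_{(m)}$ appearing in the constant~$c$ reflects the bottleneck capacity of any path draining the largest queue in the network.
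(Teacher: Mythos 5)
Your proposal is substantively correct and follows the paper's overall strategy: decompose the change in $\Phi$ into an arrival contribution and a clearing contribution, observe that a successful transfer $i\to j$ changes $\Phi$ by $Q_t^j-(Q_t^i-1)=-u_t^i+1$, invoke Assumption~\ref{assump:G-decentral} to produce a vertex-disjoint path set~$U$, pair each queue $i$'s no-regret counterfactual with the potential drop at its matched server $j$, define a high-probability concentration event, and finally show that when $Z_\ell>b$ the dominant term swamps all $O(w^2)$ slack. The closing observation that $\process_{(m)}$ enters through the bottleneck of a path draining the longest queue is exactly the idea behind the paper's auxiliary lemma.

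The one place where you deviate from the paper is the choice of dual weights: you set $\alpha_i=Q_{\ell w}^i$, whereas the paper uses $\alpha_i=\tau_i$, the number of packets still at $i$ at time $(\ell+1)w$ that had already arrived before $\ell w$. The paper's choice has the deterministic guarantee $\tau_i\leq Q_t^i\leq\tau_i+w$ for every $t$ in the window (old packets remain present throughout), which is what makes the edge-level coupling clean: for any $(i,j)$ with $\tau_i>0$, queue $i$ is guaranteed nonempty for all $w$ steps, and $Q_t^j\leq\tau_j+w$ bounds the contribution $v_t^j$. Your choice $Q_{\ell w}^i$ is $\mathcal F_{\ell w}$-measurable (arguably cleaner for the conditional-expectation step), and the inequalities $Q_{\ell w}^i-w\leq Q_t^i\leq Q_{\ell w}^i+w$ give you the same $O(w)$ per-step slack, so the bound still goes through: whenever $Q_{\ell w}^i\leq w$ the per-edge lower bound is nonpositive and the inequality is vacuous, and otherwise queue $i$ stays nonempty. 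Either weighting works; the paper's $\tau_i$ avoids a separate case split for queues that might empty, while yours avoids a window-end-dependent quantity in the dual weights. One small imprecision to flag: the per-edge error is $O(w^2)$ (the $-2w$ correction accumulated over $w$ steps), not $O(w)$, though you do correctly account for the $O(w^2)$ total later; and the threshold indicator you describe ($X_{jt}$ tied to the time-varying $Q_t^i$) should be compared against the paper's use of the fixed threshold $\tau_i$ for the same purpose — both yield the longest-queue priority coupling.
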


\begin{restatable}{lemma}{Gboundedjump}[Bounded jump condition.]
  \label{lem:bounded-jump-G}
  For each even integer $p \geq 2$, there is a
  constant $d_p$, such that for all $\ell$,
  \begin{align*}
  \Ex{|Z_{\ell+1}-Z_{\ell}|^p \given Z_0, \cdots, Z_{\ell}} \leq d_p.
  \end{align*}
\end{restatable}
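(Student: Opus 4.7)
The plan is to establish a deterministic bound on $|Z_{\ell+1} - Z_\ell|$ that depends only on $w$, $n$, and $m$ but not on $Z_\ell$ or the history; the $p$-th moment bound then follows trivially with $d_p$ equal to that deterministic bound raised to the $p$-th power.

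First, I would argue from the model that each queue $Q^i$ with $i \in \source \cup \medium$ changes by at most one per time step: node $i$ receives at most one packet (either a Bernoulli arrival if $i \in \source$, or at most one successful upstream processing, chosen by the server priority rule, if $i \in \medium$) and sends out at most one packet to a downstream server. Hence $|Q_{(\ell+1) w}^i - Q_{\ell w}^i| \leq w$ deterministically for every $i \in \source \cup \medium$.

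Next, writing $s = \ell w$, $t = (\ell+1)w$, and $\Delta Q^i = Q_t^i - Q_s^i$, the identity $Q_t^i(Q_t^i - 1) - Q_s^i(Q_s^i - 1) = 2 Q_s^i \Delta Q^i + (\Delta Q^i)^2 - \Delta Q^i$ combined with $|\Delta Q^i| \leq w$ gives $|\Phi_{\ell+1} - \Phi_\ell| \leq w \sum_i Q_s^i + O((n+m) w^2)$. Cauchy--Schwarz with $\sum_i (Q_s^i)^2 = 2 Z_\ell^2 + \sum_i Q_s^i$ yields, after solving the resulting quadratic, $\sum_i Q_s^i \leq (n+m) + \sqrt{2(n+m)}\, Z_\ell$, so $|\Phi_{\ell+1} - \Phi_\ell| \leq A Z_\ell + B$ with $A = \sqrt{2(n+m)}\, w$ and $B = O((n+m) w^2)$.

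The key step is converting this into a bound on $|\Delta Z|$. I would use two complementary inequalities simultaneously: $|\Delta Z| \leq \sqrt{|\Phi_{\ell+1} - \Phi_\ell|}$ (from $|\sqrt{a} - \sqrt{b}| \leq \sqrt{|a-b|}$), and $|\Delta Z| \leq |\Phi_{\ell+1} - \Phi_\ell|/Z_\ell$ whenever $Z_\ell > 0$ (from $\Phi_{\ell+1} - \Phi_\ell = (Z_{\ell+1} - Z_\ell)(Z_{\ell+1} + Z_\ell)$ and $Z_{\ell+1} + Z_\ell \geq Z_\ell$). A short case split on whether $Z_\ell \geq \sqrt{B}$ yields $|\Delta Z| \leq A + \sqrt{B} = O(w\sqrt{n+m})$ in both regimes, since $A\sqrt{B} = O(B)$.

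The only mild subtlety is the small-$Z_\ell$ regime, where the ratio bound $|\Phi_{\ell+1} - \Phi_\ell|/Z_\ell$ is useless; there the inequality $|\Delta Z| \leq \sqrt{|\Phi_{\ell+1} - \Phi_\ell|}$ takes over, and because $Z_\ell \leq \sqrt{B}$ forces $|\Phi_{\ell+1} - \Phi_\ell| \leq O((n+m) w^2)$, we still obtain $|\Delta Z| = O(w\sqrt{n+m})$. This is in fact simpler than the corresponding bounded-jump argument for the age-based potential in Lemma~\ref{lem:bounded-jump}, because here queue lengths change by at most one per step, whereas the ages $T^i$ there may drop by arbitrary amounts when many packets clear at once.
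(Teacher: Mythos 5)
Your proposal is correct and follows essentially the same route as the paper's proof: bound $|\Delta Q^i| \le w$ deterministically, expand $\Phi_{\ell+1}-\Phi_\ell$ to get $|\Phi_{\ell+1}-\Phi_\ell| \le w\sum_i Q_{\ell w}^i + O((n+m)w^2)$, use Cauchy--Schwarz to relate $\sum_i Q_{\ell w}^i$ to $Z_\ell$, and convert a bound on $|\Delta\Phi|$ into a bound on $|\Delta Z|$ via $\Delta\Phi = \Delta Z\,(Z_{\ell+1}+Z_\ell)$. Where you improve on the paper is the conversion step: the paper's argument divides by $2\sqrt{\Phi_{\ell w}} = 2Z_\ell$, which is undefined when $Z_\ell=0$ (e.g.\ at $\ell=0$, and more generally whenever all queue lengths are $\le 1$), so the stated bound does not actually follow in that regime. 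Your case split---using $|\Delta Z|\le\sqrt{|\Delta\Phi|}$ when $Z_\ell<\sqrt{B}$ and the ratio bound otherwise, noting $A\sqrt{B}=O(B)$---closes this gap cleanly and yields the same deterministic $O(w\sqrt{n+m})$ bound on the jump, and hence the same $d_p$, in all regimes.
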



At time step $(\ell+1) \cdot w$, let $\tau_{i}$ be the number of unprocessed packets at node $i$ which arrived before time step $\ell \cdot w$. 
Note the difference from the definition of~$\tau_i$ in the proof of Theorem~\ref{thm:IB-decentral} --- there, $\tau_i$ is the \emph{age} of the oldest packet in queue~$i$ at time~$\ell w$.
For any node $i$ and $t\in [\ell \cdot w, (\ell +1)\cdot w]$,  $\tau_{i}\leq Q_{t}^{i}\leq \tau_{i}+w$. Recall that $C_t^j$ is the indicator variable for server~$j$ succeeding in processing a packet if it picks one up, and $u_t^i$ is the utility of queue $i$ at time step $t$. 
For a server~$j$, define its \emph{contribution} $v_t^j$ to be $u_t^i$ if $j$ successfully clears a packet from queue $i$ at time step $t$, and is 0 if it fails to do so. 
Then, at any time step~$t$, $\sum_{i\in \source \cup \medium} u_t^i=\sum_{j\in \medium \cup \terminal} v_t^j$.
A key observation is that, with the new utility functions, when a packet is cleared, the decrease in potential is exactly equal to the increase in the corresponding queue's utility.
We can therefore calculate the decrease in potential function by tracking the sum of utilities of all nodes.
The following key lemma lower bounds the total utility over a time window as a function of the number of old packets ($\tau_i$'s), assuming the realized processing capacities of all servers are around their expectations. 
Its use of vertex-disjoint paths paves the way for applying Assumption~\ref{assump:G-decentral}.

\begin{lemma}
  \label{lem:lower-bound-utility}
  For any $\eps>0$, if \ $\sum_{t = \ell \cdot w}^{(\ell + 1)\cdot w - 1} C_t^j \geq (1 - \eps) \process_j w$ for each~$j \in \medium \cup \terminal$, then for any set~$\disjointpathset$ of vertex-disjoint paths, $\sum_{i \in \source \cup \medium}\sum_{t = \ell \cdot w}^{(\ell + 1)\cdot w - 1} u_t^i \geq \frac{1}{2}\sum_{(i,j)\in \disjointpathset} (\tau_{i}-\tau_{j}-w)(1-\epsilon) \process_j w- \sum_{i \in \source \cup \medium}\reg_{i}(w,(\ell + 1)\cdot w)$.
\end{lemma}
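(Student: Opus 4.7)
The plan is to mirror the argument of Lemma~\ref{lem:lower-bound-Ntau}, now exploiting the new service priority rule that prefers the longest queue and the fact that $Q_t^i \geq \tau_i$ for every $t$ in the window (any unprocessed pre-window packet is still present throughout). For each pair $(i,j) \in \disjointpathset$, I will consider queue~$i$'s counterfactual total utility $\mathrm{CF}_{i \to j} \coloneqq \sum_{t = \ell w}^{(\ell+1)w-1} u_t^i(j, \action_{-i}(t) \given \mathcal{F}_t)$ from always routing to~$j$ through the window, and use no-regret to get $U_i^{\mathrm{real}} \coloneqq \sum_{t} u_t^i \geq \mathrm{CF}_{i \to j} - \reg_i(w, (\ell+1)w)$. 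Write $V_j^{\mathrm{real}} \coloneqq \sum_t v_t^j$.

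The key new step is a case split using an indicator $X_{jt}^{(i)}$, defined to be~$1$ iff in the real sample path some queue sends a packet to~$j$ at time~$t$ with current queue length at least~$\tau_i$. Throughout the window we have $Q_t^i \geq \tau_i$ and $Q_t^j \leq \tau_j + w$. When $X_{jt}^{(i)} = 1$, the new priority rule makes~$j$ pick a packet from a queue~$k$ with $Q_t^k \geq \tau_i$, giving $v_t^j = (Q_t^k - Q_t^j) C_t^j \geq (\tau_i - \tau_j - w) C_t^j$. When $X_{jt}^{(i)} = 0$, every other real sender to~$j$ has queue length strictly below $\tau_i \leq Q_t^i$, so~$i$'s counterfactual packet is uniquely highest priority and is picked, and $u_t^i(j, \action_{-i}(t) \given \mathcal{F}_t) = (Q_t^i - Q_t^j) C_t^j \geq (\tau_i - \tau_j - w) C_t^j$. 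Summing both contributions across the window and using the hypothesis $\sum_t C_t^j \geq (1 - \eps) \process_j w$ yields
\[
    V_j^{\mathrm{real}} + \mathrm{CF}_{i \to j} \;\geq\; (\tau_i - \tau_j - w)(1 - \eps)\process_j w,
\]
hence $U_i^{\mathrm{real}} + V_j^{\mathrm{real}} \geq (\tau_i - \tau_j - w)(1 - \eps)\process_j w - \reg_i$.

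I then sum over $(i,j) \in \disjointpathset$. Vertex-disjointness ensures each $U_i^{\mathrm{real}}$ and each $V_j^{\mathrm{real}}$ appears in at most one summand, and the global identity $\sum_{i \in \source \cup \medium} U_i^{\mathrm{real}} = \sum_{j \in \medium \cup \terminal} V_j^{\mathrm{real}}$ (both count the total utility of packets cleared in the window) gives $2 \sum_i U_i^{\mathrm{real}} \geq \sum_{(i,j) \in \disjointpathset}(\tau_i - \tau_j - w)(1-\eps)\process_j w - \sum_i \reg_i$. Dividing by~$2$ yields the lemma.

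The main subtlety lies in the bookkeeping of the summation step: the global identity only bounds $\sum_{(i,j) \in \disjointpathset}(U_i^{\mathrm{real}} + V_j^{\mathrm{real}})$ by $2\sum_i U_i^{\mathrm{real}}$ after absorbing possibly negative contributions from indices outside~$\disjointpathset$ (for instance $v_t^j$ can be negative if some short queue routes to a longer server~$j$), which are controlled by per-queue no-regret bounds. A secondary technicality is that pairs with $\tau_i - \tau_j - w < 0$ require sign-tracking in the case split above, but such pairs contribute nonpositively to the lemma's right-hand side and may be dropped without weakening the bound.
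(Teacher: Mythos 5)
Your proposal is essentially the paper's own proof: the same indicator $X_{jt}^{\tau_i}$, the same two-case analysis (priority guarantees the counterfactual pickup when $X=0$; the cleared packet's utility is at least $\tau_i-\tau_j-w$ when $X=1$), the same per-pair inequality $U_i^{\mathrm{real}}+V_j^{\mathrm{real}}\geq \sum_t(\tau_i-\tau_j-w)C_t^j-\reg_i$, and the same summation over the vertex-disjoint pairs combined with $\sum_i u_t^i=\sum_j v_t^j$. The two sign issues you flag at the end (nonnegativity of the terms dropped outside $\disjointpathset$, and pairs with $\tau_i-\tau_j-w<0$, for which dropping the pair is indeed the right fix) are real technicalities that the paper's write-up glosses over, so raising them is a point in your favor rather than a gap.
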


\begin{proof}

  Any queue $i$ with $\tau_i > 0$ is non-empty throughout the time window.
  Let server~$j$ be any server to which $i$ can send requests.
Let $X_{jt}^{\tau_{i}}$ be the indicator variable for the event that \emph{some queue}~$k$ with $Q_t^k\geq \tau_i$ sends a packet to server $j$ at time step $t$. Then at time step $t$ when $X_{jt}^{\tau_{i}}=0$, node $i$'s packet would have been picked up by server $j$ had $i$ sent a request, because no other packet sent to $j$ is from a node whose queue length is at least $\tau_i$ , so the packet from $i$ has priority.  
Over the time window, node $i$'s counterfactual utility, by always sending requests to~$j$, could have been at least $\sum_{t = \ell \cdot w}^{(\ell + 1)\cdot w - 1}C_{t}^{j}(1-X_{jt}^{\tau_{i}})(Q_{t}^{i}-Q_{t}^{j})\geq \sum_{t = \ell \cdot w}^{(\ell + 1)\cdot w - 1}C_{t}^{j}(1-X_{jt}^{\tau_{i}})(\tau_i-Q_{t}^{j})$ since $Q_{t}^{i}\geq \tau_i$.
By definition of regret, we have
\begin{align*}
\sum_{t = \ell \cdot w}^{(\ell + 1)\cdot w - 1} u_t^i 
&\geq \sum_{t = \ell \cdot w}^{(\ell + 1)\cdot w - 1} C_{t}^{j}(1-X_{jt}^{\tau_{i}})(\tau_i-Q_{t}^{j})-\reg_{i}(w,(\ell + 1)\cdot w).
\end{align*}
On the other hand, whenever $X_{jt}^{\tau_{i}}C_{t}^{j}=1$, server $j$ clears a packet from a queue with at least $\tau_{i}$ packets at time step $t$. Then $v_t^j$ is at least $\tau_i-Q_{t}^{j} \geq \tau_{i}-\tau_{j}-w$ since $Q_{t}^{j}\leq \tau_j+w$.
Over the time window, $\sum_{t = \ell \cdot w}^{(\ell + 1)\cdot w - 1} v_t^j \geq \sum_{t = \ell \cdot w}^{(\ell + 1)\cdot w - 1} C_{t}^{j}X_{jt}^{\tau_{i}}(\tau_i-Q_{t}^{j})$.

For each pair of queue~$i$ with $\tau_i>0$ and server~$j$ that can serve $i$, we have
\begin{align}\label{eq: lower bound utility of a pair}
\sum_{t = \ell \cdot w}^{(\ell + 1)\cdot w - 1} u_t^i+\sum_{t = \ell \cdot w}^{(\ell + 1)\cdot w - 1} v_t^j
&\geq \sum_{t = \ell \cdot w}^{(\ell + 1)\cdot w - 1}(\tau_{i}-\tau_{j}-w)C_t^j-\reg_{i}(w,(\ell + 1)\cdot w).
\end{align}
Now we are ready to use Assumption~\ref{decentralized G stability condition}. Let $\alpha_i= \tau_i$ for each $i$. For each node $i \in \terminal$, $\alpha_i=\tau_i=0$ since terminal nodes have no queues. By Assumption \ref{decentralized G stability condition}, we can find a vertex-disjoint path $\disjointpathset$ such that 
\begin{align}\label{eq:decentralized-G condition}
     \frac{1}{2}(1-\beta)\sum_{(i,j)\in \disjointpathset}(\tau_{i}-\tau_{j})\process_{j}>\sum_{i\in \source}\tau_{i}\arrival_{i}.
 \end{align}
Now, we are ready to give a lower bound on the total utilities:
\begin{align*}
2\sum_{i\in \source \cup \medium} \sum_{t = \ell \cdot w}^{(\ell + 1)\cdot w - 1} u_t^i 
=&\sum_{j \in \source \cup \medium} \sum_{t = \ell \cdot w}^{(\ell + 1)\cdot w - 1} u_t^i+\sum_{i\in \medium \cup \terminal} \sum_{t = \ell \cdot w}^{(\ell + 1)\cdot w - 1} v_t^j\\
\geq& \sum_{(i,j)\in \disjointpathset} \left(\sum_{t = \ell \cdot w}^{(\ell + 1)\cdot w - 1} u_t^i+\sum_{t = \ell \cdot w}^{(\ell + 1)\cdot w - 1} v_t^j \right)\\
\geq&  \sum_{(i,j)\in \disjointpathset} \sum_{t = \ell \cdot w}^{(\ell + 1)\cdot w - 1}(\tau_{i}-\tau_{j}-w)C_t^j-\sum_{i \in \source \cup \medium}\reg_{i}(w,(\ell + 1)\cdot w)\\
\geq& \sum_{(i,j)\in \disjointpathset} (\tau_{i}-\tau_{j}-w)(1-\epsilon) \process_j w- \sum_{i \in \source \cup \medium}\reg_{i}(w,(\ell + 1)\cdot w),
\end{align*}
where the second inequality uses \eqref{eq: lower bound utility of a pair}, and the last inequality is due to $\sum_{t = \ell \cdot w}^{(\ell + 1)\cdot w - 1}C_t^j \geq (1-\epsilon)\process_j w$ which is the condition of the lemma.
\end{proof}

We sketch the rest of the proof, and relegate details to Appendix~\ref{sec:app-G-decentral}.
When the servers' realized processing capacities are close to their expectations (as in the condition of Lemma~\ref{lem:lower-bound-utility}) and when the queues' regrets are small (which should happen with high probability by assumption), the lower bound given by Lemma~\ref{lem:lower-bound-utility} on the utility of queues implies a lower bound on the decrease in the potential function due to packet processing (Lemma~\ref{lem:phi-decrease-G}). 
When the packet arrivals in the sources are close to their expectations, we can further bound the increase in the potential function due to packet arrival (Lemma~\ref{lem:phi-increase-G}).
We define an event~$B$ (Definition~\ref{def:eventB}), which happens with high probability (Claim~\ref{claim:eventB}), and under which all of these events (of concentration and no regret) happen.
Roughly speaking, when $B$ happens, the increase in the potential due to packet arrivals is at least offset by the decrease due to packet clearing when we use Assumption~\ref{assump:G-decentral} to generate an appropriate~$U$ in Lemma~\ref{lem:lower-bound-utility} and relate the term $\frac 1 2 \sum_{(i, j) \in U} (\tau_i - \tau_j) \process_j w$ to $\sum_{i \in \source} \tau_i \arrival_i$.
With a small probability, event~$B$ does not happen, and it is relatively straightforward to upper bound the increase in the potential in this case.

To put things formally, recall that $\tau_i$ is the length of packets in queue $i$ at time step $(\ell+1)\cdot w$, which arrived in queue $i$ before time step $\ell \cdot w$; let $\bm{\tau}=\{\tau_1, \cdots, \tau_n\}$. Given $\tau_i$ for each node $i$, by Assumption~\ref{assump:G-decentral}, let $\disjointpathset^*$ be the vertex-disjoint path such that 
\begin{align*}
\frac{1}{2}(1-\beta)\sum_{(i,j)\in \disjointpathset^*}(\tau_{i}-\tau_{j})\process_{j}>\sum_{i\in \source}\tau_{i}\arrival_i.
\end{align*}

\begin{restatable}{lemma}{Gphidecrease}
    \label{lem:phi-decrease-G}
  Under \emph{event B}, $\Phi(\mathbf{Q_{\ell \cdot w}})-\Phi(\bm{\tau}) \geq \frac{\beta}{4}\sum_{(i,j)\in \disjointpathset^*}(\tau_i-\tau_j)
  \process_j w+(1+\frac{\beta}{8})  \sum_{i\in \source} \arrival_i \tau_{i}w -\sum_{i\in \medium \cup \terminal} \process_i w^2-w$.
\end{restatable}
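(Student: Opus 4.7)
The plan is to decompose $\Phi(\mathbf{Q}_{\ell\cdot w})-\Phi(\bm{\tau})=[\Phi(\mathbf{Q}_{\ell\cdot w})-\Phi(\mathbf{Q}_{(\ell+1)\cdot w})]+[\Phi(\mathbf{Q}_{(\ell+1)\cdot w})-\Phi(\bm{\tau})]$ telescopically, bound each bracket separately, and stitch them together via Assumption~\ref{assump:G-decentral} applied at $\alpha_i=\tau_i$ (the very choice that produces $\disjointpathset^*$).

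For the first bracket, the crucial identity is that with the new queue-length-aware utility, a single successful clearing from node~$i$ by server~$j$ at step~$t$ changes the potential by exactly $-(Q_t^i-Q_t^j-1)=-(u_t^i-1)$, uniformly for $j\in\medium$ and $j\in\terminal$ (since $Q_t^j=0$ for terminals). Telescoping over the window gives $\Phi(\mathbf{Q}_{\ell\cdot w})-\Phi(\mathbf{Q}_{(\ell+1)\cdot w})=\sum_{i\in\source\cup\medium}\sum_t u_t^i - N - \Delta\Phi^{\mathrm{arr}}$, where $N$ is the total number of successful clearings and $\Delta\Phi^{\mathrm{arr}}$ collects the source-arrival increases (each at most $Q_t^i\leq\tau_i+w$). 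I will lower-bound the utility sum via Lemma~\ref{lem:lower-bound-utility} with $\disjointpathset=\disjointpathset^*$, and invoke event~$B$ to control $N\leq(1+\epsilon)\sum_j\process_jw$ and $\Delta\Phi^{\mathrm{arr}}\leq(1+\epsilon)\sum_{i\in\source}\arrival_iw(\tau_i+w)$.

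For the second bracket, $\tau_i\leq Q_{(\ell+1)\cdot w}^i$ together with a quadratic expansion gives $\Phi(\mathbf{Q}_{(\ell+1)\cdot w})-\Phi(\bm{\tau})\geq\sum_i(Q_{(\ell+1)\cdot w}^i-\tau_i)\tau_i$. For any source~$i$ with $\tau_i>0$, FIFO sending forces no new packet to have left~$i$ during the window, so $Q_{(\ell+1)\cdot w}^i-\tau_i$ equals the raw number of arrivals at~$i$, which under event~$B$ is at least $(1-\epsilon)\arrival_iw$; contributions from sources with $\tau_i=0$ and from medium nodes are nonnegative and simply dropped. This positive piece is exactly what makes the coefficient in front of $\sum\arrival_i\tau_iw$ in the stated bound come out positive rather than negative: it cancels, up to an $O(\epsilon)$ slack, the $\arrival_i\tau_iw$ component of $\Delta\Phi^{\mathrm{arr}}$ subtracted in the first bracket.

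Finally, I invoke the Assumption's strict inequality to split $\tfrac{(1-\epsilon)w}{2}\sum_{(i,j)\in\disjointpathset^*}(\tau_i-\tau_j)\process_j$ into $\tfrac{\beta w}{4}\sum(\tau_i-\tau_j)\process_j$ plus $(\tfrac{1-\epsilon}{2}-\tfrac{\beta}{4})w\sum(\tau_i-\tau_j)\process_j$, and use $\tfrac{1-\beta}{2}\sum(\tau_i-\tau_j)\process_j>\sum\arrival_i\tau_i$ to bound the second summand below by roughly $(1+\beta/4)w\sum\arrival_i\tau_i$ once $\epsilon=\Theta(\beta)$ is chosen small enough. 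After pairwise cancellation with the $O(\epsilon)$ slack from the second bracket, the coefficient degrades to $(1+\beta/8)$, and the residual $w^2$ error terms fold into $-\sum_{i\in\medium\cup\terminal}\process_iw^2$ (using $\sum_{i\in\source}\arrival_i=O(\sum_j\process_j)$, itself obtained from Assumption~\ref{assump:G-decentral} with indicator $\alphas$), while the $o(w)$ regret total folds into $-w$. The main obstacle is exactly this coordinated cancellation: the $(1+\beta/8)$ coefficient is not furnished by either bracket alone but by the careful pairing of the Assumption-driven utility split against the arrival near-cancellation, and choosing $\epsilon$ (and hence $\delta$) small enough to simultaneously preserve the coefficient and fit the stated error slack is the delicate quantitative step.
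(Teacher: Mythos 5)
Your proof takes a genuinely different decomposition from the paper's, and it has concrete gaps as written.

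The paper's proof is essentially a one-liner plus Lemma~\ref{lem:lower-bound-utility}. The ``key observation'' stated just before the lemma is that, with the queue-length-aware utility, a successful clearing decreases the potential by exactly the utility it generates, so
$\Phi(\mathbf{Q}_{\ell\cdot w}) - \Phi(\bm\tau) = \sum_{i\in\source\cup\medium}\sum_t u_t^i$
\emph{directly}, with no telescoping through $\Phi(\mathbf{Q}_{(\ell+1)\cdot w})$ and no arrival bookkeeping. Arrivals only enter Lemma~\ref{lem:phi-increase-G}, where $\Phi(\mathbf{Q}_{(\ell+1)\cdot w}) - \Phi(\bm\tau)$ is bounded from \emph{above}. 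Your route instead splits as $[\Phi(\mathbf{Q}_{\ell\cdot w})-\Phi(\mathbf{Q}_{(\ell+1)\cdot w})] + [\Phi(\mathbf{Q}_{(\ell+1)\cdot w})-\Phi(\bm\tau)]$ and has to track arrivals in both brackets, with the two arrival contributions meant to cancel. That is a legitimate alternative strategy, and your observation that the per-clearing potential drop is $u_t^i - 1$ rather than $u_t^i$ is a real subtlety the paper glosses over. But as structured, your argument does not close.

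Three concrete gaps. (i) Your second bracket needs $Q^i_{(\ell+1)\cdot w}-\tau_i\geq(1-\epsilon)\arrival_i w$, i.e.\ a \emph{lower} bound on arrivals, but event $B$ as defined (\eqref{eq:concentrate on arrival2}) only provides an upper bound. You would have to strengthen Definition~\ref{def:eventB}. (ii) Your upper bound $N\leq(1+\epsilon)\sum_j\process_jw$ on the number of clearings likewise needs an upper concentration bound on $\sum_t C_t^j$; event $B$ (\eqref{eq:concentrate on process2}) only gives a lower bound. (iii) The error budget does not fit. The claimed slack is exactly $-\sum_{j\in\medium\cup\terminal}\process_j w^2 - w$, and in the paper this is fully consumed by the $(\tau_i-\tau_j-w)$ term of Lemma~\ref{lem:lower-bound-utility} plus the regret. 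Your decomposition additionally incurs $-(1+\epsilon)\sum_{i\in\source}\arrival_i w^2$ from $\Delta\Phi^{\mathrm{arr}}$ and $-N$ with $N$ up to $mw$; even using $\sum_{i\in\source}\arrival_i<\frac{1-\beta}{2}\sum_j\process_j$ these overshoot $-\sum_j\process_jw^2-w$. Moreover the arrival near-cancellation leaves a residual $-2\epsilon\sum_{i\in\source}\arrival_i\tau_iw = -\frac{\beta}{4}\sum\arrival_i\tau_iw$, which forces you to need the pre-subtraction coefficient to be at least $1+\frac{3\beta}{8}$ rather than $1+\frac{\beta}{8}$; this happens to hold since $\frac{1-5\beta/8}{1-\beta}>1+\frac{3\beta}{8}$, but you do not verify it, and combined with (iii) the stated constants are not recovered. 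To make your route work, you would need to (a) add lower-tail concentration for arrivals and upper-tail for $C_t^j$ to event $B$, and (b) either weaken the claimed bound's error term or track the $w^2$ and $N$ contributions tightly enough to show they still fit; neither step is present in the proposal.
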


\begin{restatable}{lemma}{Gphiincrease}
  \label{lem:phi-increase-G}
  Under \emph{event B}, $\Phi(\mathbf{Q_{(\ell+1) \cdot w}})-\Phi(\bm{\tau}) \leq (1+\frac{\beta}{8}) \sum_{i\in \source} \lambda_{i}\tau_{i} w+ \sum_{i\in \source}\lambda_{i}^{2} w^{2}$.
\end{restatable}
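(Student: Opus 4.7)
The plan is to decompose the queue lengths at time $(\ell+1)w$ into old and new contributions, then use event~$B$'s concentration of source arrivals to bound the new contribution. For each node $i \in \source \cup \medium$, write $Q_{(\ell+1)w}^{i} = \tau_i + \Delta_i$, where $\Delta_i \ge 0$ counts the packets that newly arrived at $i$ during the window; the stated bound $Q_t^i \le \tau_i + w$ gives $\Delta_i \le w$. An elementary expansion of the quadratic potential then yields
\begin{align*}
  \Phi(\mathbf{Q}_{(\ell+1)w}) - \Phi(\bm{\tau}) \;=\; \sum_{i \in \source \cup \medium}\tau_i \Delta_i \;+\; \frac{1}{2}\sum_{i \in \source \cup \medium} \Delta_i(\Delta_i - 1),
\end{align*}
so it suffices to control each of these two sums.

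For a source $i \in \source$, $\Delta_i$ is bounded above by $A_i$, the Binomial$(w,\lambda_i)$ count of Bernoulli arrivals during the window. Event~$B$ is designed precisely so that, conditional on it, $A_i \le (1 + \beta/8)\lambda_i w$ for every source $i$; plugging this into the two sums above delivers the $(1+\beta/8)\sum_{i \in \source}\lambda_i \tau_i w$ cross term and the $\sum_{i \in \source}\lambda_i^2 w^2$ quadratic term (the latter absorbing the factor $\tfrac{1}{2}$ and the $-1$ shift into the constants via the squared concentration bound).

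For a middle node $m \in \medium$, the key observation is that a packet newly present at $m$ at time $(\ell+1)w$ must have entered the system at some source during the window and then been routed downstream through the DAG. I would trace each new packet back along its realized path to its source arrival; this attributes the middle-node cross-terms $\tau_m \Delta_m$ and quadratic terms $\Delta_m^2$ to source arrivals during the window. Under event~$B$, the concentration of these source arrivals at their expectations then absorbs the middle-node contribution into the source-side bound, yielding the claimed inequality.

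The main obstacle is precisely this middle-node bookkeeping: unlike the bipartite case (Lemma~\ref{lem:phi-increase}), a new packet may reach a middle node whose $\tau_m$ is already large, so the cross-term $\tau_m \Delta_m$ needs care. The technical lever is that $\Delta_m$ is bounded by the inflow from upstream and can be related through the DAG structure to source arrivals; combined with the longer-queue service priority introduced in Section~\ref{sec:G-decentral-sol}, this ensures that new packets do not accumulate at middle nodes in a way that violates the source-only form of the upper bound.
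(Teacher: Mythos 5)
Your decomposition $\Phi(\mathbf{Q}_{(\ell+1)w}) - \Phi(\bm{\tau}) = \sum_i \tau_i\Delta_i + \tfrac12\sum_i \Delta_i(\Delta_i - 1)$ is correct, and your treatment of the source terms matches the paper's, which simply uses $Q^i_{(\ell+1)w}\le \tau_i + (1+\epsilon)\lambda_i w$ for $i\in\source$ from event~$B$'s arrival concentration and then expands the quadratic. Where you diverge is in the middle-node terms: the paper's own proof writes the upper bound as a sum over $\source$ only and never addresses the nodes in $\medium$, whereas you correctly recognize that $\tau_m\Delta_m$ and $\Delta_m^2$ for $m\in\medium$ need an argument before they can be absorbed into a source-only right-hand side.

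The argument you propose for these middle-node terms does not work, however, and the gap is concrete. Your key claim is that ``a packet newly present at $m$ at time $(\ell+1)w$ must have entered the system at some source during the window.'' That is false under the definition of $\tau_m$ that the rest of the paper relies on. The paper uses, for every node $i$ and every $t$ in the window, the inequality $\tau_i \le Q^i_t \le \tau_i + w$ (stated just before Lemma~\ref{lem:lower-bound-utility} and used there as ``any queue $i$ with $\tau_i>0$ is non-empty throughout the window''). That inequality forces the reading that $\tau_m$ counts packets that \emph{arrived at node $m$} before time $\ell w$ and are still there at $(\ell+1)w$. Under that reading, $\Delta_m = Q^m_{(\ell+1)w}-\tau_m$ counts packets that reached $m$ \emph{during} the window, but such a packet may have been sitting in an in-neighbor's queue at time $\ell w$ and may have entered the system arbitrarily long before. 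So there is no Binomial$(w,\lambda)$ source-arrival event to charge it to, and the trace-back step does not produce the claimed bound. (If instead $\tau_m$ counted packets that \emph{entered the system} before $\ell w$, your claim would be literally true, but then $\tau_m \le Q^m_t$ would fail and Lemma~\ref{lem:lower-bound-utility} would break.) Your closing sentence, that the longer-queue priority rule ``ensures that new packets do not accumulate at middle nodes in a way that violates the source-only form,'' is an assertion rather than an argument, so the middle-node part of the proof is a genuine gap. Note that the paper's own proof of this lemma offers no argument for the middle-node terms either; the cancellation it implicitly relies on happens only when Lemmas~\ref{lem:phi-decrease-G} and~\ref{lem:phi-increase-G} are combined, via the decomposition of the potential change into clearings (whose per-event change equals the utility up to a unit correction) and source arrivals, rather than via a literal bound on $\Phi(\mathbf{Q}_{(\ell+1)w})-\Phi(\bm{\tau})$ term by term.
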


With a small probability, event~$B$ does not happen, and it is relatively straightforward to upper bound the increase in the potential in this case. (Most pessimistically, no packets is cleared during the time window and for each time step, there is a packet arriving at each source node, then the queue length of each source node grows by~$w$.)

\begin{restatable}{lemma}{notB}
  \label{lem:not-B}
  If event~$B$ does not happen, then $\Phi(\mathbf{Q_{(\ell+1) \cdot w}})-\Phi(\mathbf{Q_{\ell \cdot w}}) \leq \sum_{i\in \source} \queue_{\ell \cdot w}^i w +w^2$.
\end{restatable}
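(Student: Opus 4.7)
The strategy is to establish a purely deterministic upper bound that holds in every sample path, since the complement of event~$B$ rules out precisely the favorable concentration and no-regret guarantees used in Lemmas~\ref{lem:phi-decrease-G} and~\ref{lem:phi-increase-G}. The first observation is that over a window of $w$ steps, the queue length at any node $i \in \source \cup \medium$ grows by at most $w$: a source receives at most one Bernoulli arrival per step, and a middle node gains a packet only when its own server successfully clears one incoming request, which happens at most once per step. Writing $\Delta_i \coloneqq \queue_{(\ell+1)w}^i - \queue_{\ell w}^i$, we therefore have $-\queue_{\ell w}^i \leq \Delta_i \leq w$ for every $i \in \source \cup \medium$.

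Using the identity $\tfrac{1}{2}[(Q+\Delta)(Q+\Delta-1) - Q(Q-1)] = Q\Delta + \tfrac{1}{2}\Delta(\Delta-1)$, the potential change decomposes node by node as
\begin{equation*}
\potential(\mathbf{Q}_{(\ell+1)w}) - \potential(\mathbf{Q}_{\ell w}) \;=\; \sum_{i \in \source \cup \medium}\Bigl[\queue_{\ell w}^i\,\Delta_i + \tfrac{1}{2}\Delta_i(\Delta_i - 1)\Bigr].
\end{equation*}
A short check using $\Delta_i \geq -\queue_{\ell w}^i$ shows that any node with $\Delta_i \leq 0$ contributes nonpositively to the sum, so these terms may be dropped in an upper bound. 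For nodes with $\Delta_i > 0$, the summand is at most $\queue_{\ell w}^i w + \tfrac{1}{2}w(w-1)$.

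The last step---and the main obstacle---is to localize all contributions to source nodes. The intended worst case, as hinted in the lemma's preamble, is that no packet is processed during the entire window, so $\Delta_i = 0$ for $i \in \medium$ while $\Delta_i = w$ for $i \in \source$; this directly yields the stated bound. To argue that no alternative sample path does worse, the natural tool is packet conservation: packets enter $\source \cup \medium$ only through source arrivals, so $\sum_{i \in \source \cup \medium} \Delta_i \leq \sum_{i \in \source} w$, and any growth $\Delta_j > 0$ at a middle node $j$ is paid for by matching packet movements from upstream in which $\queue^i$ drops by the same amount. Each such per-step move $i \to j$ changes the combined contribution to $\potential$ by $\queue_t^j - \queue_t^i + 1$, which under the longest-queue service rule is typically nonpositive, so these moves can only further reduce the sum. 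The cleanest formalization is probably a telescoping over the per-step packet moves within the window, paired with the per-node bound above restricted to sources. Once this collapsing is carried out, the $w^2$ term on the right-hand side absorbs both the residual $\tfrac{1}{2}w(w-1)$ slack and any constant overhead from the per-step move bookkeeping.
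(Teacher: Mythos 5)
Your route is the same as the paper's: decompose the potential change node by node, note that every queue grows by at most $w$ over the window, and argue that the worst case is ``all sources grow by $w$, nothing is cleared.'' The paper's own proof is exactly this two-line worst-case assertion, and your per-node algebra (the identity $\frac12[(Q+\Delta)(Q+\Delta-1)-Q(Q-1)]=Q\Delta+\frac12\Delta(\Delta-1)$, the check that nodes with $\Delta_i\le 0$ contribute nonpositively, the bound $Q_{\ell w}^i w+\frac12 w(w-1)$ for growing nodes) is correct and cleaner than what the paper writes down.

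However, the step you yourself flag as the ``main obstacle'' --- localizing all positive contributions to \emph{source} nodes --- is left genuinely open, and the mechanism you invoke for it is the wrong one. A middle node $j$ with $\Delta_j>0$ contributes up to $Q_{\ell w}^j w$, which is not dominated by $\sum_{i\in\source}Q_{\ell w}^i w$; your packet-conservation inequality bounds $\sum_i\Delta_i$ but not the weighted sum $\sum_i Q_{\ell w}^i\Delta_i$. Your telescoping-over-moves idea is the right shape: a successful move $i\to j$ changes $\Phi$ by $Q_t^j-Q_t^i+1$, and this is nonpositive precisely when the \emph{sending} queue is at least as long as the receiving one. But that is not guaranteed by the longest-queue \emph{service priority} rule, which only governs which request a server picks among those it receives; it is guaranteed only by the queues' incentives under the new utilities, and Lemma~\ref{lem:not-B} covers exactly the event where the no-regret guarantee has failed, so one cannot appeal to rational behavior there. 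The paper silently assumes clearing can only decrease the potential (``the worst case is no packets cleared''); to make either proof airtight one must either impose the sending restriction $Q_t^i\ge Q_t^j$ as part of the model, or keep the middle-node terms and state the bound with $\sum_{i\in\source\cup\medium}$, which is in fact all that the downstream use in Lemma~\ref{lem:negative-drift-G} requires. Separately, note that summing your per-source slack gives $\frac n2 w(w-1)$, not $w^2$; the paper's displayed inequality has the same missing factor of $n/2$, which is harmless for the drift argument but means the constant in the statement should really be $nw^2$.
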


Lemma~\ref{lem:negative-drift-G} follows from combining Lemma \ref{lem:phi-decrease-G}, \ref{lem:phi-increase-G} and~\ref{lem:not-B}.

\section{Patient Queueing Model}
\label{sec:patient}

In this section we extend a model introduced in \citep{gaitonde2021virtues}, where queues do not vary their routing policies from step to step, but evaluate the utility of a fixed routing policy over a long period of time.
On complete bipartite graphs, \citeauthor{gaitonde2021virtues} showed that Nash equilibria always exist in the resulting game, and that a system is stable under \emph{any} Nash as long as it can be stable under a central policy with $\frac e {e - 1}$ times as much arrival rates.
We obtain a similar result for incomplete bipartite graphs, but the factor in our bicriteria result worsens to~2.
We leave for future work to decide whether this factor can be improved.

Below we first describe the model in more detail, before presenting our result.


\subsection{Model Description}
\label{sec:patient-model}

A bipartite \emph{Patient Queueing Model} has the same packet arrival, routing and processing procedures as in a Queue-B model described in Section~\ref{sec:prelim}; the servers' priority rule is to pick the oldest packet.  
The main difference is that the queues are ``patient'': each queue fixes a routing strategy in the form of a distribution over the servers it can reach, and evaluates its utility/cost over a long time period.  
Formally, the strategy space of queue~$i$ is 
 the simplex over the servers $i$ can send requests to: $\Delta_i \coloneqq \Delta(\outneighbor(i))$.
By adopting a strategy $p_i \in \Delta_i$, queue~$i$ in each round samples a server according to the distribution given by~$p_i$, independently of all history and other happenings in the system, and sends a request to the sampled server if its queue is non-empty.
Let $\stratsmi$ be the strategies of the queues other than~$i$, then the \emph{cost} of queue~$i$ for using strategy~$\strati$ is 
$c_i(p_i, \stratsmi) \coloneqq \lim_{t \to \infty} \frac{T_t^i}{t}$, where $\age_t^i$ is the age of the oldest packet in queue~$i$ at time step~$t$.
Each queue aims to minimize its cost.  
A strategy profile $\mathbf{p}$ is a \emph{Nash equilibrium} if for each queue~$i$, $p_i \in \argmin_{p{'} \in \Delta_i} c_i(p{'}, \stratsmi)$.


\citet{gaitonde2021virtues} considered a special case of the Patient Queueing Model, where the underlying bipartite graph is complete, i.e., $E = \source \times \terminal$. 
 They gave an algorithm that, given a strategy profile~$\mathbf{p}$, computes an $r_{i}(\mathbf{p})$ for each queue~$i$, with $r_{i}(\mathbf{p})$ equal to $c_{i}(\mathbf{p})$ almost surely. 
This algorithm played a crucial role in the derivation of their main result.
The algorithm extends directly to our more general setting, and is also a key step in our result in this section.
We present this algorithm next.


\subsection{Gaitonde and Tardos's Algorithm for Computing Costs}
\label{sec:patient-alg}

Algorithm~\ref{alg:compute-cost} is 
a straightforward generalization (to general bipartite graphs) of \citeauthor{gaitonde2021virtues}'s algorithm for computing the queues' costs.  
We give some rough intuition here.
Thanks to the service priority rule, 
in the long run, the queues are tiered according to the rates of growth of their lengths: the faster growing ones have higher priority over the slower growing ones.
Determining which queues grow the fastest is like a self-fulfilling prophecy: a group of queues grow the fastest even when they have the highest priority.
The algorithm enumerates all possible ``first tier'' queues, and finds the one that fulfills the ``prophecy.'' 
It then continues to find lower tiers, assuming that all higher tiers have priority.
Nailing down this intuition involves intricate probabilistic arguments, and is a major technical accomplishment of \citep{gaitonde2021virtues}.

\RestyleAlgo{ruled}
\begin{algorithm}
\caption{Computing the queues' costs given their strategies}
\label{alg:compute-cost}
Input: Queueing system $(\source \cup \terminal, E, \arrivals, \processes)$,  strategy profile $\mathbf{p}$\\
$I \gets \source, k \gets 1$\\
\While {$I \neq \emptyset$}{
  Compute for each $Q \subseteq I$, 
  $ f(Q)=\frac{\sum_{j=1}^{m}\process_j(1-\prod_{i\in Q}(1-p_{ij}))}{\sum_{i\in Q} \arrival_i}$. \\
  Let $Q_k$ be the minimizer of $f(Q)$, breaking ties in favor of larger cardinality.
\eIf{$f(Q_{k})\geq 1$}{For any $i\in I$, output $r_i(\mathbf{p})=0$, terminate.}
{For each $i\in Q_{k}$,  $r_i(\mathbf{p})=1-f(Q_{k})$;\\
Update $\process_j \gets \process_j \prod_{i\in Q_{k}}(1-p_{ij}), I\gets I\setminus Q_k, k \gets k+1$.}
}
Output: a sequence of queueing groups $Q_1,\cdots,Q_K$ and $r_i(\mathbf{p})$ for each queue $i$.
\end{algorithm}

In the step that picks $Q_k$, there is no ambiguity because the union of minimizers of $f(Q)$ can be shown to be another minimizer (see Lemma~\ref{closure property}).
The following results are straightforward generalizations of corresponding results in \citep{gaitonde2021virtues}.  We state them without proofs.
Theorem \ref{thm:cost} and~\ref{thm:nash-exists} correspond to Theorem 4.1 and 3.3 in \citep{gaitonde2021virtues}, respectively.
Theorem~\ref{thm:q1} is a generalization of Lemma 3.3 and Theorem 4.4 in \citep{gaitonde2021virtues}.

\begin{theorem}
  \label{thm:cost}
For any strategy profile $\mathbf{p}$, $c_i(\mathbf{p})=\lim_{t \to \infty} \frac{T_t^i}{t}=r_i(\mathbf{p})$ almost surely.
\end{theorem}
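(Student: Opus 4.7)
The plan is to follow closely the proof of Theorem~4.1 in~\citep{gaitonde2021virtues}, proceeding by induction on the tier index~$k$, and to verify that each step of that argument carries over when some coordinates $p_{ij}$ are forced to zero by the absence of edges. A preliminary observation is that the quantity $f(Q)$ depends on the strategies only through the coordinates $p_{ij}$ with $(i, j) \in E$, so running Algorithm~\ref{alg:compute-cost} on a general bipartite graph is the same as running it on the complete bipartite graph with strategies restricted to the corresponding face of the simplex. In particular, the closure property (Lemma~\ref{closure property}) that guarantees a well-defined largest minimizer $Q_k$ at each step depends only on the submodular-like structure of $Q \mapsto \sum_j \process_j\bigl(1 - \prod_{i \in Q}(1 - p_{ij})\bigr)$, which is unaffected by which $p_{ij}$ are zero, so the algorithm is well defined and outputs a finite sequence $Q_1, \ldots, Q_K$.

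For the base case $k=1$, I would reproduce the fluid-limit argument that, on an event of full probability, after some finite random time every queue in $Q_1$ has a non-empty backlog that grows linearly. On that event, whenever any queue in $Q_1$ sends a request to server~$j$ (an event of probability $1 - \prod_{i \in Q_1}(1 - p_{ij})$ in each step by independence of the mixed strategies), server~$j$ serves a packet from $Q_1$ because the top-tier queues have priority under the oldest-packet rule. Summing over $j$, the aggregate throughput of $Q_1$ is exactly the numerator of $f(Q_1)$, so the aggregate growth rate is $\sum_{i \in Q_1} \arrival_i\bigl(1 - f(Q_1)\bigr)$. The key non-trivial claim, which I would prove by an exchange argument leveraging the tie-break rule (any $Q' \subsetneq Q_1$ satisfies $f(Q') > f(Q_1)$), is that this deficit is spread \emph{uniformly}: each queue in $Q_1$ has $T_t^i / t \to 1 - f(Q_1)$, since otherwise one could extract a strict sub-collection of $Q_1$ whose $f$-value is no larger, contradicting the strict minimality of $Q_1$.

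For the inductive step, I would condition on the full-probability event that $Q_1, \ldots, Q_{k-1}$ are permanently non-empty and thus always pre-empt server~$j$ whenever any of them routes to~$j$. The conditional dynamics on $I \setminus \bigcup_{\ell<k} Q_\ell$ are then those of a queueing system on the remaining nodes with processing rates $\process_j \prod_{i \in \bigcup_{\ell<k} Q_\ell}(1 - p_{ij})$, because $\prod(1-p_{ij})$ is exactly the probability that no higher-tier queue grabs server~$j$ in a given step. Applying the base-case argument to this reduced system produces $c_i(\mathbf{p}) = 1 - f(Q_k) = r_i(\mathbf{p})$ for each $i \in Q_k$. If eventually the algorithm terminates with $f(Q_k) \geq 1$, the residual system on $I$ is stable in the sense of Definition~\ref{def:stability} under the induced independent mixed policy (by Theorem~\ref{thm:queue-b-centralized}'s sufficiency direction), and a routine argument shows $T_t^i/t \to 0$ almost surely for every remaining $i$.

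The main obstacle, inherited from~\citep{gaitonde2021virtues}, is rigorously establishing eventual permanent non-emptiness of each tier and uniformity of the growth rate inside a tier; this requires delicate martingale concentration together with a coupling to an idealized system where higher tiers never empty. The graph incompleteness, however, intervenes only through the zero coordinates of $\mathbf{p}$ and never through the probabilistic machinery, so I expect to carry those lemmas over verbatim once the service-capacity accounting is replaced by $\process_j\bigl(1 - \prod_{i \in Q}(1 - p_{ij})\bigr)$ with the understanding that $p_{ij} = 0$ for $(i,j) \notin E$.
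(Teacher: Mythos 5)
The paper does not actually prove Theorem~\ref{thm:cost}; it states it without proof, explicitly noting that it (together with Theorems~\ref{thm:nash-exists} and~\ref{thm:q1}) is a ``straightforward generalization'' of Theorem~4.1 in~\citep{gaitonde2021virtues}. Your proposal is exactly the substantiation of that assertion: you trace the tier-by-tier induction, the closure property, the fluid-limit base case, the reduced system in the inductive step, and the stable remainder, and you correctly identify that the incompleteness of the graph enters only via forcing certain $p_{ij}$ to zero (restricting strategies to a face of the simplex), which leaves the probabilistic machinery untouched. So your approach coincides with the route the paper implicitly endorses. One small slip in your parenthetical: if $Q' \subsetneq Q_1$ it need \emph{not} be that $f(Q') > f(Q_1)$ --- $Q_1$ is chosen as the \emph{maximum-cardinality} minimizer precisely because minimizers form a lattice, so proper subsets of $Q_1$ can be tight with $f(Q') = f(Q_1)$; the uniformity-of-growth-rate argument in Gaitonde--Tardos rests on an age-based equalization/coupling argument rather than strict suboptimality of all proper subsets. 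This does not affect the soundness of the overall plan, but the exchange argument as you phrased it would need repair.
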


\begin{theorem}
  \label{thm:nash-exists}
  If the cost function of each queue~$i$ is defined to be $r_i(\strats)$, 
  then every system of the Patient Queueing Model admits a Nash Equilibrium.
\end{theorem}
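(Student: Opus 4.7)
The plan is to apply Kakutani's fixed point theorem to the best-response correspondence, following closely the argument of \citet{gaitonde2021virtues} for Theorem 3.3 and checking that the generalization to incomplete bipartite graphs introduces no new obstacles. The strategy space of each queue is $\Delta_i = \Delta(\outneighbor(i))$, which is a non-empty compact convex simplex, and the joint space $\prod_i \Delta_i$ is therefore non-empty, compact, and convex. It suffices to show that the best response correspondence $\mathrm{BR}_i(\stratsmi) \coloneqq \argmin_{\strati \in \Delta_i} r_i(\strati, \stratsmi)$ is non-empty, convex-valued, and upper hemicontinuous.

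The first step is to establish structural properties of the output of Algorithm~\ref{alg:compute-cost}. For a fixed $\stratsmi$, I would view $r_i(\cdot, \stratsmi)$ as a function of $\strati$. Each $f(Q)$ appearing in the algorithm has numerator $\sum_j \process_j(1-\prod_{i'\in Q}(1-p_{i'j}))$ that is linear in $\strati$ when $\strati$ is the only variable (since $\prod_{i' \in Q}(1-p_{i'j})$ is either a constant or proportional to $(1-p_{ij})$, depending on whether $i\in Q$), while the denominator $\sum_{i'\in Q}\arrival_{i'}$ does not depend on $\strati$. Hence each candidate $1-f(Q)$ is linear in $\strati$. The closure property stated after the algorithm (Lemma~\ref{closure property} in the paper) ensures the unique maximum-cardinality minimizer $Q_k$ is well-defined, and the partition $Q_1,\dots,Q_K$ depends piecewise-constantly on $\strati$ across finitely many polyhedral regions. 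Inside each region, $r_i$ is the maximum of the linear functions $1-f(Q_k)$ over the tier index $k$ that contains $i$, hence is convex and continuous in $\strati$ on each region.

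The second step is to verify that $r_i(\strati,\stratsmi)$ is jointly continuous and quasi-convex in $\strati$. Continuity across region boundaries follows because at a boundary two candidate sets $Q$ achieve the same $f$-value, so the tie-breaking does not change the value of $r_i$, only the bookkeeping; upper hemicontinuity of the induced correspondence $\stratsmi\mapsto\mathrm{BR}_i(\stratsmi)$ is then a routine consequence of Berge's maximum theorem. Quasi-convexity of $r_i(\cdot,\stratsmi)$ in $\strati$ (hence convex-valuedness of $\mathrm{BR}_i$) is the main structural fact to be verified: one would argue, as in \citet{gaitonde2021virtues}, that as $\strati$ is varied along a line, the tier $Q_{k(\strati)}$ containing~$i$ changes monotonically, and within each piece $r_i$ is affine, so the sublevel sets are intervals. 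Combining non-emptiness (from compactness and continuity), convex-valuedness, and upper hemicontinuity, Kakutani's fixed point theorem yields a fixed point $\strats^\ast$ of the joint best-response correspondence, which is a Nash equilibrium.

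The hard part will be the quasi-convexity step, because a single perturbation of $\strati$ can both reshuffle the tier membership of $i$ and alter the effective processing rates inherited by later tiers, so the piecewise-affine description of $r_i$ has to be shown to have a monotone, single-crossing structure along any line in $\Delta_i$. I would approach this by tracking, as $\strati$ moves along a segment, how the minimizer of $f(Q)$ in the iteration that first captures $i$ evolves, and invoking the fact that $1-f(Q)$ is affine in $\strati$ to conclude that the upper envelope defining $r_i$ is convex; the generalization from complete bipartite to arbitrary bipartite causes no issue beyond restricting every relevant sum to $j\in \outneighbor(i)$, which is already built into the support of $\strati$.
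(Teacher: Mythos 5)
The paper does not actually prove this theorem; it is stated without proof as a ``straightforward generalization'' of Theorem~3.3 in Gaitonde and Tardos~\cite{gaitonde2021virtues}, so your proposal is not being measured against a paper-internal argument but against the reference proof the paper defers to. Your outline follows the canonical Debreu--Fan--Glicksberg route (compact convex strategy simplices, continuity and own-quasi-convexity of the costs, Kakutani on the best-response correspondence), which is the natural template here, and your closing observation --- that passing from complete to general bipartite graphs only amounts to restricting supports to $\outneighbor(i)$ --- is exactly the point behind the paper's claim that the generalization is routine.

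That said, as a proof rather than a plan, there is a real gap, and you flag it yourself: the quasi-convexity of $r_i(\cdot,\stratsmi)$ is never established. The claim that ``the tier $Q_{k(\strati)}$ containing $i$ changes monotonically'' along any segment in $\Delta_i$ is a conjecture, not an argument, and it is precisely this structural fact that makes the whole approach go through; if it failed, convex-valuedness of the best response would fail and Kakutani would not apply. A smaller but real imprecision: within a region of fixed tier structure, $r_i$ is not ``the maximum of the linear functions $1-f(Q_k)$ over the tier index $k$ that contains $i$'' --- queue $i$ sits in exactly one tier $Q_k$, and $r_i=\max(0,\,1-f(Q_k))$ for that single $k$; there is no max over tiers, so the convexity-by-upper-envelope shortcut is not available across tiers. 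Finally, for Berge's maximum theorem you need $r_i$ jointly continuous in $(\strati,\stratsmi)$, while your continuity discussion only addresses continuity in $\strati$ across region boundaries for fixed $\stratsmi$. These are the nontrivial lemmas underlying the cited result; filling them in is the actual work.
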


\begin{theorem}\label{thm:oldest group}
  \label{thm:q1}
  Given a queueing system and a strategy profile~$\mathbf{p}$, let $Q_1$ be the first group of queues output by Algorithm~\ref{alg:compute-cost}.
  If $f(Q_1)>1$, then the queueing system is stable under~$\strats$.
\end{theorem}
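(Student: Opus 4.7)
The plan is to establish strong stability via Pemantle's Theorem~\ref{thm:pemantle} applied to $Z_\ell = \sqrt{\Phi(\mathbf{T}_{\ell w})}$, with the quadratic potential $\Phi(\ages) = \tfrac{1}{2}\sum_i \arrival_i T_t^i(T_t^i - 1)$ from Section~\ref{sec:IB-decentral}. The structural hypothesis $f(Q_1) > 1$ will play the role that Assumption~\ref{decentralized IB stability condition} plays there, and direct probability calculations under the fixed patient strategies $\strats$ will substitute for the no-regret counterfactuals.

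First I would extract a uniform slack: since $Q_1$ minimizes $f$ and there are only finitely many subsets of $\source$, $f(Q_1) > 1$ gives some $\beta > 0$ with
$$
\sum_j \process_j\Bigl(1 - \prod_{i \in Q}(1 - p_{ij})\Bigr) \;\geq\; (1+\beta)\sum_{i \in Q}\arrival_i \quad \text{for every } \emptyset \neq Q \subseteq \source.
$$
Partitioning time into windows of length $w$, and letting $J_\tau$ denote the set of queues that carry a $\tau$-old packet at the end of a window (as in the proof of Lemma~\ref{lem:lower-bound-Ntau}), each $i \in J_\tau$ is non-empty throughout the window and independently targets server $j$ with probability $p_{ij}$ at every step. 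Thus the probability that some queue in $J_\tau$ sends to $j$ is $1 - \prod_{i \in J_\tau}(1 - p_{ij})$, and under the oldest-first priority rule the $\tau$-old packets from $J_\tau$ beat anything any queue outside $J_\tau$ can still deliver. Taking expectations and using the uniform slack yields $\mathbb{E}[N_\tau] \geq (1+\beta)\sum_{i \in J_\tau}\arrival_i\cdot w - o(w)$, the patient-model analog of Lemma~\ref{lem:lower-bound-Ntau} without a regret term.

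Summing over $\tau$ then reproduces the potential-decrease and potential-increase estimates of Lemmas~\ref{lem:phi-decrease} and~\ref{lem:phi-increase}, producing a net drift of order $-\beta \sum_i \arrival_i T_{\ell w}^i \cdot w + O(w^2)$ on the high-probability event that server successes and source arrivals concentrate around their means (handled by a Chernoff bound), with a worst-case Lemma~\ref{lem:not-A}-type bound absorbing the complementary event. The bounded-jump condition carries over verbatim from Lemma~\ref{lem:bounded-jump}. Verifying both conditions of Theorem~\ref{thm:pemantle} then yields $\mathbb{E}[Z_\ell^a] \leq C_a$ for every $a>0$, from which strong stability of the total queue $Q_t$ follows via $Q_t^i \leq T_t^i$.

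The main obstacle I anticipate is justifying the priority claim that the clearings attributed to server $j$ during a window really come from $J_\tau$: a queue outside $J_\tau$ may transiently hold packets older than $\tau$ during the window that get cleared by its end, and these would preempt $J_\tau$ packets at $j$. Controlling this cross-interference requires either a Loynes-style coupling with the saturated system (where queues never run dry and priority becomes a deterministic function of $\strats$), or a bookkeeping argument charging cross-interference against arrivals at non-$J_\tau$ queues and showing these are $o(w)$ relative to the dominant term $(1+\beta)\sum_{i\in J_\tau}\arrival_i w$. Once this is resolved, the rest of the argument is mechanical and mirrors Section~\ref{sec:IB-decentral}.
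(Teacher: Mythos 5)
The paper states this theorem without proof, citing it as a direct generalization of Lemma~3.3 and Theorem~4.4 in Gaitonde and Tardos (2021), so there is no in-text argument to compare against; your Pemantle-plus-potential plan is the natural route and, to my understanding, is essentially the route the cited result takes.

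Two points about the sketch. The obstacle you flag at the end---cross-interference from queues outside $J_\tau$---is not actually an obstacle, and you can drop the suggested coupling and bookkeeping entirely. When a queue $k\notin J_\tau$ sends server~$j$ a packet that is older than the $\tau$-old packet sent by some $i\in J_\tau$, the preempting packet is itself $\tau$-old (it arrived even earlier than $\ell w - \tau$), so its clearance contributes to $N_\tau$: recall $N_\tau$ counts \emph{all} $\tau$-old clearances, not only those from queues in~$J_\tau$. Hence the inequality
$$N_\tau \;\geq\; \sum_{j\in[m]}\;\sum_{t=\ell w}^{(\ell+1)w-1} C_t^j\,\indicator\bigl[\exists\, i\in J_\tau:\ a_i(t)=j\bigr]$$
already holds deterministically on every sample path, with no further justification needed.

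The genuine gap is one sentence earlier: "the probability that some queue in $J_\tau$ sends to $j$ is $1-\prod_{i\in J_\tau}(1-p_{ij})$" treats $J_\tau$ as a fixed set, but $J_\tau$ is measurable with respect to the within-window randomness (the samples $a_i(t)$ and the server successes $C_t^j$), so you cannot condition on its realization and then treat the $a_i(t)$'s as fresh independent draws---and "taking expectations" does not repair this, since what the drift argument needs is a bound holding for the \emph{realized} $J_\tau$ on a high-probability event. The standard fix, which plays the role that the no-regret guarantee plays in Lemma~\ref{lem:lower-bound-Ntau}, is a Chernoff-plus-union-bound over all candidate subsets $S\subseteq\source$: for a fixed $(j,S)$ the summands $C_t^j\,\indicator[\exists\,i\in S:a_i(t)=j]$ are i.i.d.\ Bernoulli with mean $\process_j\bigl(1-\prod_{i\in S}(1-p_{ij})\bigr)$, so the event that this sum is at least $(1-\eps)$ times $w$ times its mean \emph{for all} $(j,S)$ simultaneously has probability at least $1-m\,2^n e^{-\Omega(\eps^2 w)}$. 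On that event, plugging $S=J_\tau$ into the display above and then applying your extracted slack $f(J_\tau)\geq 1+\beta$ gives the required sample-path lower bound on $N_\tau$ for every $\tau$; Lemmas~\ref{lem:phi-decrease}--\ref{lem:not-A} and Pemantle then carry through as you outline. The $2^n$ factor is harmless since $n$ and $m$ are fixed and $w$ may be taken large. With these two repairs the plan is sound.
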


\subsection{Price of Anarchy in Patient Queueing Model}
\label{sec:patient-poa}

Our bicriteria result for general bipartite graphs is presented again in a form more comparable to the dual form of conditions for centralized stability (see Lemma~\ref{lem:IB-dual}).

\begin{theorem}
  \label{thm:patient}
  Given a bipartite queueing system $(V, E, \arrivals, \processes)$, if for any $\boldsymbol{\alpha}=(\alpha_1,\cdots,\alpha_n)\in \{0,1\}^n$, there is a matching matrix $M$, such that $\frac{1}{2} \boldsymbol{\alpha}^\top M \processes > \boldsymbol{\alpha}^\top \arrivals$, then the system is stable in any Nash Equilibrium in the patient queue model.
\end{theorem}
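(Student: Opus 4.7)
The plan is to argue by contradiction: assume $\strats$ is a Nash equilibrium whose induced system is unstable, and exhibit a strictly improving unilateral deviation. Under such a Nash, Theorem~\ref{thm:cost} yields $r_i(\strats) > 0$ for some queue, so Algorithm~\ref{alg:compute-cost} does not terminate at its first iteration: the first group $Q_1$ satisfies $f(Q_1) < 1$, and each $i \in Q_1$ has cost $1 - f(Q_1)$.

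First I would apply the hypothesis with $\alphas = \indicator_{Q_1}$ to obtain a matching matrix $M$ with $\tfrac{1}{2}\sum_{(i,j) \in M,\, i \in Q_1} \process_j > \sum_{i \in Q_1} \arrival_i$. Setting $\sigma_j \coloneqq 1 - \prod_{i \in Q_1}(1 - p_{ij})$, the definition of $f$ gives $\sum_j \process_j \sigma_j = f(Q_1) \sum_{i \in Q_1} \arrival_i$. Since any two pairs in $M$ touch distinct servers, combining these identities with $f(Q_1) < 1$ yields
\[
\sum_{(i,j) \in M,\, i \in Q_1} \process_j(1 - \sigma_j) \;>\; (2 - f(Q_1)) \sum_{i \in Q_1} \arrival_i \;>\; f(Q_1) \sum_{(i,j) \in M,\, i \in Q_1} \arrival_i.
\]
Averaging over matched pairs then produces some $(i^*, j^*) \in M$ with $i^* \in Q_1$ satisfying $\process_{j^*}(1 - \sigma_{j^*}) > f(Q_1)\,\arrival_{i^*}$. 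Intuitively, server $j^*$ carries enough unused capacity that queue $i^*$ could benefit by redirecting all of its flow there.

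Next I would analyze $i^*$'s unilateral deviation to the pure strategy $p'$ placing all mass on $j^*$; let $f^{\textnormal{new}}$ denote the value function under the profile $(p', \strats_{-i^*})$. Using that $Q_1$ minimizes $f$ among subsets of $\source$, one obtains the auxiliary bound $\sum_{j \neq j^*} \process_j\, p_{i^* j} \prod_{i \in Q_1 \setminus \{i^*\}}(1 - p_{ij}) \leq f(Q_1)\,\arrival_{i^*}$; a short calculation comparing $\sum_j \process_j \sigma_j^{\textnormal{new}}(Q_1)$ to $\sum_j \process_j \sigma_j$ then shows
\[
f^{\textnormal{new}}(Q_1) - f(Q_1) \;\geq\; \frac{\process_{j^*}(1 - \sigma_{j^*}) - f(Q_1)\,\arrival_{i^*}}{\sum_{i \in Q_1} \arrival_i} \;>\; 0.
\]

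The hard part will be translating $f^{\textnormal{new}}(Q_1) > f(Q_1)$ into the cost inequality $r_{i^*}(p', \strats_{-i^*}) < 1 - f(Q_1)$, which would contradict the Nash property and close the proof. The difficulty is that the new maximal minimizer $Q_1^{\textnormal{new}}$ output by Algorithm~\ref{alg:compute-cost} need not coincide with $Q_1$, so a priori $f^{\textnormal{new}}(Q_1^{\textnormal{new}})$ might still be no larger than $f(Q_1)$. The key observation to leverage is that for every $Q \subseteq \source \setminus \{i^*\}$ one has $f^{\textnormal{new}}(Q) = f(Q)$, since the deviation only changes coordinates involving $i^*$. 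Combined with the closure-under-union property of minimizers (Lemma~\ref{closure property}) and the maximality of the original $Q_1$ among minimizers of $f$, a case split on whether $i^* \in Q_1^{\textnormal{new}}$ together with an inductive analysis through the peeling steps of the new algorithm should show that at the iteration in which $i^*$ is eventually processed, the corresponding $f^{\textnormal{new}}$-value strictly exceeds $f(Q_1)$, yielding the desired strict improvement.
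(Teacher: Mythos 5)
Your steps (1)--(6) are sound, and the averaging over the matching to produce a pair $(i^*, j^*)$ with $\process_{j^*}(1-\sigma_{j^*}) > f(Q_1)\arrival_{i^*}$, together with the calculation $f^{\text{new}}(Q_1) > f(Q_1)$, are all correct. But the acknowledged ``hard part'' in step (7) is not just a missing detail: as stated, the pure-strategy deviation of $i^*$ to $j^*$ may genuinely make $i^*$ \emph{worse} off, so no contradiction would follow. Concretely, consider any $Q \ni i^*$ with $Q\setminus\{i^*\} \not\subseteq Q_1$ (hence not tight, so $f(Q\setminus\{i^*\}) > f(Q_1)$, possibly only slightly) that contains some queue $r$ with $p_{rj^*}$ near $1$. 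Writing $f^{\text{new}}(Q)$ as the mediant of $f(Q\setminus\{i^*\})$ and $\process_{j^*}\prod_{r\in Q\setminus\{i^*\}}(1-p_{rj^*})/\arrival_{i^*}$, the second component can be made arbitrarily small and the first only slightly above $f(Q_1)$, so $f^{\text{new}}(Q)$ can drop strictly below $f(Q_1)$. Then $Q_1^{\text{new}} \ni i^*$ has a lower value than $f(Q_1)$, and the deviation increases $i^*$'s cost; your bound $f^{\text{new}}(Q_1) > f(Q_1)$ alone does not control this.

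The repair is to use an infinitesimal rather than pure deviation, $p'_{i^*} = (1-\eps)p_{i^*} + \eps e_{j^*}$. For any \emph{tight} $Q \ni i^*$ (necessarily $\subseteq Q_1$ by Lemma~\ref{closure property} and maximality), $\process_{j^*}\prod_{r\in Q\setminus\{i^*\}}(1-p_{rj^*}) \ge \process_{j^*}(1-\sigma_{j^*}) > f(Q_1)\arrival_{i^*}$, while the mediant identity $f(Q)=f(Q\setminus\{i^*\})\oplus \sum_j p_{i^*j}\process_j\prod_{r\in Q\setminus\{i^*\}}(1-p_{rj})/\arrival_{i^*}$ with $f(Q)\le f(Q\setminus\{i^*\})$ gives $\sum_j p_{i^*j}\process_j\prod_{r\in Q\setminus\{i^*\}}(1-p_{rj}) \le f(Q_1)\arrival_{i^*}$; subtracting shows the first-order change of $f(Q)$ in $\eps$ is strictly positive. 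For small $\eps$, all $Q\ni i^*$ then have $f^{\text{new}}(Q) > f(Q_1)$, and the (needed and nontrivial) monotonicity of the peeled group values in Algorithm~\ref{alg:compute-cost} yields $r_{i^*}^{\text{new}} < 1-f(Q_1)$, closing the contradiction. This infinitesimal form is exactly the first-order Nash condition the paper extracts; from there the paper avoids a contradiction argument entirely, instead combining $f(Q_1)\ge \bigl(\sum_i \process_{s_i}-\sum_i\process_{s_i}\prod_{r\in Q_1}(1-p_{rs_i})\bigr)/\sum_i\arrival_i$ with the claim~\eqref{eq:marginal ratio}, $f(Q_1)\ge \sum_i\process_{s_i}\prod_{r\in Q_1}(1-p_{rs_i})/\sum_i\arrival_i$, and adding the two to get $2f(Q_1)>2$. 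That route stays entirely inside the original profile and never needs to analyze the peeling structure under a deviated strategy, which is where your argument runs into trouble.
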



We remark that repeatedly playing a Nash Equilibrium strategy profile $\mathbf{p}$ may not be no-regret strategies (see Example~\ref{ex:nash-regret}), therefore Theorem~\ref{thm:patient} is not implied by our results in Section~\ref{sec:IB}.

\begin{proof}
Let $\strats$ be any Nash equilibrium in a system satisfying the condition in the theorem.
By Theorem \ref{thm:oldest group}, we only need to consider the first group of queues $Q_1$ output by Algorithm~\ref{alg:compute-cost} and prove $f(Q_1)> 1$. 
Let $\bm{\alpha}$ be the indicator vector for the set $Q_1$, i.e., $\alpha_i=1$ if $i\in Q_1$, and $\alpha_i=0$ otherwise. 
By assumption, there is a matching matrix $\matchingmatrix$ with $\frac{1}{2} \boldsymbol{\alpha}^\top M \processes > \boldsymbol{\alpha}^\top \arrivals$. 
For each $i\in Q_1$, let $s_i$ matched to~$i$ by~$M$, i.e., 
$M(i,s_{i})=1$;
if no server is matched to~$i$, 
let $s_i=0$ and set $\process_{0}=0$. 
Then we have 
$ \frac{1}{2}\sum_{i\in Q_1}\process_{ s_{i}} >\sum_{i\in Q_1}\arrival_i$.
By definition of $f(Q_1)$,
\begin{align}
    f(Q_1)
    &=\frac{\sum_{j=1}^{m}\process_j(1-\prod_{i\in Q_1}(1-p_{ij}))}{\sum_{i\in Q_1} \arrival_i} 
    \geq \frac{\sum_{i\in Q_1}\process_{ s_{i}}(1-\prod_{i\in Q_1}(1-p_{is_{i}}))}{\sum_{i\in Q_1} \arrival_i} \nonumber \\
     &=\frac{\sum_{i\in Q_1}\process_{s_{i}}}{\sum_{i\in Q_1} \arrival_i} - \frac{\sum_{i\in Q_1}\process_{s_{i}}\prod_{i\in Q_1}(1-p_{is_{i}})}{\sum_{i\in Q_1} \arrival_i} \label{eq:primal ratio},
\end{align}
where in the inequality, we drop the contribution of processing powers from servers not matched to any queue in~$Q_1$.
We claim 
\begin{align} \label{eq:marginal ratio}
    f(Q_1)\geq \frac{\sum_{i\in Q_1}\process_{s_i} \prod_{i\in Q_1}(1-p_{is_i})}{\sum_{i\in Q_1} \arrival_i}.
\end{align}
Combining \eqref{eq:primal ratio} and \eqref{eq:marginal ratio}, we immediately have 
\begin{align}
    f(Q_1) \geq \frac{1}{2} \cdot \frac{\sum_{i\in Q_1}\process_{s_i}}{\sum_{i\in Q_1} \arrival_i} >1.
\end{align}
By Theorem \ref{thm:oldest group}, the queueing system is stable since $f(Q_1)>1$.

It remains to prove \eqref{eq:marginal ratio}.
We say a subset $Q$ of queues is \emph{tight} if $f(Q)=f(Q_1)$. 
By Algorithm~\ref{alg:compute-cost}, tight subsets are minimizers of the function~$f$.
In the following, we use Nash property. For any queue~$i \in Q_1$, given $p_{-i}$, $p_{i}$ is the strategy that locally maximizes the minimization of  all subset $Q \ni i$, where tight subsets including $i$ are minimizers. If queue $i$ has some deviation from server $j$ to server $j'$ causing $f$ value of all tight subsets including $i$ increase, then it means that this new strategy has a lower aging rate compared to Nash strategy profile, which violates Nash property. 
Since $\mathbf{p}$ is a Nash Equilibrium, for each queue~$i \in Q_1$, for any two servers $j$ with $p_{ij}>0$ and~$j{'}$ that can serve queue~$i$, there is a tight subset~$Q$ such that decreasing $p_{ij}$ and increasing  $p_{ij'}$ by the same (small enough) amount does not increase $f(Q)$, 
\hufu{this is unclear.  why?  Do you mean $Q \ni i$?  Are we considering $i \in Q_1$?  do we need to require $p_{ij} > 0$?  The reader has to guess what other information is missing..}
which means $\frac{\partial f(Q)}{\partial p_{ij}} \geq\frac{\partial f(Q)}{\partial p_{ij'}}$. 
Writing these partial derivatives and dropping the same denominators on both sides, we have
\begin{align*}
    \process_j\prod_{r\in Q\setminus i}(1-p_{rj})\geq
    \process_{j{'}}\prod_{r\in Q \setminus i}(1-p_{rj{'}}).
\end{align*}
Apply this argument to for each $i \in Q_1$ with server~$s_i$ taking the place of $j'$, we have that, for every server $j$ with $p_{ij}>0$, there is a tight subset $Q_i^j$ such that
\begin{align}\label{eq:Nash property}
    \process_j\prod_{r\in Q_i^j\setminus i}(1-p_{rj})\geq
    \process_{s_{i}}\prod_{r\in Q_i^j\setminus i}(1-p_{rs_{i}}).
\end{align}

\citet{gaitonde2021virtues} showed that nonempty intersections of tight subsets are also tight:
\begin{lemma}(Lemma 3.2 in \cite{gaitonde2021virtues})\label{closure property}
If $Q_i$ and $Q_j$ are tight subsets and $Q_i \cap Q_j \neq \emptyset$, then $Q_i \cap Q_j$ and $Q_i \cup Q_j$ are also tight.
\end{lemma}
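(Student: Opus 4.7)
The plan is to extract two structural properties of $f = g/h$, where $g(Q) \coloneqq \sum_j \process_j\bigl(1 - \prod_{i\in Q}(1-p_{ij})\bigr)$ is the numerator and $h(Q) \coloneqq \sum_{i\in Q} \arrival_i$ is the denominator: namely, $h$ is modular (trivially additive), and $g$ is submodular. With these two facts in hand, closure of the tight family under union and intersection (given a common element) falls out from a short chain of (in)equalities that is forced to hold with equality throughout.

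Submodularity of $g$ is the main step. For each server $j$, write $\bar P_j(Q) \coloneqq \prod_{i\in Q}(1-p_{ij})$. Matching multiplicities in the products immediately yields the identity $\bar P_j(Q_i)\bar P_j(Q_j) = \bar P_j(Q_i\cup Q_j)\bar P_j(Q_i\cap Q_j)$. Combined with the monotonicity chain $\bar P_j(Q_i\cap Q_j) \geq \max(\bar P_j(Q_i),\bar P_j(Q_j)) \geq \min(\bar P_j(Q_i),\bar P_j(Q_j)) \geq \bar P_j(Q_i\cup Q_j)$, submodularity of $1-\bar P_j$ reduces to the elementary fact that, for positive reals with $ab = cd$ and $c \geq \max(a,b)$, one has $a + b \leq c + d$. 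This follows because $x \mapsto x + ab/x$ is increasing on $[\sqrt{ab},\infty)$ and $c \geq \max(a,b) \geq \sqrt{ab}$. Multiplying by $\process_j \geq 0$ and summing over $j$ preserves submodularity, so $g$ is submodular.

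With $g$ submodular and $h$ modular, let $f^* = f(Q_i) = f(Q_j)$ be the common tight value, so $g(Q_i) = f^* h(Q_i)$, $g(Q_j) = f^* h(Q_j)$, and $g(Q) \geq f^* h(Q)$ for every nonempty $Q \subseteq I$. Chaining these,
\begin{align*}
g(Q_i) + g(Q_j) &\geq g(Q_i\cup Q_j) + g(Q_i\cap Q_j) \geq f^*\bigl(h(Q_i\cup Q_j) + h(Q_i\cap Q_j)\bigr) \\
&= f^*\bigl(h(Q_i) + h(Q_j)\bigr) = g(Q_i) + g(Q_j).
\end{align*}
Every inequality must then be an equality. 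The hypothesis $Q_i \cap Q_j \neq \emptyset$ gives $h(Q_i \cap Q_j) > 0$, so the ratios $f(Q_i \cap Q_j)$ and $f(Q_i \cup Q_j)$ are well-defined and both equal $f^*$, proving that both sets are tight. The main obstacle is strictly the submodularity step; once the elementary hyperbola fact is in hand, the rest is bookkeeping on the chained inequalities.
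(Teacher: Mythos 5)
The paper does not prove this lemma itself; it cites Lemma~3.2 of Gaitonde and Tardos (2021), so there is no in-paper argument to compare against. Your proof is correct, and it is the standard ``submodular numerator over modular denominator'' closure argument. The submodularity of $g$ is established correctly: the multiplicative identity $\bar P_j(Q_i)\bar P_j(Q_j) = \bar P_j(Q_i\cup Q_j)\bar P_j(Q_i\cap Q_j)$ holds by matching multiplicities, the monotonicity chain $\bar P_j(Q_i\cap Q_j) \geq \max(\bar P_j(Q_i),\bar P_j(Q_j))$ holds because each factor is in $[0,1]$, and the hyperbola inequality you invoke is equivalent to the factorization $(c-a)(c-b)\geq 0$, which is perhaps worth stating as it sidesteps any discussion of the case $c=0$. (Alternatively, $1 - \prod_{i\in Q}(1-p_{ij})$ is a coverage-type function and its submodularity could simply be cited.) The chained-inequality step is sound: since $\arrival_i>0$ for all $i$, the slacks $g(Q_i\cup Q_j)-f^*h(Q_i\cup Q_j)$ and $g(Q_i\cap Q_j)-f^*h(Q_i\cap Q_j)$ are both well-defined and nonnegative, with $Q_i\cup Q_j$ nonempty because $Q_i$ is and $Q_i\cap Q_j$ nonempty by hypothesis, so forcing their sum to zero forces each to zero individually.
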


Define $Q_i=\cap_{j:p_{ij}>0} Q_i^j$. 
$Q_i$ is non empty since $i \in Q_i^j$ for each $j$. Therefore, $Q_i$ is also tight. 
By definition of $f(Q)$, the numerator of $f(Q_i)$ is $\sum_{j=1}^{m}\process_j(1-\prod_{r\in Q_i}(1-p_{ij}))$ and the numerator of $f(Q_i \setminus i)$ is $\sum_{j=1}^{m}\process_j(1-\prod_{r\in Q_i \setminus i}(1-p_{ij}))$. The difference between these two numerators is :
\begin{align*}
  &\sum_{j=1}^{m}\process_j(1-\prod_{r\in Q_i}(1-p_{ij}))-\sum_{j=1}^{m}\process_j(1-\prod_{r\in Q_i \setminus i}(1-p_{ij}))\\
=&\sum_{j=1}^{m} \process_j\left(\prod_{r\in Q_i \setminus i}(1-p_{ij})-\prod_{r\in Q_i}(1-p_{ij})\right)\\
=&\sum_{j=1}^{m} \process_j p_{ij}\prod_{r\in Q_i \setminus i}(1-p_{ij})
\end{align*}
The difference between denominators of $f(Q_i)$ and $f(Q_i \setminus i)$ is $\arrival_i$. Then, we have 
 \hufu{why?}
\begin{align*}
    f(Q_i)=f(Q_i\setminus i)\oplus \frac{\sum_{j=1}^{m} p_{ij}\process_j\prod_{r\in Q_i\setminus i}(1-p_{ij})}{\arrival_i},
\end{align*}
where operation $\oplus$ is defined as $\frac{a}{b}\oplus \frac{c}{d}\coloneqq \frac{a+c}{b+d}$ when $b+d \neq 0$.
Since $Q_i$ is tight, $f(Q_i) \leq f(Q_i\setminus i)$, which means, 
\begin{align*}
    f(Q_i)
    &\geq \frac{\sum_{j=1}^{m} p_{ij}\process_j
    \prod_{r\in Q_i\setminus i}(1-p_{rj})}{\arrival_i}\\
    &\geq \frac{\sum_{j=1}^{m}p_{ij}\process_j
    \prod_{r\in Q_i^j\setminus i}(1-p_{ij})}{\arrival_i}\\
     &\geq \frac{\sum_{j=1}^{m} p_{ij} \process_{s_{i}}\prod_{r\in Q_i^{j}\setminus i}(1-p_{i s_{i}})}{\arrival_i}\\
     &\geq \frac{\process_{s_i} \prod_{i \in Q_1}(1-p_{is_i})}{\arrival_i},
\end{align*}
 where the first inequality holds because if $A=B \oplus C$ and $A\leq B$, then $A\geq C$,  the second inequality is true because $Q_i \subseteq Q_{i}^{j}$, the third inequality follows from~$\eqref{eq:Nash property}$, and the last inequality holds because $Q_{i}^{j} \subseteq Q_1$ and $\sum_{j=1}^{m} p_{ij}=1$.
 Let $a_i=\process_{s_i} \prod_{i \in Q_1}(1-p_{is_{i}})$. 
 We know $f(Q_1)=f(Q_i)\geq \frac{a_i}{\arrival_i}$ for each $i$. Then,
 \begin{align*}
f(Q_1) &\geq \max_{i \in Q_1} \frac{a_i}{\arrival_i} 
\geq \frac{\sum_{i \in Q_1} a_i}{\sum_{i \in Q_1} \arrival_i }\\
&=\frac{\sum_{i\in Q_1}\process_{s_i} \prod_{i\in Q_1}(1-p_{is_i})}{\sum_{i\in Q_1} \arrival_i}.
 \end{align*}
This concludes the proof of \eqref{eq:marginal ratio}.
\end{proof}

\section{Conclusion}
\label{sec:conclusion}
In this work generalize the decentralized queueing systems proposed by Gaitonde and Tardos \cite{gaitonde2020stability,gaitonde2021virtues}.
In the appendix,
we also consider two more variants of the model with multiple layers, when packet arrivals are adversarial instead of probabilistic, and when packets themselves (rather than the queues they are from) determine which servers can process them.
We show that the bicriteria results under no regret strategies are robust against these model modifications.
Lastly, we provide a slightly tighter analysis for the Queue-CB model of~\cite{gaitonde2020stability}
in the full version of the paper.

Since the bicriteria result in~\cite{gaitonde2020stability} for queues using no regret strategies is tight even for complete bipartite graphs, our results for the more general cases are also tight.  
On the other hand, we do not know if the factor 2 is tight in our result for patient queues in general bipartite graphs.  
It seems challenging to directly apply the deformation technique developed in~\cite{gaitonde2021virtues} in this more general setting; we leave for future work to investigate the tight bound of this problem.


\bibstyle{plain}
\bibliography{ref}

\appendix

\section{Missing Proofs from Section~\ref{sec:IB-central}}
\label{sec:app-IB-central}
\IBcentral*
\begin{proof}
We first show sufficiency of the condition. 
By Birkhoff-von Neumann Theorem, any fractional matching matrix~$P$ can be decomposed as a convex combination of matching matrices.
By drawing from this distribution a matching at each time step and letting each queue send its packet to the server it is matched to, there is never contention at any server, and each queue~$i$'s packet is cleared with probability $(P\processes)_i > \lambda_i$.
The random variables $\{Q_t^i\}_t$ clearly satisfies both the bounded jump condition and the negative drift condition in Theorem~\ref{thm:pemantle}, and stability follows.
 
Next, we show the condition is also necessary for the system's stability.  Suppose the condition does not hold, we show that the system cannot be stable under any central policy.
Fix a $T$ large enough.
Let $X_T^i$ denote the total number of arrivals seen by queue~$i$ at the end of time step~$T$, then with constant probability (where the constant depends on $\lambda_i$ but not on~$T$), $X_T^i > \arrival_i T + \sqrt T$, which follows from either the central limit theorem or properties of binomial distributions.
By the independence of arrivals at all queues, with constant probability~$c_0$ (where $c_0$ depends on $\arrival_1, \cdots, \arrival_n$ but not on~$T$), for every~$i$, we have $X_T^i > \arrival_i T + \sqrt T$.
  Denote this event by $A_T$.
  As we have argued, it is without loss of generality to assume that a central policy samples a matching at time step~$t$, where the distribution may depend on $\history_t$.
  Let $M_t$ be the corresponding matching matrix, and let $P$ be $\frac 1 T \sum_{t = 1}^T M_t$, then $M$ is a fractional matching matrix with probability 1.
  Let $P$ be $\Ex{M \given A_T}$.  Then $P$ is a fractional matching matrix.
  By assumption, there exists a queue~$i^*$ for which $(P \processes)_{i^*} \leq \arrival_{i^*}$.
  Then the expected length of queue~$i$ at the end of time~$T$ is at least 
  \begin{align*}
    \Prx{A_T} \cdot \Ex{Q_T^{i^*} \given A_T} & \geq c_0 \left( \Ex{X_T^{i^*} \given A_T} - \Ex{\sum_{t = 1}^T (M_t \processes)_{i^*} \given A_T} \right) \\
    & > c_0 (\arrival_{i^*} T + \sqrt T - (P \processes)_{i^*} T) > c_0\sqrt T.
  \end{align*}
  In the second inequality, we make use of the fact that, at any time step, whether a server succeeds in processing a packet if it picks one up is independent from the routing decisions. 

  This calculation is for every $T$ large enough, so $\Ex{Q_T} > c_0 \sqrt T$ for each large enough~$T$.  
  This shows that the system is not stable.
\end{proof}

\IBdual*

\begin{proof}
For any fractional matching matrix $\probabilitymatrix$, there is a probability vector $\boldsymbol{x}$ over matching matrices, such that $\probabilitymatrix=\sum_{i} x_{i}\matchingmatrix_{i}$. Then, condition (1) can be written as:
\begin{align*}
    \sum_{i}x_{i}\matchingmatrix_{i} \boldsymbol{\process} &\succ \boldsymbol{\arrival}\\
    \sum_{i}x_{i} &\leq 1
\end{align*}
Let $\alpha_i$ denote the dual variable for the first inequality and $\beta$ denote the dual variable for the second inequality. According to Farkas' Lemma, the above linear system is feasible if and only if the following linear system is infeasible:
\begin{equation}\label{dual infeasible}
\begin{aligned}
    \beta-\boldsymbol{\alpha}^\top M \processes&\geq 0, \quad \forall \matchingmatrix \in \matchingmatrixset \\
    \beta-\boldsymbol{\alpha}^\top \arrivals&<0
\end{aligned}
\end{equation}
We can see linear system \eqref{dual infeasible} is infeasible if and only if there is a matching matrix $M$, such that $\boldsymbol{\alpha}^\top M \processes > \boldsymbol{\alpha}^\top \arrivals$, which is condition (2). Both condition (1) and condition (2) hold if and only if linear system \eqref{dual infeasible} is infeasible, then condition (1) and (2) are equivalent.
\end{proof}

\section{An example from Section~\ref{sec:IB-central}}
\label{ex:fractional-dual}
\label{sec:app-example}
\begin{figure}[H]
    \centering
    \includegraphics[scale=0.27]{
    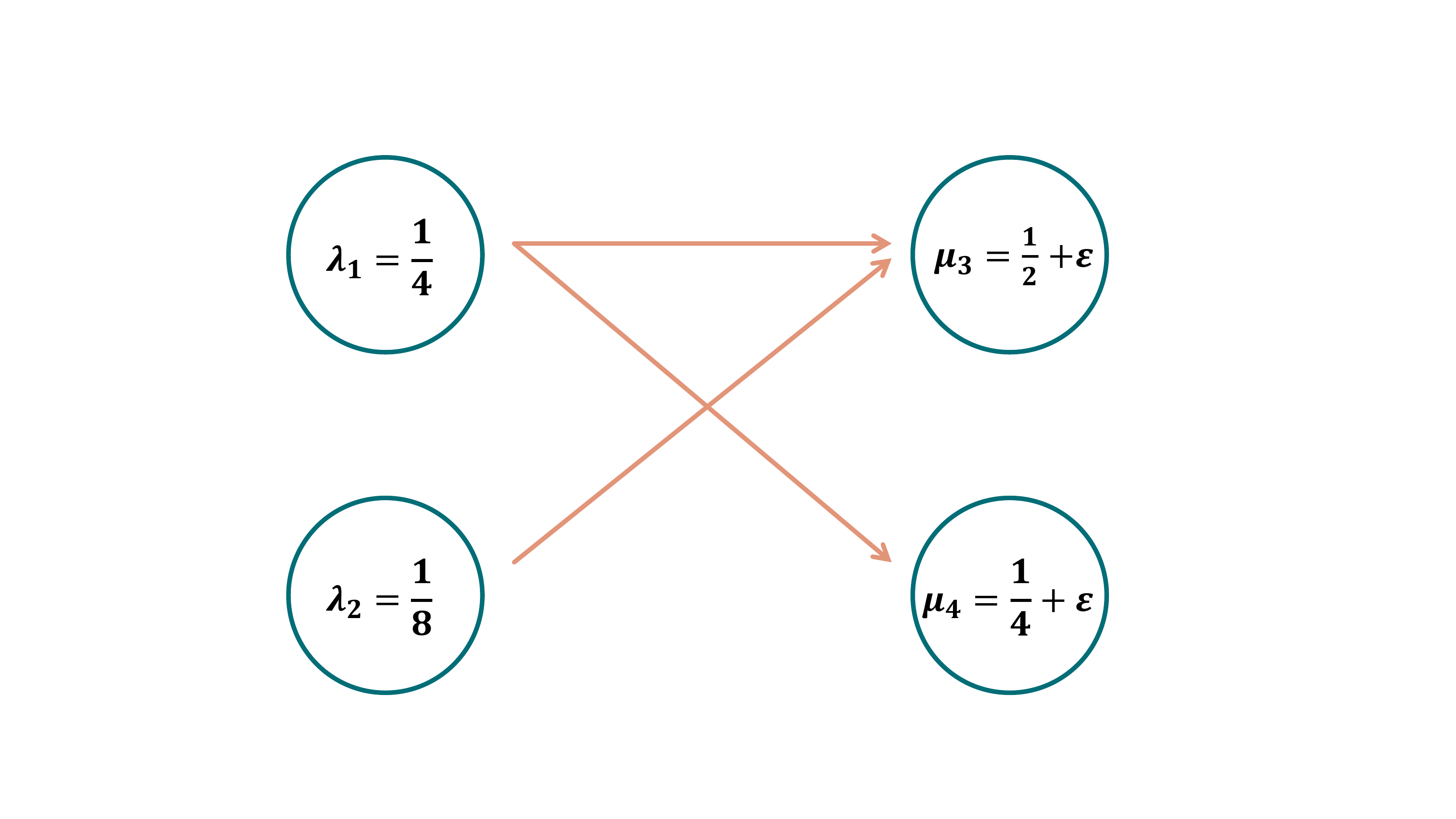}
    \caption{A queueing system with two queues and two servers. $\epsilon$ is sufficiently small.}
    \label{fig:B-incomplete}
\end{figure}
\begin{example}
The queueing system is shown in Figure \ref{fig:B-incomplete}. 
It is easy to verify that it satisfies Assumption~\ref{assump:IB}, which is for any $\boldsymbol{\alpha}\in \{0,1\}^2$, there is a matching matrix $M$, such that $\frac 1 2 \boldsymbol{\alpha}^\top M \processes >(1+\epsilon) \boldsymbol{\alpha}^\top \arrivals$.
For example, when $\bm{\alpha}=[1,1]^\top$, matching matrix 
\begin{equation*}
  \matchingmatrix=
  \left[
  \begin{array}{cc}
0 & 1\\
1 & 0
  \end{array}
  \right]
\end{equation*}
, then $\frac{3}{8}+\epsilon=\frac 1 2 \boldsymbol{\alpha}^\top M \processes >(1+\epsilon) \boldsymbol{\alpha}^\top \arrivals=\frac{3}{8}+\frac{3}{8}\epsilon$.

However, when $\bm{\alpha} \notin \{0,1\}^2$, for example $\bm{\alpha}=[1,\frac{1}{2}]^\top$, 
matching matrix 
\begin{equation*}
  \matchingmatrix=
  \left[
  \begin{array}{cc}
0 & 1\\
1 & 0
  \end{array}
  \right]
\end{equation*}
is the one that maximizes $\boldsymbol{\alpha}^\top M \processes$.
we can see $ \boldsymbol{\alpha}^\top M \processes= \frac{1}{2}(1+3\epsilon)$ and  $\boldsymbol{\alpha}^\top \arrivals= \frac{5}{16}$. Then, $\frac{1}{2}\boldsymbol{\alpha}^\top M \processes < \boldsymbol{\alpha}^\top \arrivals$.
\end{example}

\section{Missing Proofs from Section~\ref{sec:IB-decentral}}
\label{sec:app-IB-decentral}

For any queue~$i$ and $s \in [w]$, during time window $[\ell \cdot w, (\ell+1) \cdot w-1]$, let $G_{i,s}$ be the random variable that the time interval window between arrival of the $(s-1)$-th packet and arrival the $s$-th packet in queue~$i$. 
Then $G_{i,s}$ obeys the geometric distribution with parameter~$\arrival_i$. During the time window, when queue $i$ cleared the $s$-th packet, then its age decreases by $G_{i,s}$ without counting aging since time passed by.
Recall that for server~$j$, $C_t^j$ is the random variable indicating whether the server can successfully process a packet if there is a request at time step $t$.
Each $C_t^j$ is from a Bernoulli distribution with parameter~$\process_j$.

\begin{definition}
  \label{def:eventA}
  For fixed $\eps, \eps_1 > 0$, event $A$ happens when the following inequalities are all true:
  \begin{align}
    &k- \epsilon_1 w  \leq \sum_{s=1}^{k} \arrival_i G_{i,s} \leq k+\epsilon_1 w, \quad  \forall i \in [n] , \forall k \in [w] \label{eq:concentrate on arrival 1}\\
    &\sum_{t = \ell \cdot w}^{(\ell + 1)\cdot w - 1} C_t^j  \geq (1-\epsilon)\process_j w, \quad  \forall j \in [m] \label{eq:concentrate on process 1}\\
    &\sum_{t = \ell \cdot w}^{(\ell + 1)\cdot w - 1} \reg_i(w,(\ell + 1)\cdot w)+n \leq \frac{\epsilon \arrival_{(n)}w}{2}  \label{eq:small no regret 1}
\end{align}
\end{definition}

\begin{claim}
  \label{claim:eventA}
  Under Assumption~\ref{assump:IB}, the parameters in Definition~\ref{def:eventA} and the no-regret strategies (Definition~\ref{def:no-regret}) can be chosen so that event $A$ happens with probability at least $1-\frac{\beta}{64}$.
\end{claim}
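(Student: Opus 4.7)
The strategy is a standard union bound: show each of the three defining inequalities of event $A$ fails with only small probability, then sum the failure probabilities. The parameters $\epsilon, \epsilon_1 > 0$ (which were introduced in Definition~\ref{def:eventA}) and the window length $w$ will be chosen at the end; only $\delta = \beta/(128n)$ is already fixed by the statement of Theorem~\ref{thm:IB-decentral}. I will allocate a failure budget of roughly $\beta/192$ to each of the three events so that the total is at most $\beta/64$.

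For inequality \eqref{eq:concentrate on arrival 1}, for any fixed queue $i$ and index $k \in [w]$, the sum $\sum_{s=1}^k G_{i,s}$ is a sum of $k$ i.i.d.\ geometric random variables with mean $1/\lambda_i$, so $\lambda_i \sum_{s=1}^k G_{i,s}$ has mean $k$. Using the equivalence with negative binomial / binomial tails (the event $\sum_{s\le k} G_{i,s} > T$ is the same as ``fewer than $k$ successes among $T$ Bernoulli$(\lambda_i)$ trials''), a standard Chernoff bound gives a two-sided deviation bound of the form $\exp(-\Omega(\epsilon_1^2 w))$ for each pair $(i,k)$, with the constant depending on $\lambda_i$. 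A union bound over all $n$ queues and all $k \in [w]$ loses a factor $nw$, still yielding a failure probability that shrinks to $0$ as $w \to \infty$ once $\epsilon_1$ is fixed. So we can make the failure probability of \eqref{eq:concentrate on arrival 1} at most $\beta/192$.

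For inequality \eqref{eq:concentrate on process 1}, each $\sum_{t} C_t^j$ is a binomial with mean $\mu_j w$, so a one-sided Chernoff bound gives failure probability at most $\exp(-\Omega(\epsilon^2 \mu_j w))$ for each server $j$. Union bounding over the $m$ servers, the failure probability of \eqref{eq:concentrate on process 1} can also be made at most $\beta/192$ by choosing $w$ large enough (with $\epsilon$ fixed). For inequality \eqref{eq:small no regret 1}, the no-regret hypothesis with $\delta = \beta/(128 n)$ and Definition~\ref{def:no-regret} guarantee that for each queue $i$, $\reg_i(w,(\ell+1)\cdot w) \le \varphi_\delta(w)$ with probability at least $1 - \delta$; union bounding over the $n$ queues, with probability at least $1 - \beta/128$, we have $\sum_{i=1}^{n} \reg_i(w,(\ell+1)\cdot w) \le n \varphi_\delta(w)$. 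Since $\varphi_\delta(w) = o(w)$ while $\epsilon \lambda_{(n)} w / 2$ is linear in $w$, choosing $w$ sufficiently large ensures $n \varphi_\delta(w) + n \le \epsilon \lambda_{(n)} w / 2$, so \eqref{eq:small no regret 1} holds on this event.

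Summing the three failure bounds and using $\tfrac{\beta}{192} + \tfrac{\beta}{192} + \tfrac{\beta}{128} < \tfrac{\beta}{64}$ (after slightly tightening the first two to, say, $\beta/256$ each, or equivalently shrinking $\delta$ by a constant factor absorbed into the $O(1)$ of the no-regret definition), we conclude $\Prx{A} \ge 1 - \beta/64$. The main subtlety is the order of quantifiers: $\epsilon$ and $\epsilon_1$ must be chosen \emph{first} (they will be fixed once and for all in the proof of Lemma~\ref{lem:negative-drift}, where they determine the drift constant), and only then $w$; because the failure probabilities for \eqref{eq:concentrate on arrival 1}--\eqref{eq:concentrate on process 1} decay exponentially in $w$ while the regret condition \eqref{eq:small no regret 1} requires only $w$ large enough relative to the sublinear $\varphi_\delta$, this ordering is consistent and the claimed probability bound follows.
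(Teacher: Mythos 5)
Your proposal is correct and follows essentially the same route as the paper: a union bound over the three inequalities of event $A$, with Chernoff-type exponential tails for \eqref{eq:concentrate on arrival 1} and \eqref{eq:concentrate on process 1} (the paper cites Corollary~6.7 of \citet{gaitonde2020stability} for the geometric sums where you rederive the bound via the negative-binomial/binomial equivalence, which is fine) and the no-regret guarantee with $\delta=\frac{\beta}{128n}$ union-bounded over the $n$ queues for \eqref{eq:small no regret 1}. Note only that your initial budget of $\frac{\beta}{192}$ for each concentration event does not actually sum below $\frac{\beta}{64}$ (since $\frac{\beta}{192}+\frac{\beta}{192}+\frac{\beta}{128}=\frac{7\beta}{384}>\frac{\beta}{64}$), but your parenthetical correction to $\frac{\beta}{256}$ each matches the paper's allocation exactly and closes the gap.
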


\begin{proof}
The deviation of $\max_k \sum_{s=1}^{k} \arrival_i G_{i,s}^t$ and $\sum_{t=1}^{w} S_j^t$ from their expectations has exponential tail bounds by Corollary~6.7 in \citep{gaitonde2020stability} and Chernoff bound, respectively.
By setting $\delta$ in Definition~\ref{def:no-regret} to be $\frac{\beta}{128n}$, $\epsilon=\frac{\beta}{8}$, $\epsilon_1=\frac{\epsilon \arrival_{(n)}}{2n}$, and for large enough $w$, we can make \eqref{eq:concentrate on arrival 1} and \eqref{eq:concentrate on process 1} hold with probability at least $1-\frac{\beta}{256}$ each, and \eqref{eq:small no regret 1} holds with probability at least $1-\frac{\beta}{128}$. 
Then by the union bound, event~$A$ happens with probability at least $1-\frac{\beta}{64}$.
\end{proof}

\hufu{The language is not quite right.  What we are assuming are what happens in a time window, from $\ell w$ to $(\ell+1)w$, not from time 1 to time~$w$.}

Hereafter we fix $\delta, \eps$ and~$\eps_1$ as in the proof of Claim~\ref{claim:eventA}, and prove the two conditions for $(Z_\ell)_{\ell}$.

\phidecrease*

\begin{proof}
Recall that $G_{i,s}$ is the time interval between the arrival of two adjacent packets. When clearing the $s$-th $\tau$-old packet from queue $i$, the decrease in $T_t^i$ is $G_{i,s}$, which implies the decrease in $\potential_\tau$ is $\arrival_i G_{i,s}$. We don't count the last clearing packets of each queue $i$ since the new age of queue will decrease below $\tau$ and the decrease in $\potential_\tau$ will be less than $G_{i,s}$. Then, we have
\begin{align*}
\Phi_{\tau}(\mathbf{T_{\ell \cdot w}})-\Phi_{\tau}(\bm{\tau}) &\geq  \sum_{i=1}^n \arrival_i \sum_{s=1}^{N_{\tau}^i-1} G_{i,s}\\
&\geq \sum_{i=1}^n (N_{\tau}^i-1- \arrival_i\epsilon_1 w) \\
&\geq N_{\tau}-n-n\epsilon_1 w \\
&\geq \frac{1-\epsilon}{1-\beta} \sum_{ i \in J_{\tau}} \arrival_i w-\sum_{i=1}^n \reg_i(w, (\ell+1) \cdot w)-n-n\epsilon_1 w\\
&\geq \frac{1-2\epsilon}{1-\beta} \sum_{ i \in J_{\tau}} \arrival_i w,
\end{align*} 
where the second inequality uses \eqref{eq:concentrate on arrival 1}, the fourth inequality uses Lemma~\ref{lem:lower-bound-Ntau} and the last inequality is due to \eqref{eq:small no regret 1} and the definition of $\epsilon_1$.

By definition of potential function, we have
\begin{align*}
    \Phi(\mathbf{T_{\ell \cdot w}})-\Phi(\bm{\tau})&=\sum_{\tau=1}^{\infty} \Phi_{\tau}(\mathbf{T_{\ell \cdot w}})-\Phi_{\tau}(\bm{\tau})\\
    &\geq \sum_{\tau=1}^{\infty} \frac{1-2\epsilon}{1-\beta} \sum_{ i \in J_{\tau}} \arrival_i w\\
    &=\frac{1-2\epsilon}{1-\beta}
    \sum_{i=1}^n \arrival_i \tau_i w.
\end{align*}
\end{proof}

\phiincrease*

\begin{proof}
  \hufu{This proof still needs much revision.}
By definition of $\tau_i$, for any $i \in [n]$, $T_{(\ell+1) \cdot w}^i \leq \tau_i + w$, where the equality holds if $\tau_i>0$. 
Let $\bm{w}=\{w,\cdots, w\}$. 
By definition of the potential function, $\Phi(\mathbf{T_{(\ell+1) \cdot w}}) \leq \Phi(\bm{\tau}+\mathbf{w})$. Then,
\begin{align*}
\Phi(\mathbf{T_{(\ell+1) \cdot w}})-\Phi(\bm{\tau}) 
&\leq  \Phi(\bm{\tau}+\mathbf{w})-\Phi(\bm{\tau})\\
&= \frac{1}{2}\sum_{i=1}^n \arrival_i(\tau_{i}+w)(\tau_{i}+w-1)-\frac{1}{2}\sum_{i=1}^n \arrival_i \tau_{i}(\tau_{i}-1)\\
&\leq \sum_{i=1}^n  \lambda_{i} \tau_i w+\frac{1}{2} \sum_{i=1}^n \lambda_{i}w^{2}
\end{align*}
\end{proof}
When $Z_\ell$ is larger than the threshold value, we can relate $ \sum_{i=1}^n  \lambda_{i} \tau_i$ and $ \sum_{i=1}^n \lambda_{i} T_{\ell \cdot w}^i$. 

\begin{lemma}\label{relation between tau and T}
 Under $\emph{event A}$, when $Z_l > \frac{w}{\sqrt{2 \lambda_{(n)}}} \max \left(\frac{8}{\beta}\left(\sum_{i=1}^{n} \lambda_{i}\right), 16 n^{2}\right)$, then $\sum_{i=1}^n  \lambda_{i} \tau_i \geq \frac{1}{2}\sum_{i=1}^n \lambda_{i} T_{l \cdot w}^i$ and $\sum_{i=1}^n \lambda_{i} T_{l \cdot w}^i \geq \frac{8}{\beta}w \sum_{i=1}^n \lambda_{i}$.
\end{lemma}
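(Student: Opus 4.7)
The plan is to prove the second inequality first. I would establish the norm bound $\sum_i \lambda_i T_{\ell \cdot w}^i \geq \sqrt{2 \lambda_{(n)}}\, Z_\ell$ by observing that $\max_i T_{\ell\cdot w}^i \leq \sum_i T_{\ell\cdot w}^i \leq \tfrac{1}{\lambda_{(n)}} \sum_i \lambda_i T_{\ell\cdot w}^i$, so that $2 Z_\ell^2 = \sum_i \lambda_i T_{\ell\cdot w}^i(T_{\ell\cdot w}^i-1) \leq \sum_i \lambda_i (T_{\ell\cdot w}^i)^2 \leq (\max_i T_{\ell\cdot w}^i) \sum_i \lambda_i T_{\ell\cdot w}^i \leq \tfrac{1}{\lambda_{(n)}} \bigl(\sum_i \lambda_i T_{\ell\cdot w}^i\bigr)^2$. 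Feeding in the hypothesis $Z_\ell > \tfrac{w}{\sqrt{2\lambda_{(n)}}} \cdot \tfrac{8}{\beta} \sum_i \lambda_i$ immediately gives the second inequality, and feeding in the other branch of the max gives the auxiliary estimate $\sum_i \lambda_i T_{\ell\cdot w}^i > 16 n^2 w$, which will be reused below.

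For the first inequality, I would relate $\tau_i$ to $T_{\ell\cdot w}^i$ through the interarrival gaps $G_{i,s}$. Because each queue always sends its oldest packet, the $N_0^i$ packets cleared from queue $i$ during the window $[\ell\cdot w, (\ell+1)\cdot w)$ are precisely its $N_0^i$ oldest packets, whose arrival times span an interval of total length $\sum_{s=1}^{N_0^i} G_{i,s}$. When queue $i$ still contains a pre-window packet at time $(\ell+1)\cdot w$, this gives the identity $\tau_i = T_{\ell\cdot w}^i - \sum_{s=1}^{N_0^i} G_{i,s}$; when instead all pre-window packets have been cleared, $\tau_i = 0$, but the same expression is non-positive in this case, so the inequality $\tau_i \geq T_{\ell\cdot w}^i - \sum_{s=1}^{N_0^i} G_{i,s}$ holds in both cases.

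Summing this over $i$ after multiplying by $\lambda_i$ and invoking inequality \eqref{eq:concentrate on arrival 1} from event $A$ then yields $\sum_i \lambda_i \sum_{s=1}^{N_0^i} G_{i,s} \leq \sum_i (N_0^i + \epsilon_1 w) \leq nw(1 + \epsilon_1) \leq 2nw$, using that $N_0^i \leq w$ and the choice $\epsilon_1 < 1$ set in the proof of Claim~\ref{claim:eventA}. Combining with the bound $\sum_i \lambda_i T_{\ell\cdot w}^i > 16 n^2 w \geq 4 n w$ (valid for $n \geq 1$) from the first paragraph, we conclude $\sum_i \lambda_i \tau_i \geq \sum_i \lambda_i T_{\ell\cdot w}^i - 2 n w \geq \tfrac{1}{2} \sum_i \lambda_i T_{\ell\cdot w}^i$.

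The only delicate point is the edge case where queue $i$ fully empties of pre-window packets during the window, which I expect to be the main obstacle to writing the argument cleanly; it is handled by the monotonicity observation that $T_{\ell\cdot w}^i - \sum_{s=1}^{N_0^i} G_{i,s}$ can only go non-positive, hence the bound $\tau_i \geq T_{\ell\cdot w}^i - \sum_{s=1}^{N_0^i} G_{i,s}$ is preserved. The remaining steps are elementary algebraic manipulations combined with the definition $Z_\ell^2 = \Phi(\mathbf{T}_{\ell\cdot w})$ and the concentration guarantees provided by event $A$.
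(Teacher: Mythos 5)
Your proof is correct. The paper itself omits the argument, deferring to Claims 3.1 and 3.2 of Gaitonde and Tardos, but your reconstruction recovers exactly the intended chain of reasoning: for the second inequality, you exploit the norm comparison $2Z_\ell^2 = \sum_i \lambda_i T^i_{\ell w}(T^i_{\ell w}-1) \leq \frac{1}{\lambda_{(n)}}(\sum_i \lambda_i T^i_{\ell w})^2$, giving $\sum_i \lambda_i T^i_{\ell w} \geq \sqrt{2\lambda_{(n)}}\,Z_\ell$, so both branches of the max in the threshold convert directly into the two quantitative lower bounds you need; for the first inequality, you control $\sum_i \lambda_i T^i_{\ell w} - \sum_i \lambda_i \tau_i$ by the interarrival gaps $G_{i,s}$ via the concentration guarantee \eqref{eq:concentrate on arrival 1}, with $N_0^i \le w$ and $\epsilon_1 < 1$ yielding the cushion $2nw$, which the auxiliary bound $\sum_i \lambda_i T^i_{\ell w} > 16n^2 w \geq 4nw$ then absorbs. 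Your handling of the edge case is also right: when all pre-window packets of queue $i$ clear, $\tau_i = 0$ while $T^i_{\ell w} - \sum_{s \le N_0^i} G_{i,s} \leq 0$, so the one-sided inequality $\tau_i \geq T^i_{\ell w} - \sum_{s \le N_0^i} G_{i,s}$ is all that is needed and it holds unconditionally. This matches the structure of the omitted Gaitonde--Tardos argument.
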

This Lemma is the combination of Claim 3.1 and Claim 3.2 in \cite{gaitonde2020stability} and we omit its proof.(Though the choice of $\epsilon_1$ is different, it is sufficiently small and will not influence the proof.)

\notA*

\begin{proof}
If Event $A$ does not happen, the worst case is no packets are cleared during the time interval and each age of queue increases by $w$. Then, we have
\begin{align*}
&\Phi(\mathbf{T_{(\ell+1) \cdot w}}) -\Phi(\mathbf{T_{\ell \cdot w}})\\
\leq &\sum_{i=1}^n \frac{1}{2}\arrival_i(T_{\ell\cdot w}^i+w)(T_{\ell\cdot w}^i+w-1)-\frac{1}{2}\arrival_i T_{\ell\cdot w}^i(T_{\ell\cdot w}^i-1)\\
\leq & \sum_{i=1}^n  \lambda_{i}T_{\ell\cdot w}^i w+\frac{1}{2}\sum_{i=1}^n\lambda_{i}w^{2} 
\end{align*}

\end{proof}

\negativedrift*

\begin{proof}
Under \emph{event A}, we calculate the decrease in potential function under :
\begin{align*}
    \Phi(\mathbf{T_{\ell \cdot w}})-\Phi(\mathbf{T_{(\ell+1) \cdot w}})
    & \geq \frac{1-2\epsilon}{1-\beta}\arrival_i \tau_i w - \left(\sum_{i=1}^n  \lambda_{i} \tau_i w+\frac{1}{2}\lambda_{i}w^{2} \right)\\
    &\geq \frac{\beta}{2}\sum_{i=1}^n  \lambda_{i} \tau_i w-\frac{1}{2}\sum_{i=1}^n  \lambda_{i}w^{2}\\
    &\geq \frac{\beta}{4}\sum_{i=1}^n  \lambda_{i} T_{l \cdot w}^i w-\frac{1}{2}\sum_{i=1}^n  \lambda_{i}w^{2}\\
    &\geq \frac{\beta}{8}w\sum_{i=1}^n  \lambda_{i} T_{l \cdot w}^i,
\end{align*}
where the second inequality is due to $\epsilon=\frac{\beta}{8}$ and the last two inequalities is due to Lemma \ref{relation between tau and T}. The probability that \emph{event A} happens is at least $1-\frac{\beta}{64}$. When \emph{event A} does not happen,
\begin{align*}
&\Phi(\mathbf{T_{(\ell+1) \cdot w}})-\Phi(\mathbf{T_{\ell \cdot w}})\\
\leq & \sum_{i=1}^n  \lambda_{i}T_{\ell\cdot w}^i w+\frac{1}{2}\sum_{i=1}^n\lambda_{i}w^{2} \\
\leq &2 \sum_{i=1}^n  \lambda_{i}T_{\ell\cdot w}^i w,
\end{align*}
where the first inequality is due to 
Lemma \ref{lem:not-A}, and the second inequality is due to $\sum_{i=1}^n  \lambda_{i}T_{\ell\cdot w}^i \geq \sum_{i=1}^n\lambda_{i}w$ in Lemma \ref{relation between tau and T}.
Then we calculate the expected change of $\Phi$ when it is large than $b^2$:
\begin{align*}
   \Ex{ \Phi(\mathbf{T_{\ell \cdot w}})-\Phi(\mathbf{T_{(\ell+1) \cdot w}}) \given \Phi(\mathbf{T_{\ell \cdot w}}) > b^2}
    &\geq (1-\frac{\beta}{64}) \cdot w\frac{\beta}{8}\sum_{i=1}^n  \lambda_{i} T_{\ell \cdot w}^i - \frac{\beta}{64} \cdot 2w \sum_{i=1}^n  \lambda_{i}T_{\ell\cdot w}^i\\
    &\geq \frac{1}{2} \cdot \frac{\beta}{8}w\sum_{i=1}^n  \lambda_{i} T_{\ell \cdot w}^i - \frac{\beta}{64} \cdot 2w \sum_{i=1}^n  \lambda_{i}T_{\ell\cdot w}^i\\
    &=\frac{\beta w}{32}\sum_{i=1}^n  \lambda_{i} T_{\ell \cdot w}^i.
\end{align*}
Since for any $a,b,c>0,a-b \geq c \implies \sqrt{a}-\sqrt{b}>\frac{c}{2\sqrt{a}}$, then we have 
\begin{align*}
\Ex{Z_{\ell}-Z_{\ell+1} \given Z_{\ell}>b }&=\sqrt{\Phi_{\ell\cdot w}}-\sqrt{\Phi_{(\ell+1)\cdot w}}\\
&\geq \frac{\beta w \sum_{i=1}^n \lambda_{i} T_{\ell \cdot w}^i}{32 \cdot 2\sqrt{\frac{1}{2}\sum_{i=1}^n \lambda_{i} T_{\ell \cdot w}^i(T_{\ell \cdot w}^i-1)}} \\
&\geq \frac{\sqrt{2 \lambda_{(n)}}\beta w \sum_{i=1}^n \sqrt{\lambda_{i}} T_{\ell \cdot w}^i}{64\sqrt{\sum_{i=1}^n \lambda_{i} T_{\ell \cdot w}^i(T_{\ell \cdot w}^i-1)}}\\
&\geq \frac{\sqrt{2\arrival_{(n)}}\beta w}{64},
\end{align*}
 where the second inequality is due to $\arrival_{(n)}$ is the minimum element of $\arrivals$, and the last inequality is due to $\sum_{i=1}^n a_i \geq \sqrt{\sum_{i=1}^n (a_i)^2}$ when $a_i\geq 0$ for any $i$.
\end{proof}

\thmIBdecentral*

\begin{proof}
$Z_l$ satisfies two conditions in Theorem \ref{pemantale}. Then, for each $a>0$, there is constant $c_a$ such that for any $\ell$, $\Ex{Z_{\ell}^a}\leq c_a$. Then, it means that $\left[\left(\sum_{i=1}^{n} \arrival_{i} T_{\ell \cdot w}^{i}\left(T_{\ell \cdot w}^{i}-1\right)\right)^{a/2}\right] \leq c_a$. 
Then there is a constant $c'_{a}$ such that for any $\ell$, $\Ex{(\sum_{i=1}^{n}T_{\ell \cdot w}^{i})^a} \leq c'_{a}$. Since $\queue_{\ell \cdot w} \leq T_{\ell \cdot w}$ for any $\ell$, then, $\Ex{(\sum_{i=1}^{n}\queue_{\ell \cdot w}^{i})^a}\leq \Ex{(\sum_{i=1}^{n}T_{\ell \cdot w}^{i})^a} \leq c'_{a}$. For any $\ell \cdot w \leq t \leq (\ell+1)\cdot w$, for any $\queue_{t}\leq \queue_{\ell \cdot w}+(n+m)w$. Therefore, for each $a>0$, $t\geq 0$, there is a constant $c''_{a}>0$, such that
\begin{align*}
    \Ex{(\queue_{t})^a}\leq c''_{a}
\end{align*}
Then, we complete the proof of strong stability of the queueing system.
\end{proof}

\section{Missing Proofs from Section~\ref{sec:G-central}}
\label{sec:app-G-central}

\Gcentral*

In this appendix we prove the three statements in Theorem~\ref{central network} are equivalent. 
For convenience, we name the condition in statement 2 the ``edge condition'' 
and the condition in statement 3 the ``flow condition''.
We first show that the edge condition is sufficient, and that the flow condition is necessary.  
Lastly, we show that the flow condition implies the edge condition, which completes the proof.

\begin{definition}
Given a directed graph $G = (V, E)$, a \emph{path ensemble} $P$ is a collection of vertex-disjoint paths.
The indicator vector of a path ensemble~$P$ is $\indicator_P \in \{0, 1\}^E$ with $\indicator_P (e) = 1$ for $e \in E$ and $0$ otherwise.
We denote by $\paths$ the set of all path ensembles in~$G$.  
\end{definition}

\begin{restatable}{lemma}{pathdecomp}
  \label{path decomposition}
  \label{lem:path-decomp}
Given a feasible solution $\mathbf z$ to the edge condition linear system, there exists a distribution $\mathbf x$ over $\paths$, such that for each edge $(i, j)$ in the network, $\sum_{P \in \paths, P \ni (i, j)} x_{P} = z_{ij}$.
\end{restatable}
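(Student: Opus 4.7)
The plan is to reduce Lemma~\ref{lem:path-decomp} to the standard Birkhoff--von Neumann theorem via a vertex-splitting construction. I would build an auxiliary bipartite graph $H$ whose left vertices are $\{v^{\mathrm{out}} : v \in \source \cup \medium\}$ and whose right vertices are $\{v^{\mathrm{in}} : v \in \medium \cup \terminal\}$, placing an edge between $i^{\mathrm{out}}$ and $j^{\mathrm{in}}$ in $H$ for every $(i,j) \in E$. Under this correspondence, a feasible $\mathbf z$ to the edge-condition linear system translates directly into a fractional matching in $H$: constraint \eqref{eq:flow-outgoing1} is exactly the fractional degree bound at $i^{\mathrm{out}}$, constraint \eqref{eq:flow-incoming1} is the fractional degree bound at $j^{\mathrm{in}}$, and the remaining $z_{ij} \geq 0$ and $z_{ij}=0$ for $(i,j) \notin E$ ensure that $\mathbf z$ lives in the bipartite matching polytope of $H$. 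Note that the ``processing'' inequalities \eqref{eq:source-process1}--\eqref{eq:middle-process1} play no role here; only the degree constraints matter.

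The next step is to invoke Birkhoff--von Neumann (equivalently, the integrality of the fractional bipartite matching polytope): any such fractional matching decomposes as a convex combination $\mathbf z = \sum_{M} x_M \indicator_M$ where the sum ranges over 0/1 matchings $M$ in~$H$ and $\mathbf x$ is a probability distribution.

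The final step is to interpret each 0/1 matching $M$ back in~$G$. Reversing the vertex split, $M$ becomes an edge subset of $G$ in which every vertex has in-degree at most one and out-degree at most one. Such an edge set always decomposes into vertex-disjoint simple paths and vertex-disjoint simple cycles; since $G$ is a DAG, cycles are ruled out, so $M$ lifts to an element of $\paths$. Collecting the weights then yields $\sum_{P \in \paths, P \ni (i,j)} x_P = z_{ij}$ for every edge, as required.

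There is no real obstacle here beyond bookkeeping: the argument is essentially a direct application of the bipartite matching toolkit, and the DAG hypothesis is used only to rule out cycles in the final step. The one point that warrants care is the asymmetry in the vertex-splitting — sources contribute only out-copies, terminals only in-copies, and middle-layer servers both — so one must verify that the degree constraints on the right side of $H$ match \eqref{eq:flow-incoming1} exactly (and similarly on the left).
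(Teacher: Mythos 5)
Your proposal is correct and is essentially the paper's own argument: the paper's matrix $Z \in [0,1]^{(\source \cup \medium) \times (\medium \cup \terminal)}$ is exactly your vertex-split bipartite graph, and both proofs then apply Birkhoff--von Neumann to the doubly substochastic matrix given by constraints \eqref{eq:flow-outgoing1}--\eqref{eq:flow-incoming1} and reinterpret each matching as a path ensemble. Your explicit remark that the DAG property rules out cycles when lifting a matching back to $G$ is a detail the paper leaves implicit, and is worth having.
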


This lemma is a generalization of Birkhoff-von Neumann theorem. 
Its proof is reminiscent of textbook proofs of Dilworthy's theorem via Hall's theorem.

\begin{proof}
  Consider the matrix $Z \in [0, 1]^{(\source \cup \medium) \times (\medium \cup \terminal)}$ defined by $Z_{ij} = z_{ij}$ for all $i$ and~$j$.  
  $Z$ is element-wise nonnegative, and its row sums and column sums are bounded by~1 due to (8) and (9) in the edge condition.
By Birkhoff-von Neuman theorem, 
$Z$ can be written as $\sum_{P}x_{P}\matchingmatrix_{P}$, where $x_P \geq 0$, $\sum_{P}x_{P}=1$, and each $\matchingmatrix_P \in \{0, 1\}^{(\source \cup \medium) \times (\medium \cup \terminal)}$ is a matching matrix.
Note that each $\matchingmatrix_P$ corresponds to a path ensemble if we construct for each $\matchingmatrix_P$ a set of edges by including $(i, j)$ if and only if $\matchingmatrix_P(i, j) = 1$.
Therefore $\mathbf x$ is indeed a distribution over $\paths$, and for any $(i,j)\in E$, 
\begin{align*}
  z_{ij}=\sum_{P} x_{P} \matchingmatrix_{P}(i,j)=\sum_{P} x_{P} \indicator_P (e)=\sum_{P\in \paths, P \ni (i, j)} x_{P}.
\end{align*}
\end{proof}

\begin{lemma}
  If for a Queue-G model queueing system, the edge condition holds,
  then there exists a centralized policy under which the system is stable.
  \label{lem:2to1}
\end{lemma}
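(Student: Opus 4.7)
My plan is to use the path decomposition to define a memoryless randomized centralized policy, then verify the stability of each node's queue independently via Pemantle's moment bound (Theorem~\ref{pemantale}). Given a feasible $\mathbf{z}$, Lemma~\ref{lem:path-decomp} produces a distribution $\mathbf{x}$ over path ensembles with $\sum_{P \in \paths : (i,j) \in P} x_P = z_{ij}$ for every $(i,j) \in E$. The policy I adopt samples a fresh path ensemble $P_t \sim \mathbf{x}$ at each time step $t$ independently of history, and for every $(i,j) \in P_t$ with $Q_t^i > 0$ instructs node~$i$ to send its oldest packet to~$j$. Vertex-disjointness of the paths in $P_t$ guarantees that no server ever receives simultaneous requests from two different senders, so each server that receives a request simply attempts it and succeeds with its own processing probability.

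The next step is to fix an $i \in \source \cup \medium$ and show that $(Q_t^i)_t$ satisfies the two hypotheses of Theorem~\ref{pemantale}. The bounded-jump condition is immediate: in one step, at most one packet enters node~$i$ (an exogenous arrival if $i \in \source$, or a successful transmission from the unique in-neighbor selected by $P_t$ if $i \in \medium$) and at most one packet leaves, so $|Q_{t+1}^i - Q_t^i| \leq 1$ almost surely. For the negative-drift condition, when $Q_t^i > 0$ the expected outflow at $i$ equals $\sum_j z_{ij} \process_j$, while the expected inflow is $\arrival_i$ if $i \in \source$ and at most $\process_i \sum_j z_{ji}$ if $i \in \medium$, with equality only when every contributing in-neighbor is simultaneously non-empty. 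The one-step conditional drift is therefore at most $\arrival_i - \sum_j z_{ij} \process_j$ or $\process_i \sum_j z_{ji} - \sum_j z_{ij} \process_j$, both strictly negative by~\eqref{eq:source-process1} and~\eqref{eq:middle-process1}; and since $P_t$ is resampled iid while the inflow estimate is a pointwise upper bound over all upstream configurations, the bound holds conditionally on the entire history $(Q_s^i)_{s \leq t}$, as Pemantle's theorem requires.

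Theorem~\ref{pemantale} then yields for each $a > 0$ a constant $C_a^i$ with $\Ex{(Q_t^i)^a} \leq C_a^i$ uniformly in $t$; summing over $i$ via $Q_t^a \leq |V|^a \max_i (Q_t^i)^a$ gives strong stability of the whole system. The main subtlety I foresee is the proviso ``$\prod_{j \in \inneighbor(i)} z_{ji} > 0$'' in~\eqref{eq:middle-process1}, which leaves the strict inequality vacuous for middle nodes with at least one zero-flow in-edge. I would handle this by preprocessing: restrict to the sub-DAG spanned by edges with $z_{ij} > 0$, on which the product condition is automatic and no edge actually used by the policy is discarded, while the sums in the drift computation are unchanged. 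A secondary conceptual point, which pays off by simplifying the whole argument, is that pointwise upper bounding the inflow decouples the analysis across layers, so Pemantle's condition can be checked one node at a time without any global potential or inductive argument across the DAG.
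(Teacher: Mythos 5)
Your approach matches the paper's almost exactly: sample a path ensemble i.i.d.\ from the Birkhoff--von-Neumann--style decomposition of Lemma~\ref{lem:path-decomp}, observe that vertex-disjointness prevents contention, and verify Pemantle's two conditions node by node with the same drift computation.

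The one place your reasoning goes astray is the ``preprocessing'' you propose to handle the proviso $\prod_{j \in \inneighbor(i)} z_{ji} > 0$ in~\eqref{eq:middle-process1}. Restricting to the sub-DAG of positive-flow edges makes the product formally positive \emph{over the smaller in-neighbor set}, but this does not retroactively impose the inequality $\process_i \sum_j z_{ji} < \sum_j z_{ij}\process_j$ on the given $\mathbf z$: if the original feasibility problem never required it (because one in-edge carried zero flow while another carried positive flow), passing to the sub-DAG changes neither $\mathbf z$ nor the sums in the drift bound, so the inequality can still fail. The correct resolution is simpler and doesn't need a sub-DAG: if $\sum_{j \in \inneighbor(i)} z_{ji} = 0$, node~$i$ receives no inflow under the policy and its queue is identically zero, so it is trivially stable; if $\sum_{j \in \inneighbor(i)} z_{ji} > 0$, the inequality in~\eqref{eq:middle-process1} is exactly what the drift computation needs. (The paper's own proof of Lemma~\ref{lem:3to2} in fact verifies~\eqref{eq:middle-process1} precisely under the hypothesis $\sum_{j \in \inneighbor(i)} z^1_{ji} \process_i > 0$, which strongly suggests the $\prod$ in the displayed condition is intended as a $\sum$, and that this is the case you should be splitting on, not the product.) With that substitution your node-by-node Pemantle argument is exactly the paper's.
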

\begin{proof}
  Given a queueing system for which the linear system in the theorem statement is feasible. 
  Let $\allocs$ be a corresponding distribution over path ensembles guaranteed by Lemma~\ref{lem:path-decomp}.
  Consider the central policy that, at each step, samples a path ensemble $P \in \paths$ with probability $x_P$, and then let each node~$i \in \source \cup \medium$ send its packet (if any) to server~$j$ if $(i,j) \in P$.
  (If no edge goes out from~$i$ in~$P$, the node sends no request during that time step.)
Since $P$ is a collection of vertex-disjoint paths, no server receives more than one request. 

We show that, under this policy, at each node whose queue, at each time step, the probability a new packet joins the queue of the node is strictly smaller than the probability a packet from that node is successfully processed by a next server.
It is then straightforward to see that the queue lengths at each server satisfy the conditions of Theorem~\ref{thm:pemantle}, and so does the total number of packets in the system.
The stability of the system follows.
For any node $i \in \source \cup \medium$, the probability its packet is successfully processed is $\sum_{j: (i, j) \in E} \process_j \sum_{P: P \ni (i, j)} x_P = \sum_{j: (i, j) \in E} z_{ij} \process_j$ by the definition of $\allocs$.
For $i \in \source$, the probability a new packet joins its queue is $\arrival_i$ by definition.
For $i \in \medium$, the probability a new packet joins its queue is at most $\process_i \sum_{i: (i, j) \in E} \sum_{P: P \ni (i, j)} x_P = \process_i \sum_{i: (i, j) \in E} z_{ij}$.
(8) and (9) in the edge condition precisely require the arrival rate in each case is strictly smaller than the expected process rate.

\end{proof}

\begin{lemma}
  \label{lem:1to3}
If a queueing system is stable under some centralized policy, then the flow condition holds.
\end{lemma}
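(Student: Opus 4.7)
The plan is to adapt the argument of Theorem~\ref{thm:IB-central}: I will construct, directly from a stable centralized policy's time-averaged behavior conditioned on a high-arrival event, a vector~$f^T$ that already satisfies \eqref{eq:flow-source}-\eqref{eq:flow-zero} once~$T$ is large enough, so no limit-taking is required. Fix such a policy, pick a large horizon~$T$, and let $A_T$ be the event that every source~$i$ sees $X_T^i \geq \arrival_i T + \sqrt T$ arrivals by time~$T$; by independence of arrivals across sources, $\Prx{A_T} \geq c_0$ for a constant $c_0 > 0$ independent of~$T$. For each path $\pi \in \Pi$, let $N_\pi^T$ be the number of packets that have successfully traversed every edge of~$\pi$ by time~$T$; any such packet must have originated at the source of~$\pi$. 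Define $f_{i\pi}^T \coloneqq \Ex{N_\pi^T \given A_T}/T$ when $i$ is the source of~$\pi$ and $f_{i\pi}^T \coloneqq 0$ otherwise, so~\eqref{eq:flow-zero} is built in.

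I would then verify~\eqref{eq:flow-source}-\eqref{eq:flow-conservation}. For~\eqref{eq:flow-source}, the key observation is that $f_{i\pi}^T/\process_y$ lower-bounds the conditional expected fraction of time node~$x$ sends along edge $(x,y)$ as part of~$\pi$, because any such completed path consumes one successful transmission on~$(x,y)$ and the success probability~$\process_y$ is independent of routing; summing over $y$ and~$\pi$ is bounded by the total sending fraction at~$x$, which is at most~$1$. Constraint~\eqref{eq:flow-incoming} is the symmetric receiver-side argument: at most $\process_x$ successful arrivals per time step can occur at~$x$. Constraint~\eqref{eq:flow-conservation} holds as an exact equality because each completed path through a middle node~$x$ passes through exactly one incoming and one outgoing edge at~$x$, making the two sums in~\eqref{eq:flow-conservation} identical term by term.

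The strictness in~\eqref{eq:flow-process} is where stability enters. The quantity $T \sum_\pi f_{i\pi}^T$ is the conditional expected number of $i$-originated packets that have exited the system by time~$T$, which is at least $\Ex{X_T^i \given A_T}$ minus the conditional expected number of $i$-originated packets still anywhere in the system. The latter is at most $\Ex{\queue_T \given A_T} \leq C/c_0$ under stability (taking $C$ to be any uniform upper bound on $\Ex{\queue_T}$). Combined with $\Ex{X_T^i \given A_T} \geq \arrival_i T + \sqrt T$, this yields $\sum_\pi f_{i\pi}^T \geq \arrival_i + 1/\sqrt T - C/(c_0 T)$, which strictly exceeds~$\arrival_i$ once~$T$ is large enough.

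The main reason for passing through the path-flow formulation rather than arguing $1 \Rightarrow 2$ directly is the strict inequality~\eqref{eq:middle-process1} at middle nodes in the edge formulation: a limit-based argument there only yields~$\leq$ because the in- versus out-rate discrepancy at an internal queue vanishes as $T \to \infty$. The path-flow formulation absorbs this middle-node strictness into the exact equality~\eqref{eq:flow-conservation}, leaving only the source-side strictness~\eqref{eq:flow-process}, which is handled by the same $\sqrt T$-slack used in the bipartite proof. The subtlety requiring most care is to argue that the bound on the total queue~$\queue_T$ controls not just the number of packets at source~$i$'s own queue but also those still in transit at middle nodes.
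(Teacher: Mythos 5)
Your proposal is correct and follows essentially the same route as the paper's proof: conditioning on the high-arrival event $A_T$, defining $f_{i\pi}$ from conditional expected completed-path counts, verifying the capacity and conservation constraints via the independence of the servers' success indicators from arrivals and routing, and extracting the strictness of \eqref{eq:flow-process} from the $\sqrt T$ slack built into $A_T$. The only (harmless) difference is that you establish \eqref{eq:flow-process} directly from the uniform bound $\Ex{\queue_T}\leq C$ supplied by strong stability, whereas the paper argues contrapositively that a violation for some queue $i^*$ would force $\Ex{\queue_T} = \Omega(\sqrt T)$.
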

\begin{proof}
Consider a queueing system $(V, E, \arrivals, \processes)$ that is stable under a certain centralized policy.
Under this policy, for $i \in \source$ and time step~$t$, denote by $X_t^i$ the number of packets arriving at queue~$i$ at time step~$t$, and $Y_t^i$ the number of packets that depart from queue~$i$ and leave the queueing system at time step~$t$. 
$(X_t^i)_t$ are therefore i.i.d.\@ Bernoulli random variables.
For time step~$T$, let $A_T$ denote the event that, for all $i \in \source$,  $\sum_{t=1}^{T} X_t^i \geq \arrival_i T+\sqrt{T}$. 
By the central limit theorem, for large enough~$T$, $\Prx{A_T}$ is at least a constant $c_0$ which depends on $\arrivals$ but not on~$T$.
From this point on, we condition on event~$A_T$ happening.

For an edge $(x, y) \in E$, let $z_{xy}^t$ be the expected number of request sent by $x$ to~$y$ under the centralized policy at time step~$t$, and let $z_{xy} = \frac 1 T \sum_{t = 1}^T z_{xy}^t$.
Then clearly, for any~$x \in \source \cup \medium$, $\sum_{y \in \outneighbor(x)} z_{xy} \leq 1$, and for any $y \in \medium \cup \terminal$, $\sum_{x \in \inneighbor(y)} z_{xy} \leq 1$.
For a path $\pi$ from a node $i \in \source$ to a node in~$\terminal$, let $f_{i\pi}^T$ be the expected number of packets sent along this path and successfully processed up to time~$T$, and let $f_{i\pi}$ be $f_{i\pi}^T / T$.
It is easy to see that $f_{i\pi}$ satisfies the flow conservation constraints~\eqref{eq:flow-conservation}.
We also have
\begin{align*}
    \sum_{y\in \outneighbor(x)}\sum_{i\in S_1} \sum_{\pi\in\Pi: \pi\ni (x,y)} \frac{f_{i\pi}}{\process_y} &\leq \sum_{y} \frac{z_{xy}\process_y}{\process_y} =\sum_{y} z_{xy} \leq 1, & \forall x \in \source \cup \medium; 
    \\ 
     \sum_{i\in S_1} \sum_{y \in \inneighbor(x)} \sum_{\pi\in\Pi: \pi \ni (y, x)} f_{i\pi}&\leq \process_x \sum_{y \in \inneighbor(x)} z_{yx} \leq \process_x, & \forall x \in \medium \cup \terminal.
 \end{align*}
 The key observation in both inequalities is that at any step, the expected number of requests sent along an edge $(x, y)$ times $\process_y$ is equal to the expected number of packets from~$x$ that is cleared by~$y$.  
 This relationship holds even with the conditioning on~$A$ and other happenings in the system.

We have shown that $(f_{i\pi})_{i \in \source, \pi \in \paths}$ satisfies the first and second inequalities and also satisfies the last equality in the `flow condition. 
If the linear system in the `flow condition is not feasible, then the third inequality is not satisfied. Therefore, there is a queue $i^*$ such that $\sum_\pi f_{i^*\pi} \leq \arrival_{i^*}$.
This means $\mathbb{E}[Y_T^{i^*} |A_T]\leq \arrival_{i^*} T$. 
Therefore, when event~$A_T$ happens, the expected number of packets from queue~$i^*$ that remain in the system is at least $\sqrt T$.
As $A_T$ occurs with constant probability for all large enough~$T$, this contradicts the system's stability under the centralized policy.  
This shows the necessity of the ``necessary condition.''
\end{proof}

\begin{lemma}
  \label{lem:3to2}
A queueing system that satisfies the flow condition must also satisfy the edge condition.
\end{lemma}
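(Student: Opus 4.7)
The plan is to construct a feasible $\mathbf z$ to the edge condition from the given flow~$(f_{i\pi})$ in two stages: first a scaled edge-flow $\mathbf z^0$ derived directly from $\mathbf f$, then a small perturbation to upgrade the middle-node equalities (forced by flow conservation) into the required strict inequalities.

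For the first stage, I would set $g_{xy}\coloneqq\sum_{i\in\source}\sum_{\pi\ni(x,y)} f_{i\pi}$, pick any $\alpha\in\bigl(\max_{i\in\source}\arrival_i/\sum_\pi f_{i\pi},\,1\bigr)$, and define $z^0_{xy}=\alpha\, g_{xy}/\process_y$. A direct check using \eqref{eq:flow-source}--\eqref{eq:flow-zero} shows that $\mathbf z^0$ satisfies every constraint of the edge condition except the strict inequalities at middle nodes: the row and column sums are bounded by $\alpha<1$ (so capacity slack $\geq 1-\alpha$), the source inequality holds strictly since $\sum_y z^0_{iy}\process_y=\alpha\sum_\pi f_{i\pi}>\arrival_i$, while at every middle $x$ the flow conservation in~\eqref{eq:flow-conservation} forces the equality $\process_x\sum_y z^0_{yx}=\sum_y z^0_{xy}\process_y$.

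For the second stage, let $M_*\coloneqq\{x\in\medium:\prod_{j\in\inneighbor(x)} z^0_{jx}>0\}$ be the middle nodes at which a strict inequality is actually required. For each $x\in M_*$, I would choose a path $\pi_x$ from $x$ to some terminal along \emph{flow-carrying} edges (those with $g_{uv}>0$); such a path exists because membership in $M_*$ implies positive inflow at $x$, hence positive outflow by conservation, and following positive-flow out-edges recursively reaches $\terminal$ in the DAG. Then, for a small $e>0$, set $z_{uv}\coloneqq z^0_{uv}+\sum_{x\in M_*:(u,v)\in\pi_x} e/\process_v$. At any node $u$, each path that passes through $u$ as an intermediate vertex contributes exactly $e$ to both the inflow $\process_u\sum_y z_{yu}$ and the outflow $\sum_y z_{uy}\process_y$, which preserves balance; only the unique path $\pi_u$ originating at $u\in M_*$ adds an unmatched $e$ to the outflow, yielding the required strict inequality. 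The choice $e\leq(1-\alpha)\min_v\process_v/|M_*|$ ensures that the total perturbation on any row or column is at most $1-\alpha$, so the capacity constraints are preserved.

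The main obstacle I foresee is ruling out a scenario in which the perturbation pushes a previously zero entry $z^0_{j^*u}$ to a positive value at some $u\in\medium\setminus M_*$, since this would activate a strict inequality at $u$ that our construction does not provide. The key observation that averts this is that $u\notin M_*$ means some incoming edge $(j^*,u)$ has $g_{j^*u}=0$; by restricting every augmentation path to flow-carrying edges, the edge $(j^*,u)$ lies on no $\pi_x$, so $z_{j^*u}=0$ is preserved, $\prod_j z_{ju}=0$ still, and the middle inequality at $u$ remains vacuous.
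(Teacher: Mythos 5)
Your proposal is correct. Your first stage coincides with the paper's: both define edge flows $z^0_{xy}$ proportional to $\sum_{i}\sum_{\pi\ni(x,y)}f_{i\pi}/\process_y$ and observe that constraints \eqref{eq:flow-source}--\eqref{eq:flow-zero} deliver everything in the edge condition except that flow conservation forces \emph{equality} at middle nodes. Where you genuinely diverge is in upgrading these equalities to strict inequalities. The paper does this multiplicatively: it fixes a topological order $t(\cdot)$, sets $z_{xy}=z^1_{xy}\gamma^{\frac{t(x)+1}{m+1}-1}$ with $\gamma=\min_i\sum_\pi f_{i\pi}/\arrival_i>1$, so that at each middle node the outflow is scaled by a strictly larger factor than the inflow, while the cumulative shrinkage (at most a factor $\gamma^{-1}$) is absorbed by the strict slack in \eqref{eq:flow-process} at the sources. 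You instead perturb additively, augmenting by $e$ units of flow along a flow-carrying path from each ``active'' middle node to a terminal; this leaves the source constraints literally untouched (the augmentation paths never visit sources) and creates exactly the $+e$ outflow surplus needed at each node of $M_*$. The two constructions trade off different bookkeeping: the paper must verify that the geometric damping does not destroy \eqref{eq:source-process1}, whereas you must verify both the existence of the augmentation paths (which you correctly derive from conservation plus acyclicity) and---the subtlest point, which you identify and resolve---that the perturbation never turns a previously inactive middle constraint at some $u\notin M_*$ into an active one; restricting augmentation to edges with positive flow is precisely what keeps the zero in-edge at $u$ zero. Both arguments are complete; yours is more combinatorial and keeps the original flow intact, the paper's is more uniform and avoids any case analysis on which middle constraints are active.
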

\begin{proof}
This proof consists of two parts. Part 1 transforms the ``flow condition'' to the ``edge condition'' with $``="$ instead of $``<"$ for \eqref{eq:middle-process1}, and proves others hold. Part 2 replaces $``="$ with $``<"$ for \eqref{eq:middle-process1}, and proves others still hold.

\textbf{Part 1.} Let $(f_{i \pi})_{i \in \source, \pi \in \Pi}$ be a feasible solution to the flow condition. 
For each edge $(x, y) \in E$, define 
$$z_{xy}^1 := \sum_{i\in S_1} \sum_{\pi\in \Pi : \pi \ni (x,y)} \frac{f_{i\pi}}{\process_y}.$$ 

Then by \eqref{eq:flow-source} we have
$$\forall x\in \source\cup \medium, \quad \sum_{y}z^1_{xy} = \sum_{y\in \outneighbor(x)}\sum_{i\in S_1} \sum_{\pi\in \Pi : \pi \ni (x,y)} \frac{f_{i\pi}}{\process_y} \le  1.$$

By \eqref{eq:flow-incoming} we have
$$\forall x\in \medium\cup \terminal, \quad \sum_{y}z^1_{yx} = \sum_{y\in \inneighbor(x)}\sum_{i\in S_1} \sum_{\pi\in\Pi: \pi\ni (y, x)} \frac{f_{i\pi}}{\process_x} 
\le 1.$$

By the flow conservation constraints \eqref{eq:flow-conservation}, we have
$$
\begin{aligned}
    \forall x\in \medium, 
    \quad
    \sum_{y}z^1_{xy} \process_y &= \sum_{y\in \outneighbor(x)}\sum_{i\in S_1} \sum_{\pi\in\Pi: \pi \ni (x,y)} f_{i\pi}  \\
    &= \sum_{y\in \inneighbor(x)}\sum_{i\in S_1}\sum_{\pi\in\Pi: \pi \ni (y, x)} f_{i\pi} = \process_x \sum_{x}z^1_{yx}.
\end{aligned}
$$
Note that, compared with \eqref{eq:middle-process1}, here we have equality instead of a strict inequality.

By \eqref{eq:flow-process}, we have
$$\forall x\in S_1, \quad \sum_{y}{z_{xy}^{1}\mu_y} 
= \sum_{\pi\in\Pi}f_{x\pi} > \lambda_x.$$

The non-negativity constraints of the edge condition are obviously satisfied by $(z_{xy}^1)$ by \eqref{eq:flow-zero} of the flow condition.
Thus, the edge condition holds for $(z_{xy}^1)$ except for \eqref{eq:middle-process1}.

\textbf{Part 2.} We use $m$  to denote the number of servers in $S_2 \cup S_3$. 
Let $\gamma = \min_{i\in S_1}\frac{\sum_{\pi\in\Pi}{f_{i\pi}}}{\arrival_i} > 1$. 
Fix a topological sorting of the nodes; for any node $x \notin \source$, let $t(x)$ be the index of~$x$ in this sorting, and for $x \in \source$, let $t(x) = 0$.
For edge $(x, y)$, 
let $z_{xy} = z^1_{xy} \cdot \gamma^{\frac{t(x) + 1}{m + 1}-1}$. 
We have $z_{xy} \le z^{1}_{xy}$; 
therefore \eqref{eq:flow-outgoing1}-\eqref{eq:zij-nonnegative} still hold for $(z_{xy})$.
We show that \eqref{eq:middle-process1} now holds with strict inequality for~$(z_{xy})$.
By definition of topological sorting, for any node $i \in \medium$ and $j \in \inneighbor(i)$, $t(j) < t(i)$.  
Therefore whenever $\sum_{j \in \inneighbor(i)} z_{ji}^1 \process_i > 0$, we have
\begin{align*}
\sum_{j \in \inneighbor(i)} z_{ji} \process_i \leq \gamma^{-1/(m+1)} \sum_{j \in \outneighbor(i)} z_{ij}\process_j < \sum_{j \in \outneighbor(i)} z_{ij} \process_j.
\end{align*}

Finally we show \eqref{eq:source-process1} still holds:


$$
\begin{aligned} \sum_{y}{z_{xy}\mu_y} &= \sum_{y} z_{xy}^1\cdot \mu_y \cdot \gamma^{\frac{t(x)+1}{m+1}-1} \\ &=\sum_{i}\sum_{\pi\in\Pi: x \ on \ \pi}{f_{i\pi}\cdot \gamma^{\frac{t(x)+1}{m+1}-1}} \\ &\ge \sum_{\pi\in\Pi: x \ on \ \pi}{f_{x\pi}\cdot \gamma^{\frac{t(x)+1}{m+1}-1}} \quad\quad &\text{(because this line is the subset of the previous line)}\\ &> \frac{1}{\gamma}\cdot\sum_{\pi\in\Pi: x\ on\ \pi}f_{x\pi}\\ &= \frac{1}{\gamma}\cdot\sum_{\pi\in\Pi}f_{x\pi} \quad\quad &\text{(because\ } f_{x\pi} = 0 \text{\ if\ } x \ not\ on\ \pi)\\ &\ge \frac{\sum_{\pi\in\Pi}f_{x\pi}}{\sum_{\pi\in\Pi}\left(f_{x\pi} / \lambda_x\right)} = \lambda_x \quad\quad & \left( \text{because\ } \forall i\in S_1, \gamma \le \frac{\sum_{\pi\in\Pi}f_{i\pi}}{\lambda_i}\right) \end{aligned}
$$

\end{proof}

\Gdual*
According to Farkas' Lemma, we can write dual of the linear system
$\alpha_{i}$ denotes the dual variables for first two equalities \ref{eq:source-process1} and \ref{eq:middle-process1}, $\beta$,$\gamma_{i}$ for the last inequalities \ref{eq:flow-incoming1}. $(\alpha_{i},\beta, \gamma_{i}\geq 0)$. Set $\alpha_{j}=0$ if $j\in\terminal$\\

\begin{align}
    \beta-\sum_{(i,j)\in P}(\alpha_i-\alpha_j)\process_j&\geq 0, \quad \forall P \in \paths \\
    \beta-\sum_{i \in source} \alpha_i \arrival_i &<0
\end{align}
Then, the linear system of centralized stability condition \ref{central network} is feasible if and only if the above linear system is infeasible. The above linear system is infeasible if and only if such condition holds: If for $\forall \mathbf{\alpha}>0$, $\exists$ a disjoint path set $U$, s.t. $\sum_{(i,j)\in U}(\alpha_{i}-\alpha_{j})\process_{j}>\sum_{i\in \source}\alpha_{i}\arrival_{i}$.
Therefore, centralized stability condition \ref{central network} is satisfied if and only if the above condition holds.

\section{Missing Proofs from Section~\ref{sec:G-decentral}}
\label{sec:app-G-decentral}
Let $X_{t}^{i}$ denote the number of packets arriving at node~$i$ at time step~$t$. Then, $X_{t}^{i}$ obeys Bernoulli distribution with parameter $\arrival_i$. Recall that for server $j$, $C_{t}^{j}$ is the random variable indicating whether the server can successfully process a packet if there is a request at time step $t$.  For each $j$, for any $t$, $C_{t}^{j}$ is from Bernoulli distribution with parameter $\process_j$.
\begin{definition}
\label{def:eventB}
For fixed $\epsilon>0$, event $B$ happens when the following three inequalities are true:
\begin{align}
    &\sum_{t = \ell \cdot w}^{(\ell + 1)\cdot w - 1}  X_{t}^{i} \leq (1+\epsilon)w \quad, \forall i \in [n] \label{eq:concentrate on arrival2}\\
    &\sum_{t = \ell \cdot w}^{(\ell + 1)\cdot w - 1}  C_t^j \geq (1-\epsilon)w \quad, \forall j \in [m] \label{eq:concentrate on process2}\\
    &\sum_{i \in \source \cup \medium} Reg_i(w,(\ell + 1)\cdot w ) \leq w \label{eq:small no regret2}
\end{align}
\end{definition}
\begin{claim}
  \label{claim:eventB}
  Under Assumption~\ref{decentralized G stability condition}, the parameters in Definition~\ref{def:eventB} and the no-regret strategies (Definition~\ref{def:no-regret}) can be chosen so that event $B$ happens with probability at least $1-\frac{\beta \process_{(m)}}{32(n+m)}$.
\end{claim}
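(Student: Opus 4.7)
The plan is to bound the failure of each of the three inequalities in Definition~\ref{def:eventB} separately and combine them by a union bound. Inequality~\eqref{eq:concentrate on arrival2} requires no probabilistic argument at all: since each $X_t^i \in \{0,1\}$, the sum $\sum_{t=\ell \cdot w}^{(\ell+1)\cdot w-1} X_t^i$ is at most $w$, which is strictly less than $(1+\eps)w$ for any $\eps > 0$, so the inequality holds deterministically.

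For inequality~\eqref{eq:concentrate on process2}, I would apply a multiplicative Chernoff bound to the $w$ i.i.d.\ Bernoulli$(\process_j)$ variables $C_t^j$ to obtain
\[
\Prx{\sum_{t=\ell \cdot w}^{(\ell+1)\cdot w-1} C_t^j < (1-\eps)\process_j w} \leq \exp\bigl(-\eps^2 \process_j w/2\bigr) \leq \exp\bigl(-\eps^2 \process_{(m)} w/2\bigr).
\]
Fixing $\eps$ as in the proof of Claim~\ref{claim:eventA} (e.g.\ $\eps = \beta/8$) and choosing $w$ large enough that this tail probability is at most $\frac{\beta \process_{(m)}}{96(n+m)m}$, a union bound over the $m$ servers in $\medium \cup \terminal$ makes~\eqref{eq:concentrate on process2} hold simultaneously for all $j$ with probability at least $1-\frac{\beta \process_{(m)}}{96(n+m)}$.

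For inequality~\eqref{eq:small no regret2}, I would invoke Definition~\ref{def:no-regret} with $\delta = \frac{\beta \process_{(m)}}{96(n+m)^2}$, which matches the choice declared in Theorem~\ref{thm:G-decentral}. Each node $i \in \source \cup \medium$ then satisfies $\reg_i(w,(\ell+1)\cdot w) \leq \varphi_\delta(w) = o(w)$ with probability at least $1-\delta$; taking $w$ large enough that $(n+m)\varphi_\delta(w) \leq w$ and union-bounding over the $n+m$ nodes, inequality~\eqref{eq:small no regret2} fails with probability at most $(n+m)\delta = \frac{\beta \process_{(m)}}{96(n+m)}$. A final union bound over the two non-deterministic failure modes yields a total failure probability of at most $\frac{2\beta \process_{(m)}}{96(n+m)} < \frac{\beta \process_{(m)}}{32(n+m)}$, as required. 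There is no real obstacle beyond choosing $w$ large enough to absorb both the Chernoff tail and the sublinear no-regret rate $\varphi_\delta(w)$; the only care needed is to keep the parameters $\eps$ and $\delta$ consistent with how Lemmas~\ref{lem:phi-decrease-G} and~\ref{lem:phi-increase-G} downstream exploit event~$B$.
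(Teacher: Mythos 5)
Your decomposition and union bound are the same as the paper's: Chernoff for the servers' success indicators, the no-regret guarantee with $\delta=\frac{\beta \process_{(m)}}{96(n+m)^2}$ union-bounded over the $n+m$ nodes, and a final union bound against the target $\frac{\beta\process_{(m)}}{32(n+m)}$. The one place you diverge is the arrival inequality \eqref{eq:concentrate on arrival2}, which you dispose of deterministically because the sum of $w$ indicator variables is trivially at most $w\le(1+\eps)w$. That reading is faithful to the literal text of Definition~\ref{def:eventB}, but it is almost certainly exploiting a typo: the proof of Lemma~\ref{lem:phi-increase-G} invokes \eqref{eq:concentrate on arrival2} in the form $\sum_{t=\ell\cdot w}^{(\ell+1)\cdot w-1} X_t^i \le (1+\eps)\arrival_i w$, and the paper's own proof of the claim applies a Chernoff bound to the arrival sums precisely because that is the inequality actually needed downstream (a deterministic bound of $w$ would make the conclusion of Lemma~\ref{lem:phi-increase-G}, whose error terms scale with $\arrival_i$ and $\arrival_i^2$, unobtainable). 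You flagged at the end that the parameters must be kept consistent with how Lemmas~\ref{lem:phi-decrease-G} and~\ref{lem:phi-increase-G} use event~$B$, but you did not carry out that check; doing so would have revealed that the arrival condition is genuinely probabilistic. The fix is routine --- a multiplicative Chernoff bound for the i.i.d.\ Bernoulli$(\arrival_i)$ arrivals, with $w$ large enough that each of the three failure modes is at most $\frac{\beta\process_{(m)}}{96(n+m)}$, giving total failure probability exactly $\frac{\beta\process_{(m)}}{32(n+m)}$ --- but as written your proof establishes a weaker event than the one the subsequent lemmas consume.
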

\begin{proof}
The deviation of $\sum_{t = \ell \cdot w}^{(\ell + 1)\cdot w - 1} X_t^i$ and $\sum_{t = \ell \cdot w}^{(\ell + 1)\cdot w - 1} C_t^j$ from their expectations has exponential tail bounds by Chernoff bound. By setting $\delta$ in Definition~\ref{def:no-regret} to be $\frac{\beta \process_{(m)}}{96(n+m)^2}, \epsilon=\frac{\beta}{8}$,
and for large enough $w$, we can make \eqref{eq:concentrate on arrival2} and \eqref{eq:concentrate on process2} hold with probability at least $1-\frac{\beta \process_{(m)}}{96(n+m)}$ each, and \eqref{eq:small no regret2} holds with probability at least $1-\frac{\beta \process_{(m)}}{96(n+m)} $. 
Then by union bound, event~$B$ happens with probability at least $1-\frac{\beta \process_{(m)}}{32(n+m)} $.
\end{proof}

Hereafter we fix $\delta$ and~$\eps$ as in the proof of Claim~\ref{claim:eventB}, and prove the two conditions for $(Z_\ell)_{\ell}$.
Given $\tau_i$ for each node $i$, by Assumption~\ref{assump:G-decentral},let $\disjointpathset^*$ be the vertex-disjoint path such that 
\begin{align}\label{eq: using G-decnetral}
\frac{1}{2}(1-\beta)\sum_{(i,j)\in \disjointpathset^*}(\tau_{i}-\tau_{j})\process_{j}>\sum_{i\in \source}\tau_{i}\arrival_i
\end{align}
\Gphidecrease*
\begin{proof}

We know that decrease in potential function by clearing packets is equal to utility of all queues. Then we have,
\begin{align*}
\Phi(\mathbf{Q_{\ell \cdot w}})-\Phi(\bm{\tau})
&=\sum_{i \in \source \cup \medium}\sum_{t = \ell \cdot w}^{(\ell + 1)\cdot w - 1} u_t^i \\
&\geq \frac{1}{2}\sum_{(i,j)\in \disjointpathset^*} (\tau_{i}-\tau_{j}-w)(1-\epsilon) \process_j w- \sum_{i \in \source \cup \medium}\reg_{i}(w,(\ell + 1)\cdot w)\\
&\geq \frac{\beta}{4}\sum_{(i,j)\in \disjointpathset^*}(\tau_i-\tau_j)
  \process_j w+\frac{1}{2}(1-\frac{5\beta}{8})  \sum_{(i,j)\in \disjointpathset^*}(\tau_i-\tau_j)
  \process_j w-w^2 \sum_{j \in \medium \cup \terminal}\process_j - w\\
&\geq \frac{\beta}{4} \sum_{(i,j)\in \disjointpathset^*}(\tau_i-\tau_j)
  \process_j w+(1+\frac{\beta}{8}) \sum_{i\in \source}\arrival_{i}\tau_{i} w-w^2 \sum_{j \in \medium \cup \terminal}\process_j- w,
\end{align*}
where the first inequality is due to Lemma \eqref{lem:lower-bound-utility}, the second inequality uses \eqref{eq:small no regret2} and $\epsilon=\frac{\beta}{8}$, and the last inequality uses \eqref{eq: using G-decnetral} and $\frac{1-\frac{5\beta}{8}}{ 1-\beta}> 1+ \frac{\beta}{8}$.
\end{proof}

\Gphiincrease*
\begin{proof}
The increase in potential function is due to packets arrival in queueing system. Under \emph{event B}, by inequality \eqref{eq:concentrate on arrival2}, for each $i$, $\sum_{t= \ell \cdot w}^{(\ell +1) \cdot w-1} X_t^i \leq (1+\epsilon)\arrival_i w$. The number of packets arriving at each node is upper bounded. For each $i$, $Q_{(\ell+1) \cdot w}^i \leq \tau_i +(1+\epsilon)\arrival_i w$. Then, 
\begin{align*}
\Phi(\mathbf{Q_{(\ell+1) \cdot w}})-\Phi(\bm{\tau}) &\leq \sum_{i\in \source}\frac{1}{2}(\tau_{i}+(1+\epsilon)\lambda_{i} w)(\tau_{i}+(1+\epsilon)\lambda_{i} w-1)-\frac{1}{2}\tau_{i}(\tau_{i}-1)\\
&\leq \sum_{i\in \source} (1+\epsilon) \lambda_{i}\tau_{i}w+\frac{1}{2}\sum_{i\in \source}(1+\epsilon)^2 \lambda_{i}^{2}w^{2}\\
& \leq (1+\frac{\beta}{8})\sum_{i\in \source}\lambda_{i}\tau_{i}w+\sum_{i\in \source}\lambda_{i}^{2}w^{2},
\end{align*}
where the last inequality is due to $\epsilon=\frac{\beta}{8}$ and $\frac{1}{2}(1+\epsilon)^2 \leq 1$.
\end{proof}

When $Z_{\ell}$ is larger than threshold value, we can relate $\sum_{(i,j)\in \disjointpathset}(\tau_i-\tau_j)
  \process_j$ to $\sum_{i \in 
 \source \cup \medium} Q_{\ell \cdot w}^i$.
 
\begin{lemma}
\label{lem: lower bound of potential decrease}
When $Z_{\ell}>\frac{8\sqrt{2(n+m)}}{\beta  \process_{(m)}}( \sum_{i\in \source}\lambda_i^2 w^{2}+  \sum_{i\in \medium \cup \terminal} \process_j w^{2} +w)$, we can prove that $\frac{\beta}{4} \sum_{(i,j)\in \disjointpathset^*}(\tau_{i}-\tau_{j})\process_j w -w^{2} \sum_{i\in \source} \lambda_{i}^{2}-w^{2} \sum_{i\in \medium \cup \terminal} \process_j - w> \frac{\beta \process_{(m)} w}{16(n+m)} \sum_{i\in \source \cup \medium} \queue_{\ell \cdot w}^{i}$
\end{lemma}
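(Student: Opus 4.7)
The plan is to execute the argument in three steps: (i) use the hypothesis $Z_\ell > b$ to lower bound $\sum_i Q_{\ell w}^i$; (ii) translate this into a lower bound on $\sum_i \tau_i$; and (iii) show that $\frac{\beta}{4}\sum_{(i,j)\in \disjointpathset^*}(\tau_i - \tau_j)\process_j w$ dominates both the noise terms on the LHS and the target $\frac{\beta \process_{(m)} w}{16(n+m)}\sum_i Q_{\ell w}^i$ on the RHS, using Assumption~\ref{assump:G-decentral}. The explicit form of $b$ is engineered precisely to make this comparison succeed. For step (i), since $\Phi(\mathbf{Q}_{\ell w}) = \frac{1}{2}\sum_i Q_{\ell w}^i(Q_{\ell w}^i - 1) \leq \frac{1}{2}\bigl(\sum_i Q_{\ell w}^i\bigr)^2$, the hypothesis $Z_\ell > b$ yields $\sum_{i\in\source\cup\medium} Q_{\ell w}^i > \sqrt{2}\,b$, and with the chosen $b$ each noise term $w^2 \sum_{\source}\lambda_i^2$, $w^2 \sum_{\medium\cup\terminal}\process_j$, and $w$ is at most a small fraction of $\frac{\beta \process_{(m)} w}{16(n+m)}\sum_i Q_{\ell w}^i$, so after absorbing the noise it suffices to prove $\frac{\beta}{4}\sum_{(i,j)\in \disjointpathset^*}(\tau_i - \tau_j)\process_j \gtrsim \frac{\beta \process_{(m)}}{n+m}\sum_i Q_{\ell w}^i$.

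Step (ii) uses the observation (stated just before Lemma~\ref{lem:lower-bound-utility}) that $Q_{\ell w}^i - w \leq \tau_i \leq Q_{\ell w}^i$, so summing gives $\sum_i \tau_i \geq \sum_i Q_{\ell w}^i - (n+m)w \geq \frac{1}{2}\sum_i Q_{\ell w}^i$ once $\sum_i Q_{\ell w}^i \geq 2(n+m)w$, which is again guaranteed by the threshold. For the crux step (iii), I would take $\disjointpathset^*$ to be a maximizer of $\sum_{(i,j)\in U}(\tau_i - \tau_j)\process_j$ over vertex-disjoint path sets $U$, which is consistent with Assumption~\ref{assump:G-decentral} since it only asserts existence of a valid $U$. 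A standard exchange argument then forces $\tau_i \geq \tau_j$ on every edge of $\disjointpathset^*$ (otherwise deleting the edge strictly increases the objective), so $\sum_{\disjointpathset^*}(\tau_i - \tau_j)\process_j \geq \process_{(m)}\sum_{\disjointpathset^*}(\tau_i - \tau_j)$. Telescoping along each path of $\disjointpathset^*$ expresses the latter sum as $\sum_{\text{starts}}\tau_{i_1} - \sum_{\text{ends}}\tau_{i_k}$, and comparing $\disjointpathset^*$ with the path set from Assumption~\ref{assump:G-decentral} applied at the alternative weight $\alpha_i = Q_{\ell w}^i$ (using $\lvert Q_{\ell w}^i - \tau_i\rvert \leq w$ to translate between $Q$ and $\tau$ at an additive cost of $O(w\sum_j \process_j)$, itself absorbed by the noise) should deliver $\sum_{\disjointpathset^*}(\tau_i - \tau_j)\process_j \gtrsim \frac{\process_{(m)}}{n+m}\sum_i Q_{\ell w}^i$, at which point the lemma follows.

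The main obstacle is step (iii). Assumption~\ref{assump:G-decentral} with $\alpha = \tau$ directly yields only $\sum_{\disjointpathset^*}(\tau_i - \tau_j)\process_j > \frac{2}{1-\beta}\sum_{\source}\tau_i \lambda_i$, which aggregates just over source nodes weighted by arrival rates, whereas the target RHS aggregates over \emph{all} non-terminal nodes (sources and middle servers alike) with uniform weight $\frac{\process_{(m)}}{n+m}$. Bridging this gap requires combining the assumption at a well-chosen alternative weight vector with the maximality of $\disjointpathset^*$, and exploiting paths extended to terminals where $\tau_j = 0$ so that the telescoping collapses onto the desired starts. Once this structural bound is secured, the remaining steps reduce to algebraic verification that the constants in $b$ were chosen tightly enough.
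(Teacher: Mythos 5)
Your proposal correctly identifies the crux (step (iii): relating $\sum_{(i,j)\in \disjointpathset^*}(\tau_i-\tau_j)\process_j$ to $\sum_i \queue_{\ell w}^i$) but does not close it, and the route you sketch---comparing $\disjointpathset^*$ against the path set that Assumption~\ref{assump:G-decentral} produces for the alternative weights $\alpha_i = Q_{\ell w}^i$---is more complicated than what is needed and is left genuinely incomplete. You also observe that the assumption with $\alpha=\tau$ gives a lower bound only in terms of $\sum_{\source}\tau_i\lambda_i$, which aggregates over sources alone, whereas the target aggregates uniformly over all of $\source\cup\medium$; but you do not say how to bridge this, and there is no obvious way to do so by just picking a cleverer $\alpha$.

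The missing idea, which makes the paper's argument short, is to reduce to the \emph{single} node $k$ with the longest queue rather than to the full sum: since $\Phi(\mathbf{Q}_{\ell w}) = \frac12\sum_i Q_{\ell w}^i(Q_{\ell w}^i-1)$ is a sum of $n+m$ terms, the largest single term already satisfies $\frac12 Q_{\ell w}^k(Q_{\ell w}^k-1) > \frac{1}{n+m}Z_\ell^2$, giving a lower bound on $Q_{\ell w}^k$ directly from the threshold on $Z_\ell$. One then builds a \emph{single} directed path $p$ from $k$ to a terminal, keeps only the edges with $\tau_i > \tau_j$ (making $U'$), and telescopes along $p$ to get $\sum_{U'}(\tau_i-\tau_j) \geq \tau_k \geq Q_{\ell w}^k - w$. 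Maximality of $\disjointpathset^*$ (which is why it can be taken WLOG to be a maximizer of $\sum_U(\tau_i-\tau_j)\process_j$, as any maximizer trivially satisfies the inequality in Assumption~\ref{assump:G-decentral}) then gives $\sum_{\disjointpathset^*}(\tau_i-\tau_j)\process_j \geq \process_{(m)}(Q_{\ell w}^k - w) \geq \frac12\process_{(m)}Q_{\ell w}^k$, and finally $Q_{\ell w}^k \geq \frac{1}{n+m}\sum_i Q_{\ell w}^i$ since $k$ is the largest. Your exchange argument (edges in the maximizer $\disjointpathset^*$ have $\tau_i\geq\tau_j$) and your telescoping idea are both correct local observations, but they point at the wrong global quantity; none of the machinery in your step (iii) involving a second invocation of Assumption~\ref{assump:G-decentral} at $\alpha = Q$ is needed. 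Also note that your step (ii), $\sum_i\tau_i\geq\frac12\sum_i Q_{\ell w}^i$, plays no role in the actual proof---only $\tau_k\geq Q_{\ell w}^k-w$ for the single node $k$ is used.
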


\begin{proof}
Let $k$ be the node which has the longest queue at time step $\ell \cdot w$. Then, $\frac{1}{2}\queue_{\ell \cdot w}^{k}(\queue_{\ell \cdot w}^{k}-1)>\frac{1}{(n+m)}\Phi_{\ell \cdot w}=\frac{1}{(n+m)} Z_{\ell}^2$. We can see that 
$\queue_{\ell \cdot w}^{k}>\sqrt{\frac{2}{(n+m)}} Z_{\ell}>\frac{16}{\beta  \process_{(m)}}(\sum_{i\in \source}\lambda_i^2 w^2+ \sum_{i\in \medium \cup \terminal} \process_j w^2 + w)$.
Next, we will give a lower bound of $\sum_{(i,j)\in \disjointpathset^*}(\tau_i-\tau_j)\process_j$. Consider a consecutive path $p$ from node $k$ to any terminal node. Note that $\tau_j=0$ if $j$ is a terminal node because it has no queue. Choose all edges $(i,j)$ with $\tau_{i}>\tau_{j}$ along path $p$ and form a disjoint path set $\disjointpathset'$. Then, $\sum_{(i,j)\in U^{'}}(\tau_{i}-\tau_{j})\geq \sum_{(i,j)\in p}(\tau_{i}-\tau_{j})=\tau_k-0=\tau_{k}$. Since $\disjointpathset^*$ is the disjoint path set that maximizes $\sum_{(i,j)\in \disjointpathset}(\tau_i-\tau_j)\process_j$, then we have
\begin{align*}
    \sum_{ij\in \disjointpathset}(\tau_i-\tau_j)\process_j
    &\geq \sum_{ij\in \disjointpathset'}(\tau_i-\tau_j)\process_j
    \geq \process_{(m)}\sum_{ij\in \disjointpathset'}(\tau_i-\tau_j)\\
    &\geq \tau_{k}\process_{(m)}
    \geq (\queue_{\ell \cdot w}^{k}-w) \process_{(m)}\\
    &\geq \frac{1}{2}\queue_{\ell \cdot w}^{k}\process_{(m)},
\end{align*}
where the second inequality is because $\process_{(m)}$ is the smallest element of $\processes$, the fourth inequality is due to $\tau_k \geq \queue_{\ell \cdot w}^{k}-w$ and the last inequality is due to $Z_{\ell}$ larger than its threshold value leading to $\queue_{\ell \cdot w}^{k}\geq 2w$.
Now, we are ready to prove this lemma.
\begin{align*}
&\frac{\beta}{4} \sum_{(i,j)\in U}(\tau_{i}-\tau_{j})\process_j w -\sum_{i\in \source} \lambda_{i}^{2}w^{2}-\sum_{i\in \medium \cup \terminal} \process_j w^2- w\\
\geq &\frac{\beta}{8}\queue_{\ell \cdot w}^{k}w\process_{(m)}-\sum_{i\in \source} \lambda_{i}^{2}w^{2}-\sum_{i\in \medium \cup \terminal} \process_j w^2- w\\
\geq &\frac{\beta}{16}\queue_{\ell \cdot w}^{k}w\process_{(m)}\\
\geq &\frac{\beta \process_{(m)} w}{16(n+m)} \sum_{i\in \source \cup \medium} \queue_{\ell \cdot w}^{i},
\end{align*}
where the second inequality is due to $\frac{\beta}{16}\queue_{l\cdot w}^{k}w\process_{(m)}>\sum_{i\in \source} \lambda_{i}^{2}w^{2}+\sum_{i\in \medium \cup \terminal} \process_j w^2+ w$, the last inequality is due to $\queue_{\ell \cdot w}^{k} \geq \frac{\sum_{i\in \source \cup \medium} \queue_{\ell \cdot w}^{i}}{n+m}$ since node $k$ has the longest queue.
\end{proof}

\notB*
\begin{proof}
When event~$B$ does not happen, the worst case is no packets cleared and there is a packet arriving at each source at every time step. 
Then, we have
\begin{align*}
&\Phi(\mathbf{Q_{(\ell+1) \cdot w}})-\Phi(\mathbf{Q_{\ell \cdot w}})\\
\leq &\frac{1}{2}\sum_{i\in \source}(\queue_{\ell \cdot w}^{i}+w)(\queue_{\ell \cdot w}^{i}+w-1)-\frac{1}{2}\queue_{\ell \cdot w}^{i}(\queue_{\ell \cdot w}^{i}-1)
\leq \sum_{i\in \source} \queue_{\ell \cdot w}^i w +w^2
\end{align*}

\end{proof}

\Gnegativedrift*

\begin{proof}

Under \emph{event B}, we calculate the  decrease in potential function :
\begin{align*}
    \Phi(\mathbf{Q_{\ell \cdot w}})-\Phi(\mathbf{Q_{(\ell+1) \cdot w}})=&\Phi(\mathbf{Q_{\ell \cdot w}})-\Phi(\bm{\tau})+\Phi(\bm{\tau})-\Phi(\mathbf{Q_{(\ell+1) \cdot w}})\\
     \geq &\frac{\beta}{4} \sum_{(i,j)\in \disjointpathset}(\tau_i-\tau_j)
  \process_j w-\sum_{i\in \source}\lambda_{i}^{2} w^{2}-\sum_{j \in \medium \cup \terminal}\process_j w^{2} -w\\
  \geq &\frac{\beta \process_{(m)} w}{16(n+m)} \sum_{i\in \source \cup \medium} \queue_{\ell \cdot w}^{i},
\end{align*}
where the first inequality uses Lemma \ref{lem:phi-decrease-G} and \ref{lem:phi-increase-G} and the second inequality uses Lemma \ref{lem: lower bound of potential decrease}.
\emph{Event B} happens with probability at least $1-\frac{\beta \process_{(m)}}{32(n+m)}$, which implies the probability that \emph{Event B} does not happen is at most $\frac{\beta \process_{(m)}}{32(n+m)}$. According to Lemma \ref{lem:not-B}, when \emph{Event B} does not happen, 
\begin{align*}
&\Phi(\mathbf{Q_{(\ell+1) \cdot w}})-\Phi(\mathbf{Q_{\ell \cdot w}})
\leq \sum_{i\in \source} \queue_{\ell \cdot w}^i w +w^2
\leq \frac{17}{16} w \sum_{i\in \source \in \medium}  \queue_{\ell \cdot w}^i,
\end{align*}
where the second inequality is due to $\sum_{i\in \source \cup \medium} \queue_{\ell \cdot w}^i \geq 16w$ since $Z_{\ell}$ is larger than threshold value.
Then, in expectation, 
\begin{align*}
&\Ex{\Phi(\mathbf{Q_{(\ell+1) \cdot w}})-\Phi(\mathbf{Q_{\ell \cdot w}}) \given \Phi(\mathbf{Q_{\ell \cdot w}})>b^2}\\
\geq &(1-\frac{\beta \process_{(m)}}{32(n+m)})\cdot \frac{\beta \process_{(m)}w}{16(n+m)}  \sum_{i\in \source \cup \medium} \queue_{\ell \cdot w}^{i}-\frac{\beta \process_{(m)} }{32(n+m)} \cdot \frac{17}{16}w \sum_{i\in \source \cup \medium} \queue_{\ell \cdot w}^{i}\\
\geq & \frac{31}{32} \cdot \frac{\beta \process_{(m)}w}{16(n+m)}  \sum_{i\in \source \cup \medium} \queue_{\ell \cdot w}^{i}-\frac{\beta \process_{(m)} w}{32(n+m)}\cdot \frac{17}{16} \sum_{i\in \source \cup \medium} \queue_{\ell \cdot w}^i\\
>&\frac{\beta \process_{(m)}w}{64(n+m)} \sum_{i\in \source \cup \medium} \queue_{\ell \cdot w}^i
\end{align*}
Since $a-b \geq c$, then $\sqrt{a}-\sqrt{b}>\frac{c}{2\sqrt{a}}$. 
\begin{align*}
Z_{\ell}-Z_{\ell+1}&=\sqrt{\Phi(\mathbf{Q_{\ell \cdot w}})}-\sqrt{\Phi(\mathbf{Q_{(\ell+1) \cdot w}})}\\
&\geq \frac{\beta \process_{(m)}w\sum_{i\in \source \cup \medium} \queue_{\ell \cdot w}^{i}}{64(n+m) \cdot 2\sqrt{\frac{1}{2}\sum_{i\in \source \cup \medium} \queue_{\ell \cdot w}^{i}(\queue_{\ell \cdot w}^{i}-1)}} \\
&\geq \frac{\sqrt{2}\beta \process_{(m)}w}{128(n+m)}.
\end{align*}
\end{proof}

After proving that random process $Z_\ell$ satisfies Negative drift condition, we are ready to prove that $Z_\ell$ satisfies Bounded jump condition.

\Gboundedjump*

\begin{proof}

At each time step $t$, for each node $i$, $\queue_{t}^i$ changes at most $1$. For each node $i$, $\queue_{\ell \cdot w}^i$ changes at most $w$. Then, the change in $\Phi$ is at most $w\sum_{i\in \source\cup \medium} \queue_{\ell \cdot w}^i+n w^2$. We know that if $a,b,c\geq 0, a-b\leq c$, then $\sqrt{a}-\sqrt{b}\leq \frac{c}{2\sqrt{b}}$. The change in $Z_\ell$ is at most 
\begin{align*}
\frac{w\sum_{i\in \source \cup \medium} \queue_{\ell \cdot w}^i+n w^2}{2\sqrt{\sum_{i\in \source \cup \medium} \frac{1}{2}\queue_{\ell \cdot w}^{i}(\queue_{\ell \cdot w}^{i}-1)}}
\end{align*}

By Cauchy-Schwarz inequality, we have $\sum_{i\in \source \cup \medium} \queue_{\ell \cdot w}^i \leq \sqrt{(n+m)} \sqrt{\sum_{i\in \source\cup \medium} (\queue_{\ell \cdot w}^i)^2}$. Then, change in $Z_{\ell}$ is at most $\frac{\sqrt{2}}{2}(n+m)w+ nw^2$. The $p$-th moment change in $Z_{\ell}$ is at most $(\frac{\sqrt{2}}{2}(n+m)w+ nw^2)^p$, which is a constant irrelevant to $\ell$. Thus, random process $Z_{\ell}$ satisfies Bounded jump condition.
\end{proof}

\thmGdecentral*
\begin{proof}
$Z_\ell$ satisfies two conditions in Theorem \ref{pemantale}. Then, for each $a>0$, there is a constant $c_a$ such that for any $\ell$, $\Ex{z_{\ell}^a}\leq c_a$, which means that $\Ex{\left(\frac{1}{2}\sum_{i \in \source \cup \medium}  \queue_{\ell \cdot w}^{i}\left(\queue_{\ell \cdot w}^{i}-1\right)\right)^{\frac{a}{2}}} \leq c_a$. 
Then there is a constant $c'_{a}$ such that for any $\ell$, $\Ex{(\queue_{\ell \cdot w})^a}=\Ex{(\sum_{i \in \source \cup \medium} \queue_{\ell \cdot w}^{i})^a} \leq c'_{a}$.  For any $\ell \cdot w \leq t \leq (\ell+1)\cdot w$, for any $\queue_{t}\leq \queue_{\ell \cdot w}+(n+m)w$. Therefore, for each $a>0$, $t\geq 0$, there is a constant $c''_{a}>0$, such that
\begin{align*}
  \Ex{(\queue_{t})^a}\leq c''_{a}.
\end{align*}
Then, the queueing system is strongly stable.
\end{proof}

\section{Missing Proofs from Section~\ref{sec:patient}}
\label{sec:app-patient}

\begin{example}
\label{ex:nash-regret}
Consider a queueing system with two queues with arrival rates $\arrival_1=\arrival_2=0.4$ and two servers with service rates $\process_1=0.4+\epsilon, \process_2=1-\epsilon$, where $\epsilon$ is sufficiently small. 
Both servers can serve both queues. 
Strategy profile $\mathbf{p}$ with $p_{11}=p_{22}=1, p_{12}=p_{21}=0$ forms a Nash equilibrium. 
However, it is not a no-regret learning strategy for queue 1. 
To see this, we first observe that for large enough~$t$, 
$\Prx{\queue_t^1 >0} \approx \frac{\arrival_1}{\process_1}=\frac{0.4}{0.4+\epsilon}$
by viewing queue 1 and server 1 as a birth-and-death chain and computing its stationary distribution. 
The distribution of $\queue_t^1$ converges to its stationary distribution, and there is $t_0$ such that for any $t>t_0$, $\Prx{\queue
_t^1 >0}>\frac{0.4}{0.4+\epsilon}-\epsilon > 1-4\epsilon$. 
Consider a time window with large enough length $w$ after $t_0$; with high probability, queue 1 clears fewer than $0.5w$ packets. 
If queue 1 always chooses server 2 during this time window instead, it would have sent packets to server 2 at least $(1-4\epsilon)w$ times. 
Similarly, queue 2 sends packets to server 2 at most $(\frac{\arrival_2}{\process_2}+\epsilon)w=
(\frac{0.4}{1-\epsilon}+\epsilon)w \leq 0.4(1+4\epsilon) w $ times. 
At any time step $t$, server 2 would have chosen packets from queue 1 to serve if queue 2 does not send a packet. 
Then, the number of packets from queue 1 that server~2 would have processed successfully is at least $(1-4\epsilon - 0.4(1+4\epsilon))(1 - \eps)w \geq 0.55w$ with high probability.
The regret of queue~1 is therefore linear in~$t$ by playing~$\strats$.
\end{example}
\section{Adversarial Model for General Graphs}
\label{sec:adversarial}
\subsection{Adversarial queueing model}
\begin{definition}
It is a discrete-time queueing system with $n$ sources and $m$ servers. For each queue, only a subset of servers can serve its packets. At each time step, an adversary sends 0 or 1 packet to each queue $i$. Let $\theta$ be any sequence of $w$ consecutive time steps, $i$ any queue.
Define $N(\theta,i)$ to be the number of packets queue $i$ receives during time interval $\theta$. 
Define a $(w,\arrivals)$ adversary: for every sequence $\theta$ of $a$ consecutive time steps and for each queue $i$,  $N(\theta,i)/w \leq \lambda_{i}$, for each queue $i$. For each server $j$, it can successfully process a packet with probability $\process_j$.
The rest of model is the same as \emph{Queue-G} Model in section \ref{sec:G-decentral-sol}.
The only difference between \emph{Queue-G} Model and adversarial queueing model is that in \emph{Queue-G}, at each time step, whether a packet arriving at a source node 
is an independent Bernoulli random variable; while in adversarial queueing model, it is decided by the adversary.
\end{definition}

\subsection{Centralized condition for stability in general graphs}

\begin{theorem}
Given an adversarial queueing model with $n$ source nodes and $m$~servers, with arrival rates $\arrivals = (\arrival_1, \cdots, \arrival_n)$ and processing rates $\processes = (\process_1, \ldots, \process_m)$, there is a centralized policy under which the system is stable if and only
if for any $ \bm{\alpha}\in \{\mathbb{R}_{+}^{n+m} \given \alpha_{i}=0 \ \text{if}\ i \in \terminal\}$, there is a vertex-disjoint path set $\disjointpathset$ such that 
$$
\sum_{(i,j)\in \disjointpathset}(\alpha_{i}-\alpha_{j})\process_{j}>\sum_{i\in \source}\alpha_{i}\arrival_{i}
$$
\end{theorem}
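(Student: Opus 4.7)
The theorem directly parallels Theorem~\ref{thm:G-central} and Lemma~\ref{lem:G-dual} from the Bernoulli setting, and the plan is to leverage that machinery. Since Lemma~\ref{lem:G-dual} is a purely LP/Farkas statement independent of the arrival process, it remains valid here: the stated dual condition is equivalent to feasibility of the primal linear system appearing as the second statement of Theorem~\ref{thm:G-central}. It therefore suffices to prove that in the adversarial model, a stable centralized policy exists if and only if this primal linear system is feasible.

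For sufficiency, let $\{z_{ij}\}$ be a feasible primal solution witnessing $P\processes \succ \arrivals$ in the appropriate sense. I would apply Lemma~\ref{lem:path-decomp} to decompose the edge flow into a convex combination of vertex-disjoint path ensembles, and let the central planner sample a path ensemble independently at every time step. Using the potential $\Phi(\mathbf{Q}_t) = \frac12 \sum_{i \in \source \cup \medium} Q_t^i(Q_t^i-1)$ and $Z_\ell = \sqrt{\Phi(\mathbf{Q}_{\ell w})}$, the bounded-jump and negative-drift conditions of Theorem~\ref{pemantale} need to be verified. The analysis is in fact easier than in the Bernoulli case because arrivals carry no variance: the adversary contributes at most $\lambda_i w$ packets to queue $i$ over any window of length $w$ by definition, while server successes concentrate at $(1 \pm o(1))\process_j w$ by Chernoff, and strict primal feasibility yields a strictly negative drift over each window once the queue sizes are large.

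For necessity, suppose the condition fails: there is a non-negative $\bm{\alpha}$ with $\alpha_i = 0$ on $\terminal$ such that $\max_U \sum_{(i,j)\in U}(\alpha_i - \alpha_j)\process_j \leq \sum_{i \in \source}\alpha_i \lambda_i$ for every vertex-disjoint path set~$U$. Let the adversary fire at the maximum rate $\lambda_i$ at each source, say along a deterministic periodic schedule. Consider the linear Lyapunov function $\Psi_t = \sum_i \alpha_i Q_t^i$. Because any centralized policy at each step routes along a vertex-disjoint path set $U_t$, the expected routing contribution per step satisfies
\[
\Ex{\sum_{(i,j) \in U_t}(\alpha_i - \alpha_j) \process_j \indicator[Q_t^i \ge 1] \given \history_t} \leq \max_U \sum_{(i,j) \in U}(\alpha_i - \alpha_j)\process_j.
\]
Combined with the per-step arrival contribution of $\sum_i \alpha_i \lambda_i$, the failure of the condition gives $\Ex{\Psi_{t+1} - \Psi_t \given \history_t} \ge 0$ whenever the queues with $\alpha_i > 0$ are non-empty. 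Because the Bernoulli server outcomes inject a positive variance $\sigma^2 > 0$ into each step on which packets are being routed, a submartingale-plus-quadratic-variation argument yields $\Ex{\Psi_t^2} = \Omega(t)$, contradicting strong stability.

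The main obstacle is the boundary case of the necessity direction, where $\max_U = \sum_i \alpha_i \lambda_i$ rather than strict inequality. Here the Lyapunov argument only delivers non-negative drift, so the contradiction with strong stability has to come from variance rather than from a uniformly positive drift. Formalizing this requires two careful steps: first, showing that the system enters and re-enters the ``saturated'' regime (where every queue with $\alpha_i > 0$ is non-empty) infinitely often, which follows from the adversary continually injecting at rate $\lambda_i$ into every source; and second, bounding the per-step conditional variance of $\Delta \Psi_t$ away from zero inside that regime, which yields quadratic variation $\Omega(t)$ and hence unbounded $\Ex{\Psi_t^2}$. I expect the technical work to lie almost entirely in this step.
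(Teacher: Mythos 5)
The paper states this theorem in Appendix~\ref{sec:adversarial} without any proof, so there is nothing to compare your argument against; judged on its own terms, your proposal is sound in one direction and has a genuine gap in the other.

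The sufficiency direction is fine. Lemma~\ref{lem:G-dual} is indeed a purely linear-algebraic statement, so it converts the stated dual condition into feasibility of the linear system in Theorem~\ref{thm:G-central}, and the construction via Lemma~\ref{lem:path-decomp} together with the drift verification of Lemma~\ref{lem:2to1} carries over; as you observe, the deterministic per-window arrival bound only makes the negative-drift step easier.

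The necessity direction does not go through, and the obstacle you flag as ``the main technical work'' is fatal to your route (and, as far as I can tell, to the statement itself at the boundary). Two concrete problems. First, the drift of $\Psi_t$ is not non-negative: a $(w,\arrivals)$-adversary must satisfy $N(\theta,i)\le \arrival_i w$ for \emph{every} window, so it delivers at most $\lfloor \arrival_i w\rfloor$ packets per window, a per-step deficit of order $1/w$ whenever $\arrival_i w\notin\mathbb{Z}$. A nonnegative process with per-step conditional drift $-\Theta(1/w)$ and bounded increments is positive recurrent, hence strongly stable, so the quadratic-variation bound $\Ex{\Psi_t^2}=\Omega(t)$ cannot be extracted. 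Already the instance with one source, one terminal server, $\arrival=\process=1/2$ and $w=3$ is stable (the adversary's effective rate is at most $1/3$), yet the dual condition fails with equality at $\alpha=1$. Second, even when the adversary can realize the rate exactly, the conditional variance you rely on can vanish: take two sources with $\arrival_1=\arrival_2=1/2$ feeding a single terminal server with $\process=1$. Any vertex-disjoint path set contains at most one of the two edges, so with $\alpha_1=\alpha_2=1$ the left-hand side is at most $1=\alpha_1\arrival_1+\alpha_2\arrival_2$ and the strict condition fails; yet the greedy centralized policy keeps the total backlog deterministically below $w+O(1)$ against every admissible adversary. So the ``only if'' direction cannot be rescued by a variance argument; a correct treatment has to either relax the strict inequality in the characterization or build in a slack parameter as in Borodin et al.'s $(w,\rho)$-adversaries with $\rho<1$.
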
   

\subsection{Decentralized condition for stability in general graphs}
\begin{assumption}\label{decentralized adversary}
There is a $\beta>0$ such that for any $ \bm{\alpha}\in \{\mathbb{R}_{+}^{n+m} \given \alpha_{i}=0 \ \text{if}\ i \in \terminal\}$, there is a vertex-disjoint path set $\disjointpathset$ such that 
$$
\frac{1}{2}(1-\beta)\sum_{(i,j)\in \disjointpathset}(\alpha_{i}-\alpha_{j})\process_{j}>\sum_{i\in \source}\alpha_{i}\arrival_{i}
$$
\end{assumption}

\begin{theorem}
  If an adversarial queueing system satisfies Assumption \ref{decentralized adversary}, and queues use no-regret learning strategies with
  $\delta=\frac{\beta \process_{(m)}}{96(n+m)^2}$, then the system is strongly stable.
\end{theorem}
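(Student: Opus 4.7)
The plan is to follow the potential-function argument in the proof of Theorem~\ref{thm:G-decentral} essentially verbatim, exploiting the fact that Assumption~\ref{decentralized adversary} coincides with Assumption~\ref{assump:G-decentral}. I adopt the same queue-length-aware utility and longest-queue service priority rule from Section~\ref{sec:G-decentral-sol}, the same potential $\Phi(\mathbf{Q}_t) = \tfrac{1}{2}\sum_{i \in \source \cup \medium} Q_t^i(Q_t^i-1)$, and set $Z_\ell \coloneqq \sqrt{\Phi(\mathbf{Q}_{\ell w})}$. The goal is to verify the negative drift and bounded jump conditions of Theorem~\ref{pemantale} for $(Z_\ell)_\ell$; strong stability then follows by the same deduction used at the end of Theorem~\ref{thm:G-decentral}. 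The key structural observation is that every step of that proof which touched the arrival process did so only to bound $\sum_{t=\ell w}^{(\ell+1)w-1} X_t^i$ from above by a constant multiple of $\arrival_i w$, which in the adversarial setting is provided deterministically by the envelope $N(\theta,i)\leq \arrival_i w$.

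First I would redefine the high-probability event. The analogue of event~$B$ (Definition~\ref{def:eventB}) drops the arrival concentration inequality entirely, keeping only the server-processing concentration (handled by Chernoff using independence of $C_t^j$ across $t$) and the regret bound (handled by the no-regret assumption with $\delta = \tfrac{\beta \process_{(m)}}{96(n+m)^2}$). A union bound reproduces the estimate $\Prx{B} \geq 1 - \tfrac{\beta \process_{(m)}}{32(n+m)}$ from Claim~\ref{claim:eventB}. Crucially, Lemma~\ref{lem:lower-bound-utility} transfers without change: its proof uses only the service priority rule, the within-window processing concentration, and the definition of regret, and never invokes any randomness of arrivals. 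Consequently Lemma~\ref{lem:phi-decrease-G} also carries over, instantiating Assumption~\ref{decentralized adversary} at $\boldsymbol{\alpha} = \boldsymbol{\tau}$ (with $\tau_i$ the count of pre-window packets still present at time $(\ell+1)w$) to obtain the same lower bound $\Phi(\mathbf{Q}_{\ell w}) - \Phi(\boldsymbol{\tau}) \geq \tfrac{\beta}{4}\sum_{(i,j)\in U^\ast}(\tau_i - \tau_j)\process_j w + (1+\tfrac{\beta}{8})\sum_{i\in \source}\arrival_i \tau_i w - \sum_{j \in \medium \cup \terminal}\process_j w^2 - w$.

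For the upper bound on the potential increase (the analogue of Lemma~\ref{lem:phi-increase-G}), I use the adversarial envelope directly: since at most $\arrival_i w$ packets enter source~$i$ during the window, $Q_{(\ell+1)w}^i \leq \tau_i + \arrival_i w$ for each $i \in \source$, which yields $\Phi(\mathbf{Q}_{(\ell+1)w}) - \Phi(\boldsymbol{\tau}) \leq \sum_{i\in \source}\arrival_i \tau_i w + \tfrac{1}{2}\sum_{i\in \source}\arrival_i^2 w^2$. This deterministic bound is at least as tight as the one obtained via Chernoff in the stochastic model, so it fits inside the $\tfrac{\beta}{8}$ slack carved out by Assumption~\ref{decentralized adversary}. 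The off-event bound replacing Lemma~\ref{lem:not-B} is unchanged: at worst no packets are cleared and one packet arrives per step, giving $\Phi(\mathbf{Q}_{(\ell+1)w}) - \Phi(\mathbf{Q}_{\ell w}) \leq \sum_{i\in \source}Q_{\ell w}^i w + w^2$. Combining these three ingredients through Lemma~\ref{lem: lower bound of potential decrease} reproduces the negative drift with the same threshold $b$ and drift constant $c$ as in Lemma~\ref{lem:negative-drift-G}, and Lemma~\ref{lem:bounded-jump-G} transfers verbatim because each $Q_t^i$ still changes by at most one per step.

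I do not expect a serious obstacle: the principal subtlety is simply checking that each constant in the drift calculation survives the removal of the $(1+\epsilon)$ inflation that Chernoff produced in the stochastic setting. Since the adversarial envelope yields a sharper bound than its stochastic counterpart, every constant works in our favour. The one place that deserves a moment of care is the definition of strong stability (Definition~\ref{def:stability}): the expectation is now taken over the randomness in server processing and in queues' possibly mixed strategies against a worst-case arrival sequence, and the Pemantle framework still applies because conditional on $\mathcal{F}_{\ell w}$ the adversary's choice of arrivals over the next window is measurable, so the drift and jump bounds hold for every realisation of the adversary. This yields moment bounds on $Q_{\ell w}$ uniform in $\ell$ and, via the at-most-$(n+m)w$ intra-window fluctuation, uniform moment bounds on $Q_t$ for all $t$, establishing strong stability.
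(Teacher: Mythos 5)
Your proposal is correct and matches the paper's approach: the paper's own proof is a one-line remark that the argument is identical to Theorem~\ref{thm:decentralized G} except that the arrival-concentration inequality \eqref{eq:concentrate on arrival2} now holds with probability one under the $(w,\arrivals)$-adversary. You have simply spelled out, lemma by lemma, why each ingredient (Lemma~\ref{lem:lower-bound-utility}, Lemma~\ref{lem:phi-decrease-G}, the analogue of Lemma~\ref{lem:phi-increase-G}, Lemma~\ref{lem: lower bound of potential decrease}, Lemma~\ref{lem:bounded-jump-G}) transfers, with the deterministic envelope replacing Chernoff and yielding only sharper constants.
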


The proof of this theorem is pretty much similar to the proof of Theorem \ref{thm:decentralized G}. The slightly difference is in Definition \ref{def:eventB}, \eqref{eq:concentrate on arrival2} holds with probability 1 in the \emph{adversarial queueing model}, making Claim \ref{claim:eventB} easier to analyze.

\section{Queueing Systems with Multiple Types and Multiple Layers}
\label{sec:GN2}

\subsection{Model Description}
We consider a multi-type discrete time queueing network, which is represented by a directed acyclic graph $G=(V=\source \cup \medium \cup \terminal,E)$ with arrival or processing rates on the nodes. It is a generalization of \emph{Queue-G} Model. 
A node~$i$ with no incoming edge is a \emph{source}, and has an arrival rate~$\arrival_i$. For each $i$, $\arrival_i \in (0,1)$. We define packets coming from node $i$ as type-$i$ packets.
We denote the set of sources by~$\source$. For each type-$i$ packet, only a subset of servers can serve it.
All the other nodes are \emph{servers}, and each server~$j$ has a processing rate~$\process_j$.
A server with no outgoing edge is a \emph{terminal}.
We denote the set of terminals by~$\terminal$.
The set of non-terminal server nodes is $\medium \coloneqq V - \source - \terminal$. 
For each type-$i$ packet, we use $\paths_i$ to denote the set of feasible paths type-$i$ packets can go through.
Define $\indicator_{(x,y)^{i}}=1$ if edge $(x,y)\in \paths_i$.
For $x \in \source \cup \medium$, we denote by $\outneighbor(x)^{i} \coloneqq \{y \in V: \indicator_{(x,y)^{i}}=1\}$ the set of out-neighbors of~$i$ with type-$i$, and for a server~$i$, we denote by $\inneighbor(x)^{i} \coloneqq \{y \in V: \indicator_{(y,x)^{i}}=1\}$ the set of in-neighbors of~$x$ with type-$i$. 

Denote by $\queue_t^{x,i}$ the number of type-$i$ packets in node $x$ at the beginning of time step $t$. 
For all $i \in V$, $\queue_0^i = 0$.
At each time step $t$, the following events happen, in two phases:
(I) packet sending: 
each node~$x$ with $\queue_t^{x,i}>0$ 
chooses a server~$y$ from $\outneighbor(x)^{i}$ and sends $y$ a type-$i$ packet in $x$'s queue only if $\queue_t^{x,i} \geq \queue_t^{y,i}$;
(II) packet arrival and processing:
at each source $i \in \source$, a packet arrives with probability~$\arrival_i$;
at each server $y \in \medium \cup \terminal$, if multiple packets are sent to~$y$, let $x$ be the node which sends the packet and $i$ be the type of packet, it chooses the packet which maximizes $\queue_t^{x,i}-\queue_t^{y,i}$(breaking ties arbitrarily) to proceed, and succeeds with probability~$\process_y$.

The arrivals of tasks to sources and the successes of processing at servers are mutually independent events.
If a packet is cleared by server $j \in \medium$, it then joins the queue of server~$j$; 
if a packet is cleared by a server in~$\terminal$, it leaves the system. 
A packet not chosen by or not successfully processed by a server goes back to the node that sends it.
It follows that any $x \in \terminal$ and any type-$i$ has $\queue_t^{x,i}=0$ at any time step~$t$. 

\subsection{Centralized Stability Condition}
\begin{theorem}\label{thm:GN-central}
There exists a centralized policy under which the system is stable if and only if the following system is feasible:
\begin{align*}
\arrival_i < \sum_{y} z_{iy}^{i} \process_y & \ , \forall i \in \source
\\
\process_x \sum_y z_{yx}^{i}  < \sum_y z_{xy}^{i} \process_y & \ , \forall x \in \medium, \forall i \in \source \textnormal{ with } \prod_{y \in \inneighbor(i)} z_{yx}^{i} > 0
\\
\sum_i \sum_y z_{xy}^{i}  \leq 1 & \ , \forall x \in \source \cup \medium
\\
\sum_i\sum_y z_{yx}^{i}  \leq 1 & \ , \forall x \in \medium \cup \terminal
\\
z_{xy}^{i}=0 & \ , \forall (x,y) \text{not on one of paths for packet} \ i
\end{align*}
\end{theorem}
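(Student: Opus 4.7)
}
The plan is to mirror the structure of the proof of Theorem~\ref{thm:G-central}, lifting each step from the single-type setting to the multi-type setting in which an edge $(x,y)$ may be shared by packets of several types. Write $\mathbf z = (z_{xy}^i)$ for the decision variables and note that the outgoing-capacity constraint $\sum_i \sum_y z_{xy}^i \leq 1$ and the incoming-capacity constraint $\sum_i \sum_y z_{yx}^i \leq 1$ now aggregate across types, while the balance/arrival constraints \eqref{eq:source-process1}-like are still type-indexed. Accordingly, the natural object to decompose is not a set of vertex-disjoint paths but a set of vertex-disjoint \emph{type-labeled} paths, in which every edge of the ensemble carries a single type label $i$ with $\indicator_{(x,y)^i} = 1$, and no two labeled edges share a head or a tail regardless of their labels.

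For sufficiency, I would first extend Lemma~\ref{lem:path-decomp} (path decomposition) to this labeled setting. View $\mathbf z$ as a nonnegative matrix indexed by $(\source \cup \medium) \times (\medium \cup \terminal)$ whose $(x,y)$-entry is the aggregated weight $\sum_i z_{xy}^i$; by Birkhoff--von Neumann this yields a distribution over matchings that respects the aggregated marginals, and then within each matching one splits each matched edge $(x,y)$ further, proportionally to $(z_{xy}^i)_i$, into sub-distributions over type labels. The outcome is a distribution over labeled, vertex-disjoint path ensembles with $\Pr[\text{edge } (x,y) \text{ is used with label } i] = z_{xy}^i$. The centralized policy then samples such a labeled ensemble independently each step; at node $x$ with a non-empty queue of type~$i$ that is matched along label $i$ to $y$, it sends a type-$i$ packet to $y$. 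The stability argument proceeds via Pemantle's theorem on the potential $\Phi = \tfrac12 \sum_{x,i} Q_t^{x,i}(Q_t^{x,i}-1)$: for each source~$i$, arrivals at rate $\arrival_i$ are strictly dominated by the clearing rate $\sum_y z_{iy}^i \process_y$ from \eqref{eq:source-process1}; for each intermediate server $x$ and each type $i$ with positive incoming flow, the aggregate clearing rate of type-$i$ packets strictly exceeds the arrival rate of type-$i$ packets into $x$, by \eqref{eq:middle-process1}. This should yield the negative drift, while the bounded-jump condition is automatic.

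For necessity, I would mimic the argument in Lemma~\ref{lem:1to3}. Assume stability under a centralized policy; fix a horizon $T$, and let $A_T$ be the event that, for every source $i$, the number of arrivals exceeds $\arrival_i T + \sqrt{T}$, which holds with probability bounded below by a constant $c_0$ independent of $T$. Define $z_{xy}^i$ to be the conditional expectation, given $A_T$, of the fraction of time steps in $[1,T]$ at which a type-$i$ packet is successfully routed along $(x,y)$. The outgoing and incoming capacity inequalities and the zero constraint $z_{xy}^i = 0$ for infeasible edges follow directly. For \eqref{eq:source-process1} and \eqref{eq:middle-process1} with $\leq$ in place of $<$, one sums over types at each node and uses that, conditional on arrivals, the processing outcomes are independent of routing; stability forces the per-type departure rate at each queue (source or intermediate) to match or exceed the per-type arrival rate, and an auxiliary flow formulation analogous to statement~3 of Theorem~\ref{thm:G-central} can be used to upgrade $\leq$ to $<$ by a small scaling along a topological order of the DAG.

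The main obstacle I anticipate is the labeled path decomposition: one must show that the per-type marginals $(z_{xy}^i)$ are compatible with the aggregated matching decomposition, i.e.\ that type labels can be assigned to the matched edges in each realization so that the joint distribution has the correct $(x,y,i)$-marginals. This is a routine but non-trivial coupling, and I would handle it by first applying Birkhoff--von Neumann to the aggregated matrix and then performing, within each extreme matching, an independent multinomial split of each matched edge into type labels with probabilities $z_{xy}^i / \sum_{i'} z_{xy}^{i'}$. A secondary obstacle is the strictness of the inequalities in \eqref{eq:source-process1} and \eqref{eq:middle-process1}; exactly as in Theorem~\ref{thm:G-central}, introducing an intermediate ``flow'' formulation on $(f_{i\pi})$ indexed over type-$i$ feasible paths $\pi \in \paths_i$ and rescaling $z_{xy}^i \gets z_{xy}^i \cdot \gamma^{(t(x)+1)/(|V|+1) - 1}$ along a topological order, for a suitable $\gamma > 1$, should restore strict inequalities without violating capacity constraints.
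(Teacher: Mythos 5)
The paper itself gives no proof of Theorem~\ref{thm:GN-central}; it states that the argument mirrors Theorem~\ref{thm:G-central} and leaves the adaptation implicit. Your proposal fills in exactly that adaptation, and the key new ingredient -- the labeled path decomposition obtained by running Birkhoff--von Neumann on the type-aggregated matrix $\bar z_{xy}=\sum_i z_{xy}^i$ and then performing an independent multinomial split on each matched edge -- is correct: the induced ensemble is vertex-disjoint because the underlying matching is, every realized label $i$ on an edge $(x,y)$ has $z_{xy}^i>0$ and hence $\indicator_{(x,y)^i}=1$, and the marginal probability of seeing $(x,y,i)$ is $\bar z_{xy}\cdot z_{xy}^i/\bar z_{xy}=z_{xy}^i$. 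With this, the per-queue negative-drift check for each pair $(x,i)$ and the application of Pemantle's theorem go through verbatim as in Lemma~\ref{lem:2to1}, and the necessity direction via an auxiliary flow system on $(f_{i\pi})_{\pi\in\paths_i}$ plus the $\gamma$-rescaling along a topological order reproduces Lemmas~\ref{lem:1to3}--\ref{lem:3to2}. So this is essentially the same route the paper intends.

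One wording slip worth fixing in the necessity step: you define $z_{xy}^i$ as the (conditional) rate at which type-$i$ packets are \emph{successfully} routed along $(x,y)$, but the variable appearing in the edge condition is the \emph{request} rate -- it is $z_{xy}^i\process_y$ that should equal the success rate. Your capacity inequalities happen to hold under either reading, but the source inequality $\arrival_i<\sum_y z_{iy}^i\process_y$ does not follow if $z$ already accounts for the $\process_y$ factor. This is resolved automatically once you pass through the flow formulation (where the per-path variables $f_{i\pi}$ are indeed success rates and the conversion $z_{xy}^i=\sum_{\pi\in\paths_i:\pi\ni(x,y)}f_{i\pi}/\process_y$ is made explicitly, as in the proof of Lemma~\ref{lem:3to2}); just be explicit about which object is which, since both appear in the argument.
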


\begin{definition}
A vertex-disjoint path with type (one such path need not start from a source or end at a terminal) is a collection of edges with type. We use $(x,y,i)$ to represent edge $(x,y)$ with type-$i$ and require that node $y$ can serve type-$i$ packets and node $x$ can serve type-$i$ packets if node $x$ is a server or node $x$ is source $i$. Any two edges in the path can't have the same head or the same tail. 
\end{definition}
Before moving on to decentralized Queue-G models, we derive the dual form of the conditions in Theorem~\ref{thm:GN-central}.
It turns out that the dual form plays a central role in our analysis of the systems' stability under no-regret policies.  
\begin{lemma}
  \label{lem:GN-dual}
  Given a Queue-G model $(V, E, \arrivals, \processes)$ with $n$ sources and $m$ servers, the following two conditions are equivalent:
\begin{enumerate}[(1)]

  \item The linear system of the second statement in Theorem \ref{thm:GN-central} is feasible.
  \item For any $\bm{\alpha}\in \{\mathbb{R}_{+}^{n(n+m)} \given \alpha_{x}^i=0 \ \text{if}\ x \in \terminal \ \text{or if} \ x \in \source \ \text{and} \ x\neq i \}$, there is a disjoint path set with type $ U$ such that $\sum_{(x,y,i)\in U}(\alpha_{x}^{i}-\alpha_{y}^{i})\process_{y}>\sum_{x\in \source}\alpha_{x}^{x}\arrival_{x}$.
\end{enumerate}
\end{lemma}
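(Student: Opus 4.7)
The plan is to follow the template of Lemma~\ref{lem:G-dual} and apply Farkas' lemma after first decomposing any fractional solution of the linear system in Theorem~\ref{thm:GN-central} as a convex combination of typed vertex-disjoint path sets. Concretely, I would first prove a multi-type generalization of Lemma~\ref{lem:path-decomp}: any feasible $(z_{xy}^i)$ can be written as $z_{xy}^i = \sum_{M : (x,y,i) \in M} x_M$ for some distribution $\{x_M\}$ over typed vertex-disjoint path sets $M$. To see this, consider the bipartite multigraph whose left and right vertex sets are $\source \cup \medium$ and $\medium \cup \terminal$ respectively, with parallel edges labeled by type. The outgoing and incoming capacity constraints $\sum_i \sum_y z_{xy}^i \leq 1$ and $\sum_i \sum_y z_{yx}^i \leq 1$ from Theorem~\ref{thm:GN-central} make the aggregated edge weights doubly substochastic, so the multigraph version of Birkhoff--von Neumann yields the desired decomposition over matchings, each of which is precisely a typed vertex-disjoint path set.

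Next, I would rewrite condition~(1) in terms of this distribution and apply Farkas' lemma. Introduce dual variables $\alpha_x^i$ for each arrival or flow-conservation constraint (with the convention $\alpha_x^i = 0$ when $x \in \terminal$, or when $x \in \source$ and $x \neq i$), and a single scalar $\beta$ for the normalization $\sum_M x_M \leq 1$. A direct computation, analogous to the one in the proof of Lemma~\ref{lem:G-dual}, shows that the coefficient of $x_M$ in the weighted sum of constraint rows collapses to $\sum_{(x,y,i) \in M}(\alpha_x^i - \alpha_y^i)\process_y$, since for each typed edge $(x,y,i)$ only the out-term at $x$ and the in-term at $y$ contribute, with $\alpha$ vanishing at the appropriate endpoints by the domain restrictions. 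Farkas' lemma (with the standard $\epsilon$-relaxation to accommodate strict primal inequalities) then gives that condition~(1) is infeasible if and only if there exist admissible $\alpha$ and $\beta \geq 0$ satisfying $\beta \geq \sum_{(x,y,i) \in M}(\alpha_x^i - \alpha_y^i)\process_y$ for every typed vertex-disjoint path set $M$, together with $\beta < \sum_{x \in \source} \alpha_x^x \arrival_x$. Maximizing over $M$ and contrapositing, this is equivalent to condition~(2): for every admissible $\alpha$, there exists a typed vertex-disjoint path set $U$ with $\sum_{(x,y,i) \in U}(\alpha_x^i - \alpha_y^i)\process_y > \sum_{x \in \source} \alpha_x^x \arrival_x$.

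The main obstacle is the multi-type Birkhoff--von Neumann step: because the capacity constraints in Theorem~\ref{thm:GN-central} aggregate across types at each endpoint, one must verify that the ``typed doubly substochastic'' weights lie in the convex hull of typed matchings rather than in a looser relaxation; this ultimately reduces to applying classical BvN to the type-aggregated edge weights and then attributing each chosen edge back to a single type via a bookkeeping argument. A secondary subtlety is the activation caveat $\prod_{y \in \inneighbor(x)} z_{yx}^i > 0$ on the middle-layer flow constraint: as in the proof of Lemma~\ref{lem:G-dual}, inactive constraints contribute nothing to the dual and can be dropped, so the Farkas argument goes through after a small case analysis on which coordinates of $\alpha_x^i$ are forced to vanish.
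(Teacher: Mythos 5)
Your proposal follows the same route the paper intends: the paper omits this proof, pointing to Lemma~\ref{lem:G-dual}, whose argument is exactly what you describe---a Birkhoff--von Neumann style decomposition of the feasible region over vertex-disjoint path ensembles (Lemma~\ref{lem:path-decomp}) followed by Farkas' lemma with dual variables $\alpha$ for the rate constraints and a scalar $\beta$ for the normalization. Your multi-type BvN step (aggregate across types into a doubly substochastic matrix, apply classical BvN, then re-attribute types to matched edges in proportion to $z_{xy}^i$) and the coefficient computation $\sum_{(x,y,i)\in M}(\alpha_x^i-\alpha_y^i)\process_y$ are the correct typed generalizations of that argument, and your handling of the activation caveat mirrors the paper's own treatment in Lemma~\ref{lem:G-dual}.
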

We omit the proof of Theorem~\ref{thm:GN-central} and Lemma \ref{lem:GN-dual} because they are similar to Theorem~\ref{thm:G-central} and Lemma \ref{lem:G-dual}.


\subsection{Decentralized Stability Condition}
\begin{assumption}\label{decentralized GN stability condition}
There is a $\beta>0$ such that for any $\bm{\alpha}>0$, there is a vertex-disjoint path set with type $\disjointpathset$ such that 
$$
\frac{1}{2}(1-\beta)\sum_{(x,y,i)\in U}(\alpha_{x}^{i}-\alpha_{y}^{i})\process_{y}>\sum_{x\in \source}\alpha_{x}^{x}\arrival_{x}
$$
\end{assumption}
The definition of regret and no regret learning strategy is the same as in \textbf{subsection} \ref{stability and no regret}.

\begin{theorem}\label{thm:decentralized GN}
If Assumption \ref{decentralized GN stability condition} holds, and nodes use no-regret learning strategies with $\delta=\frac{\beta \process_{(m)}}{96(n+m)^2}$, then the queueing system is strongly stable.
\end{theorem}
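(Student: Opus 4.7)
The plan is to mirror the proof of Theorem~\ref{thm:G-decentral}, indexing the potential, utilities, and vertex-disjoint paths by packet type. Define the potential
$$\Phi(\mathbf Q_t) \;=\; \tfrac{1}{2}\sum_{x \in \source \cup \medium}\sum_{i \in \source} Q_t^{x,i}\bigl(Q_t^{x,i}-1\bigr),$$
and let $Z_\ell = \sqrt{\Phi(\mathbf Q_{\ell w})}$. I aim to verify the hypotheses of Theorem~\ref{pemantale} for $(Z_\ell)_\ell$. The bounded jump condition follows as in Lemma~\ref{lem:bounded-jump-G}: each per-type queue length changes by at most~$1$ per step, so $|\Phi(\mathbf Q_{(\ell+1)w}) - \Phi(\mathbf Q_{\ell w})|$ is bounded by a polynomial in $n,m,w$ times the largest queue length, and the $\sqrt{\cdot}$ and Cauchy–Schwarz give a constant $p$-th moment bound.

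For the negative drift, let $\tau_x^i$ be the number of type-$i$ packets at node~$x$ at time $(\ell+1)w$ that arrived before time $\ell w$, and define an event~$B$ capturing (i) per-source arrivals $\le (1+\eps)\arrival_x w$, (ii) per-server realized successes $\ge (1-\eps)\process_y w$, and (iii) total regret $\le w$; by Chernoff bounds and the no-regret assumption with $\delta = \beta \process_{(m)} / 96(n+m)^2$, $B$ holds with probability at least $1 - \beta \process_{(m)} / 32(n+m)$. The heart of the argument is a type-indexed analog of Lemma~\ref{lem:lower-bound-utility}: for any vertex-disjoint path set with type~$U$ and any triple $(x,y,i)\in U$, consider the counterfactual in which $x$ always routes type-$i$ packets to~$y$. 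Let $X_{yt}^{\tau_x^i}$ indicate that some \emph{other} node sends $y$ a type-$i$ packet from a queue of length $\geq \tau_x^i$; since the priority rule picks the maximum type-$i$ queue-length gap, node~$x$'s packet is served whenever $X_{yt}^{\tau_x^i}=0$ and $C_t^y=1$, contributing utility at least $\tau_x^i - Q_t^{y,i} \geq \tau_x^i - \tau_y^i - w$. Combining the no-regret bound with the fact that $X_{yt}^{\tau_x^i} C_t^y = 1$ forces $y$ to clear a type-$i$ packet of contribution at least $\tau_x^i - \tau_y^i - w$ yields, after summing over $U$,
$$2\!\!\sum_{x,i,t} u_t^{x,i} \;\geq\; \sum_{(x,y,i)\in U}(\tau_x^i - \tau_y^i - w)(1-\eps)\process_y w \;-\; \sum_{x}\reg_x(w,(\ell+1)w).$$

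I now instantiate Assumption~\ref{decentralized GN stability condition} with $\alpha_x^i = \tau_x^i$; these satisfy the constraint $\alpha_x^i=0$ for $x\in\terminal$ (empty terminals) and for $x\in\source$ with $x\neq i$ (a source holds only its own type), producing a path set $U^*$ with $\tfrac12(1-\beta)\sum_{(x,y,i)\in U^*}(\tau_x^i-\tau_y^i)\process_y > \sum_{x\in\source}\tau_x^x\arrival_x$. Since each clearing of a packet decreases $\Phi$ by exactly the utility gained (the new queue-length-aware utility is chosen for this purpose), this yields a lower bound on $\Phi(\mathbf Q_{\ell w}) - \Phi(\boldsymbol\tau)$ of the form in Lemma~\ref{lem:phi-decrease-G}. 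The upper bound $\Phi(\mathbf Q_{(\ell+1)w}) - \Phi(\boldsymbol\tau) \leq (1+\beta/8)\sum_x \arrival_x\tau_x^x w + \sum_x \arrival_x^2 w^2$ follows from (i) under event~$B$, while the worst case off $B$ gives $\Phi(\mathbf Q_{(\ell+1)w}) - \Phi(\mathbf Q_{\ell w}) \leq \sum_{x\in\source} Q_{\ell w}^{x,x} w + w^2$.

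The main obstacle is the analog of Lemma~\ref{lem: lower bound of potential decrease}, which relates $\sum_{(x,y,i)\in U^*}(\tau_x^i-\tau_y^i)\process_y$ back to $\sum_{x,i} Q_{\ell w}^{x,i}$ so that the drift scales with~$Z_\ell$. In the single-type case one fixes the heaviest node~$k$ and telescopes $\tau$-differences along any $k$-to-terminal path. Here, I fix the heaviest pair $(x^*,i^*)$ and need a type-$i^*$-feasible path from $x^*$ to a terminal serving type~$i^*$; such a path exists because $\paths_{i^*}$ contains a source-to-terminal route for type~$i^*$ and $x^*$ lies on (or can be joined by an $i^*$-feasible subpath to) one such route by the routing semantics. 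Telescoping $\tau^{i^*}$-differences along this path, together with $\tau_{\text{term}}^{i^*}=0$, recovers $\tau_{x^*}^{i^*} \geq Q_{\ell w}^{x^*,i^*} - w$, so the maximization defining $U^*$ yields a lower bound proportional to the heaviest queue and hence to $Z_\ell / \sqrt{n(n+m)}$. Plugging everything into Pemantle's theorem gives $\Ex{Z_\ell^a}$ bounded for every $a>0$, and the customary inflation by $(n+m)w$ over a window extends the bound to all~$t$, completing the proof of strong stability.
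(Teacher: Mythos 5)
Your overall architecture matches the paper's: same potential indexed by $(x,i)$, same event~$B$, the same instantiation of Assumption~\ref{decentralized GN stability condition} with $\alpha_x^i = \tau_x^i$, and the telescoping along an $i^*$-feasible path from the heaviest pair $(x^*,i^*)$ to a terminal to relate the drift to $Z_\ell$. The paper explicitly says the only piece it re-proves for this theorem is the type-indexed analog of Lemma~\ref{lem:lower-bound-utility}, and that is exactly where your argument has a genuine gap.

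Your indicator variable $X_{yt}^{\tau_x^i}$ tracks whether some \emph{other} node sends~$y$ a \emph{type-$i$} packet from a queue of length $\geq \tau_x^i$, and you assert the server ``picks the maximum type-$i$ queue-length gap.'' But in the multi-type model the service priority rule at server~$y$ compares $Q_t^{z,j} - Q_t^{y,j}$ \emph{across all types~$j$ simultaneously}, not within a fixed type. So $X_{yt}^{\tau_x^i}=0$ and $C_t^y=1$ do \emph{not} guarantee that node~$x$'s counterfactual type-$i$ request would be picked up: a type-$j$ packet ($j\neq i$) from some node~$z$ can have a larger gap $Q_t^{z,j}-Q_t^{y,j}$ and win priority. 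The same issue infects the companion claim that $X_{yt}^{\tau_x^i}C_t^y = 1$ forces~$y$ to clear a \emph{type-$i$} packet of contribution at least $\tau_x^i-\tau_y^i-w$; the server may clear a packet of a different type. The fix is to replace your indicator with a type-agnostic one thresholding on the utility gap: set $s=\tau_x^i-\tau_y^i-w$ and let $X_{yt}^{s}$ indicate that \emph{some} packet (of any type) sent to~$y$ at time~$t$ would generate utility exceeding~$s$. Then $X_{yt}^{s}=0$ ensures $x$'s counterfactual request wins priority (its gap is $Q_t^{x,i}-Q_t^{y,i}\geq \tau_x^i - \tau_y^i - w = s$, using $Q_t^{x,i}\geq\tau_x^i$ and $Q_t^{y,i}\leq\tau_y^i+w$), and $X_{yt}^{s}C_t^y=1$ ensures $y$ clears \emph{some} packet with contribution $v_t^y\geq s$. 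With this corrected indicator the per-pair inequality and the summation over the vertex-disjoint path set with type go through exactly as you outline, and the rest of your proposal (Pemantle's theorem via negative drift and bounded jump) is sound and matches the paper.
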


We define the following potential functions:
\begin{align*}
\Phi\left(\mathbf{\queue}_{t}\right) \coloneqq \frac{1}{2} \sum_{x\in \source \cup \medium} \sum_{i \in \source} \queue_t^{x,i}(\queue_t^{x,i}-1) .
\end{align*}

The main difference between  proof of this theorem and Theorem \ref{thm:decentralized G} is how to prove the following lemma similar to Lemma \ref{lem:lower-bound-utility}. We present the proof of the following lemma and omit the rest of proof.

\begin{lemma}
  For any $\eps > 0$, if \ $\sum_{t = \ell \cdot w}^{(\ell + 1)\cdot w - 1} C_t^y \geq (1 - \eps) \process_y w$ for each~$y$, then for any disjoint path set with type $\disjointpathset$, $\sum_{x \in \source \cup \medium}\sum_{t = \ell \cdot w}^{(\ell + 1)\cdot w - 1} u_t^i \geq \frac{1}{2}  \sum_{(x,y,i)\in \disjointpathset}(\tau_x^i-\tau_y^i -w)(1-\eps)
  \process_y w- \sum_{x \in \source \cup \medium}\reg_{x}(w,(\ell + 1)\cdot w)$.
\end{lemma}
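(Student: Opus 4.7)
The plan is to generalize the proof of Lemma~\ref{lem:lower-bound-utility} to track type-specific queue lengths rather than aggregate queue lengths. Fix an edge $(x, y, i) \in \disjointpathset$ with $\tau_x^i > \tau_y^i + w$ (otherwise the corresponding term is non-positive and contributes trivially to the claimed lower bound). Throughout the window $[\ell w, (\ell+1)w]$ we have $Q_t^{x,i} \geq \tau_x^i$, since $\tau_x^i$ unprocessed old type-$i$ packets remain at $x$, and $Q_t^{y,i} \leq \tau_y^i + w$, because the server $y$ processes at most one packet per step and hence at most $w$ new type-$i$ packets can arrive in its sub-queue over the window.

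Next, consider the counterfactual strategy in which $x$ always sends a type-$i$ packet to $y$. This is a valid action since $(x,y) \in \paths_i$ and $Q_t^{x,i} \geq \tau_x^i > \tau_y^i + w \geq Q_t^{y,i}$. Let $X_{yt}^{\tau_x^i,\, i}$ be the indicator of the event that \emph{some} node sends a packet (of any type $i'$) to $y$ at time $t$ whose priority $Q_t^{x',i'} - Q_t^{y,i'}$ is at least $\tau_x^i - Q_t^{y,i}$. When $X_{yt}^{\tau_x^i,\, i} = 0$, the counterfactual type-$i$ packet from $x$ has priority $Q_t^{x,i} - Q_t^{y,i} \geq \tau_x^i - Q_t^{y,i}$, which strictly dominates all competing requests, so the priority rule makes $y$ pick it; if in addition $C_t^y = 1$, this yields $x$ a utility of $Q_t^{x,i} - Q_t^{y,i} \geq \tau_x^i - \tau_y^i - w$. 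The no-regret definition then gives
\begin{align*}
\textstyle\sum_t u_t^x \;\geq\; \sum_t C_t^y \bigl(1 - X_{yt}^{\tau_x^i,\, i}\bigr)(\tau_x^i - \tau_y^i - w) - \reg_x(w, (\ell+1)w).
\end{align*}
Symmetrically, whenever $X_{yt}^{\tau_x^i,\, i} C_t^y = 1$, $y$ successfully clears a packet whose priority (hence its contribution $v_t^y$) is at least $\tau_x^i - Q_t^{y,i} \geq \tau_x^i - \tau_y^i - w$, so $\sum_t v_t^y \geq \sum_t C_t^y X_{yt}^{\tau_x^i,\, i}(\tau_x^i - \tau_y^i - w)$. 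Adding the two bounds and using the hypothesis $\sum_t C_t^y \geq (1-\eps)\process_y w$ yields
\begin{align*}
\textstyle\sum_t u_t^x + \sum_t v_t^y \;\geq\; (1-\eps)\process_y w\,(\tau_x^i - \tau_y^i - w) - \reg_x(w, (\ell+1)w).
\end{align*}

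Finally, I would sum the pairwise bound over all $(x,y,i) \in \disjointpathset$. Because $\disjointpathset$ is vertex-disjoint, each node appears at most once as a tail and at most once as a head, so the LHS is bounded above by $\sum_x \sum_t u_t^x + \sum_y \sum_t v_t^y = 2\sum_{x \in \source \cup \medium} \sum_t u_t^x$, using the identity $\sum_x u_t^x = \sum_y v_t^y$ at every step. Dividing by two yields the claim. The main obstacle to watch is the interaction between priorities of different types at a shared server: one must verify that the cross-type comparison defining $X_{yt}^{\tau_x^i,\, i}$ is legitimate and that the server-side lower bound $v_t^y \geq \tau_x^i - \tau_y^i - w$ continues to hold no matter which type $i'$ of packet ultimately wins the priority contest at $y$. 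The other subtlety is confirming the bound $Q_t^{y,i} \leq \tau_y^i + w$ in the multi-type, multi-layer setting, which follows from the fact that $y$ processes at most one packet per step and so its type-$i$ sub-queue grows by at most one per step regardless of how many upstream nodes target $y$ with type-$i$ requests.
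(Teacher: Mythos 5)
Your proposal is correct and follows essentially the same route as the paper's proof: a pairwise counterfactual bound for each edge $(x,y,i)$ via the no-regret property, a matching lower bound on the server's contribution $v_t^y$, summation over the vertex-disjoint path set, and the identity $\sum_x u_t^x = \sum_y v_t^y$. The only (harmless) difference is that you threshold the competing-request indicator at the time-varying value $\tau_x^i - Q_t^{y,i}$ rather than the paper's fixed $\tau_x^i-\tau_y^i-w$, which if anything handles priority ties more cleanly.
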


\begin{proof}
Let $\tau_{x}^{i}$ be the number of type-$i$ packets in node $x$ at time step $(\ell+1) \cdot w$ which arrived at node $x$ before time step $\ell \cdot w$. Then, for any $ t\in [\ell \cdot w, (\ell+1)\cdot w], \tau_{x}^{i} \leq \queue_t^{x,i} \leq \tau_{x}^{i}+w$, since each node can only receive and clear packet one at a time step.
Recall that $C_t^y$ is the indicator variable that at time step $t$ if server $y$ can clear a packet successfully. Let $X_{yt}^{s}$ denote the indicator variable that at time step $t$, a packet was sent to server $y$ which generated utility more than $s$ if it was successfully cleared. Set $s=\tau_{x}^{i}-\tau_{y}^{i}-w$.
Then at time step $t$ when $X_{yt}^{s}=0$, node $x$ sending server $y$ a type-$i$ packet because it has priority.  Over the time window, node $i$'s counterfactual utility could have been at least $\sum_{t = \ell \cdot w}^{(\ell + 1)\cdot w - 1}C_{t}^{y}(1-X_{yt}^{s}(Q_{t}^{x,i}-Q_{t}^{y,i}))$. Let $u_t^x$ denote the utility of node $x$ at time step $t$. Let $v_t^y$ denote the utility caused by server $y$ if server $y$ successfully clears a packet at time step $t$. By definition of regret, we have
\begin{align*}
\sum_{t = \ell \cdot w}^{(\ell + 1)\cdot w - 1} u_t^x 
&\geq \sum_{t = \ell \cdot w}^{(\ell + 1)\cdot w - 1} C_{t}^{y}(1-X_{yt}^{s}(Q_{t}^{x,i}-Q_{t}^{y,i}))-\reg_{x}(w,(\ell + 1)\cdot w)  \\
&\geq \sum_{t = \ell \cdot w}^{(\ell + 1)\cdot w - 1} C_{t}^{j}(1-X_{yt}^{s}(\tau^{x,i}-Q_{t}^{y,i}))-\reg_{x}(w,(\ell + 1)\cdot w),
\end{align*}
where the second inequality is due to $Q_{t}^{x,i}\geq \tau_x^i$. On the other hand, whenever $X_{yt}^{s}C_{t}^{y}=1$, it means that server $y$ clears a packet generating utility at least $s$ at time step $t$. Then $v_t^j$ is at least $s$. Over the time window, $\sum_{t = \ell \cdot w}^{(\ell + 1)\cdot w - 1} v_t^j \geq \sum_{t = \ell \cdot w}^{(\ell + 1)\cdot w - 1} C_{t}^{y}X_{yt}^{s}(\tau_x^i-\tau_y^i-w)$.
For each pair $(x,y,i)\in \disjointpathset$, 
\begin{align}\label{eq: lower bound utility of a pair with type}
\sum_{t = \ell \cdot w}^{(\ell + 1)\cdot w - 1} u_x(t)+\sum_{t = \ell \cdot w}^{(\ell + 1)\cdot w - 1} v_t^y
\geq \sum_{t = \ell \cdot w}^{(\ell + 1)\cdot w - 1}(\tau_{x}^i-\tau_{y}^{i}-w)S_t^y-\reg_{x}(w,(\ell + 1)\cdot w),
\end{align}
where the last inequality is due to $\queue_{t}^{y,i}\leq \tau_{y}^{i}+w$.

Now, we are ready to give a lower bound of utilities. 
\begin{align*}
2\sum_{x\in \source \cup \medium} \sum_{t = \ell \cdot w}^{(\ell + 1)\cdot w - 1} u_t^x
=&\sum_{x \in \source \cup \medium} \sum_{t = \ell \cdot w}^{(\ell + 1)\cdot w - 1} u_t^x+\sum_{y\in \medium \cup \terminal} \sum_{t = \ell \cdot w}^{(\ell + 1)\cdot w - 1} v_t^y\\
\geq& \sum_{(x,y,i)\in \disjointpathset} (\sum_{t = \ell \cdot w}^{(\ell + 1)\cdot w - 1} u_t^x+\sum_{t = \ell \cdot w}^{(\ell + 1)\cdot w - 1} v_t^y)\\
\geq&  \sum_{(x,y,i)\in \disjointpathset} \sum_{t = \ell \cdot w}^{(\ell + 1)\cdot w - 1}(\tau_x^{i}-\tau_y^{i}-w)S_t^y-\sum_{x \in \source \cup \medium}\reg_{x}(w,(\ell + 1)\cdot w)\\
\geq& \sum_{(x,y,i)\in \disjointpathset} (\tau_x^{i}-\tau_y^{i}-w)(1-\epsilon) \process_j w- \sum_{x \in \source \cup \medium}\reg_{x}(w,(\ell + 1)\cdot w),
\end{align*}
where the second inequality uses \eqref{eq: lower bound utility of a pair with type}, and the third inequality is due to $\sum_{t = \ell \cdot w}^{(\ell + 1)\cdot w - 1}S_t^y \geq (1-\epsilon)\process_y w$ which is the condition of the lemma.

\end{proof}

\section{A Tighter Analysis of The Queue-CB Model}
\label{sec:app-CB}
We label queues such that $\arrival_1 \geq \cdots \geq  \arrival_n \geq 0$ and label servers such that $\process_1 \geq  \cdots \geq \process_n \geq 0$.
\begin{theorem}
If there exists $\beta>0$ such that for all $k\in[n]$,
$$
\frac{k}{2k-1}(1-\beta)\sum_{i=1}^{k}\process_{i}> \sum_{i=1}^{k}\lambda_{i}
$$ and queues uses no-regret learning strategies with $\delta=\frac{\beta}{128n}$, then the queueing system is strongly stable.
\end{theorem}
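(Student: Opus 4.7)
The plan is to mirror the proof of Theorem~\ref{thm:IB-decentral}: reuse the potential $\Phi(\ages_t)=\tfrac{1}{2}\sum_{i}\arrival_i T_t^i(T_t^i-1)$, the process $Z_\ell=\sqrt{\Phi(\ages_{\ell\cdot w})}$, the concentration event~$A$ of Section~\ref{sec:IB-decentral}, and the bounded jump estimate verbatim; Pemantle's theorem (Theorem~\ref{thm:pemantle}) will then yield strong stability once negative drift is verified. The only step to sharpen is the counterpart of Lemma~\ref{lem:lower-bound-Ntau}: I will establish
\[
\tfrac{2k-1}{k}\,N_{\tau}\;\geq\;(1-\epsilon)\sum_{j=1}^{k}\process_{j}\,w\;-\;\sum_{i\in J_{\tau}}\reg_i(w,(\ell+1)\cdot w),\qquad k=|J_{\tau}|.
\]
Since the top-$k$ arrivals dominate those of any size-$k$ subset, the new hypothesis gives $\tfrac{k}{2k-1}(1-\beta)\sum_{j=1}^{k}\process_j>\sum_{i\in J_\tau}\arrival_i$, so this bound immediately recovers $N_\tau\geq\tfrac{1-\epsilon}{1-\beta}\sum_{i\in J_\tau}\arrival_i\,w-\sum_i\reg_i$, exactly the input used in Lemma~\ref{lem:phi-decrease}, and all downstream arguments transfer verbatim.

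The factor $\tfrac{2k-1}{k}$ (an improvement of exactly $1/k$ over the original $2$) should come from exploiting the oldest-first service rule via an ordering-plus-averaging argument. I would order the queues of $J_\tau$ as $i_{(1)},\ldots,i_{(k)}$ in decreasing order of the age $\tau_{(l)}$ of their oldest packet at time $(\ell+1)\cdot w$, and let $\pi^{(l)}$ denote that packet. Because packets only age and $\pi^{(l)}$ sits in queue $i_{(l)}$ throughout the window, the age ordering $\pi^{(1)},\ldots,\pi^{(k)}$ persists at every time step. Consequently, for any bijection $\sigma\colon[k]\to[k]$, in the counterfactual where $i_{(l)}$ always sends to $\sigma(l)$ the only packets that can beat $i_{(l)}$'s transmission at $\sigma(l)$ come from queues $\{i_{(l')}\}_{l'<l}$, and the no-regret hypothesis will yield
\[
N_{\tau}^{(l)}\;\geq\;\sum_{t=\ell w}^{(\ell+1)w-1}C_t^{\sigma(l)}\;-\;\sum_{l'<l}L_{(l'),\sigma(l)}\;-\;\reg_{i_{(l)}}(w,(\ell+1)\cdot w),
\]
where $L_{(l'),j}$ counts packets from $i_{(l')}$ cleared by $j$ during the window.

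Summing over $l$ and averaging over all $k!$ bijections $\sigma$ is where the improvement materializes. For a fixed pair $(l',j)$ with $l'\in[k-1]$, the unique $l$ with $\sigma(l)=j$ is uniform on $[k]$, so $l>l'$ occurs with probability $(k-l')/k$, giving
\[
\tfrac{1}{k!}\sum_{\sigma}\sum_{l=1}^{k}\sum_{l'<l}L_{(l'),\sigma(l)}\;=\;\sum_{l'=1}^{k-1}\frac{k-l'}{k}\sum_{j=1}^{k}L_{(l'),j}\;\leq\;\frac{k-1}{k}\,L_{\tau}^{[k]}\;\leq\;\frac{k-1}{k}\,N_{\tau},
\]
where $L_\tau^{[k]}:=\sum_{j=1}^{k}L_{\tau}^{j}$. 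Since $\sum_{l}N_\tau^{(l)}\leq N_\tau$, combining yields $\tfrac{2k-1}{k}N_\tau\geq(1-\epsilon)\sum_{j=1}^{k}\process_j w-\sum_i\reg_i$ under event~$A$, as required.

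The hard part will be making rigorous the claim that the age ordering of $\pi^{(1)},\ldots,\pi^{(k)}$ controls the interference at every time step: the oldest packet in $i_{(l)}$ at some intermediate time may actually be even older than $\pi^{(l)}$ (if an even-older packet was cleared before the window's end), and queues outside $J_\tau$ may send $\tau$-old packets that compete with $i_{(l)}$'s transmissions. Both effects only reduce the effective interference affecting $i_{(l)}$ while adding to $L_\tau^{[k]}$, so the final inequality still holds; but accounting for them cleanly requires bookkeeping in the spirit of the existing proof of Lemma~\ref{lem:lower-bound-Ntau}, and is expected to be the main technical work of the appendix proof.
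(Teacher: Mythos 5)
Your overall strategy (reuse the potential, event $A$, and Pemantle machinery; sharpen only Lemma~\ref{lem:lower-bound-Ntau} to $\frac{2k-1}{k}N_\tau \geq (1-\eps)\sum_{j=1}^{k}\process_j w - \sum_i \reg_i$ and feed the theorem's hypothesis through $\sum_{i\in J_\tau}\arrival_i \leq \sum_{i=1}^{k}\arrival_i$) is exactly the paper's, and the downstream plumbing is fine. The gap is in your key intermediate inequality. The claim that, in the counterfactual where $i_{(l)}$ always sends to $\sigma(l)$, ``the only packets that can beat $i_{(l)}$'s transmission come from queues $\{i_{(l')}\}_{l'<l}$'' is false: the end-of-window ordering of the surviving packets $\pi^{(1)},\dots,\pi^{(k)}$ does not control which queue holds the older packet at an intermediate time $t$. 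At time $t$ each queue sends its \emph{current} oldest packet, and a queue $i_{(l'')}$ with $l''>l$ (or a queue outside $J_\tau$ entirely) may at that moment hold a packet older than $i_{(l)}$'s current oldest; such a packet wins at server $\sigma(l)$ and is interference that $\sum_{l'<l}L_{(l'),\sigma(l)}$ does not count. Your patch in the last paragraph asserts that these effects ``only reduce the effective interference affecting $i_{(l)}$,'' but they do the opposite — they add interference — so the inequality $N_\tau^{(l)} \geq \sum_t C_t^{\sigma(l)} - \sum_{l'<l}L_{(l'),\sigma(l)} - \reg_{i_{(l)}}$ and the averaging computation built on it do not stand as written.

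The conclusion is nonetheless salvageable, and the correct accounting shows the age ordering is a red herring: the only structure you need is \emph{self-exclusion}. For the counterfactual ``queue $i$ always sends to $j$,'' at a step where $C_t^j=1$ the deviation succeeds unless a $\tau$-old packet from some \emph{other} queue is actually cleared by $j$; the steps where $i$'s own packet is cleared by $j$ in the real sample path count as successes, not interference. This is precisely how the paper's appendix proof proceeds: writing $a_{ij}$ for the number of $\tau$-old packets from $i$ cleared by $j$, it uses $\sum_{j'}a_{ij'} \geq a_{ij} + \sum_t C_t^j(1-X_{jt}^\tau) - \reg_i$, sums over $i\in J_\tau$ and the top $l=\min\{k,m\}$ servers, and the subtracted overlap $\sum_{i\in J_\tau}\sum_{j\in[l]}a_{ij}$ is exactly the ``$-1$'' that turns $2k$ into $2k-1$. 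In your averaged form, each cleared $\tau$-old packet is charged at most $1+\frac{k-1}{k}=\frac{2k-1}{k}$ if it comes from a $J_\tau$ queue and is cleared by a top-$k$ server (it contributes $1$ to $\sum_l N_\tau^{(l)}$ and interferes with at most the $k-1$ counterfactuals of the \emph{other} queues at that server), and at most $1\leq\frac{2k-1}{k}$ in every other case (a packet from outside $J_\tau$ can interfere with all $k$ counterfactuals but contributes nothing to $\sum_l N_\tau^{(l)}$). Replace your ordering-based interference bound with this per-packet charging and the lemma, and hence the theorem, goes through.
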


The only difference between the proof of this theorem and that of Theorem \ref{thm:IB-decentral} is how to derive Lemma \ref{lem:lower-bound-Ntau}, where we use the condition of this theorem. In the following, we restate and prove this lemma and omit the rest of proof.

\begin{lemma}
  For any $\tau > 0$ and $\eps > 0$, if \ $\sum_{t = \ell \cdot w}^{(\ell + 1)\cdot w - 1} C_t^j \geq (1 - \eps) \process_j w$ for each~$j$, then $N_{\tau} \geq \frac {1 - \eps}{1 - \beta} \sum_{i \in J_\tau} \arrival_i w - \sum_{i = 1}^n \reg_i(w, (\ell+1) \cdot w)$.
\end{lemma}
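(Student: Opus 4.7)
The plan is to sharpen the counterfactual step in the original proof of Lemma~\ref{lem:lower-bound-Ntau} so that the resulting factor scales like $\tfrac{k}{2k-1}$ in $k=|J_\tau|$, matching the strengthened hypothesis $\frac{k}{2k-1}(1-\beta)\sum_{j=1}^{k}\process_j > \sum_{i=1}^{k}\lambda_i$. The critical change is to replace the indicator $X_{jt}^\tau$ (``some queue in $J_\tau$ sent a $\tau$-old packet to~$j$ at time~$t$'') by the tighter indicator $X_{jt}^{\tau,-i}$ (``some queue \emph{other than}~$i$ sent a $\tau$-old packet to~$j$''). In the counterfactual where $i \in J_\tau$ always sends to server~$j$, the packet it sends is itself $\tau$-old throughout the window (because $i\in J_\tau$ implies $T_t^i \ge \tau$ for every $t$), so $i$'s packet beats every non-$\tau$-old competitor at~$j$, and the only forced losses are when some other queue sent a strictly older $\tau$-old packet. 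Thus $i$'s counterfactual utility is at least $\sum_t C_t^j(1-X_{jt}^{\tau,-i})$, and the no-regret property of~$i$ yields
\[
N_\tau^i + \sum_t C_t^j\, X_{jt}^{\tau,-i} \;\ge\; (1-\eps)\process_j w - \reg_i.
\]

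Next, I would rewrite $\sum_t C_t^j X_{jt}^{\tau,-i} = L_\tau^j - \ell_\tau^{j,i}$, where $\ell_\tau^{j,i}$ counts the time steps at which $j$ successfully clears a packet while $i$ is the \emph{unique} queue sending a $\tau$-old packet to~$j$. The new ingredient relative to the original proof is that for each fixed~$j$ these ``unique-sender'' events are mutually exclusive across $i\in J_\tau$ at each~$t$ and each forces $j$ to clear a $\tau$-old packet, so $\sum_{i\in J_\tau}\ell_\tau^{j,i} \le L_\tau^j$. Summing the per-pair inequality over $i\in J_\tau$ for fixed~$j$ therefore gives
\[
N_\tau + (k-1)\,L_\tau^j \;\ge\; k(1-\eps)\process_j w - \sum_{i=1}^{n}\reg_i.
\]
Summing over the top-$k$ servers $j=1,\dots,k$ and applying the trivial bound $\sum_{j=1}^{k} L_\tau^j \le N_\tau$ yields the sharpened estimate
\[
(2k-1)\,N_\tau \;\ge\; k(1-\eps)\sum_{j=1}^{k}\process_j\, w \;-\; k\sum_{i=1}^{n}\reg_i.
\]

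Finally, the strengthened hypothesis combined with the sortedness $\lambda_1\ge\cdots\ge\lambda_n$ gives $\sum_{i\in J_\tau}\lambda_i \le \sum_{i=1}^{k}\lambda_i < \tfrac{k(1-\beta)}{2k-1}\sum_{j=1}^{k}\process_j$. Dividing the sharpened estimate by $2k-1$ and plugging in this bound on $\sum_{j=1}^k\process_j$ collapses the coefficient of $\sum_{i\in J_\tau}\lambda_i w$ to exactly $\frac{1-\eps}{1-\beta}$, while the regret coefficient $\tfrac{k}{2k-1}\le 1$ is absorbed into the conclusion $-\sum_{i=1}^{n}\reg_i$. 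The main subtlety I anticipate is rigorously justifying the refined counterfactual lower bound $C_t^j(1-X_{jt}^{\tau,-i})$ in a way compatible with the paper's definition of regret, which evaluates the fixed counterfactual action against the real filtration $\mathcal F_t$; this is handled because $i\in J_\tau$ forces the oldest packet in~$i$'s queue under $\mathcal F_t$ to be $\tau$-old at every~$t$ in the window, which is exactly what the comparison with the age of other senders requires.
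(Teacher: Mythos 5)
Your per-pair inequality is correct as stated: since $i \in J_\tau$ has a $\tau$-old oldest packet throughout the window, the counterfactual of always routing to~$j$ beats every non-$\tau$-old competitor, which gives $N_\tau^i + \sum_t C_t^j X_{jt}^{\tau,-i} \geq (1-\eps)\process_j w - \reg_i$. The identity $\sum_t C_t^j X_{jt}^{\tau,-i} = L_\tau^j - \ell_\tau^{j,i}$ is also correct. The problem is in the next step: summing over $i \in J_\tau$ gives $\sum_{i\in J_\tau} N_\tau^i + k L_\tau^j - \sum_{i\in J_\tau}\ell_\tau^{j,i} \geq k(1-\eps)\process_j w - \sum_i\reg_i$, and to replace the left-hand side by $N_\tau + (k-1)L_\tau^j$ you need $\sum_{i\in J_\tau} N_\tau^i + k L_\tau^j - \sum_{i\in J_\tau}\ell_\tau^{j,i} \leq N_\tau + (k-1)L_\tau^j$, i.e.\ $L_\tau^j - \sum_{i\in J_\tau}\ell_\tau^{j,i} \leq \sum_{i\notin J_\tau}N_\tau^i$. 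Your ``new ingredient'' $\sum_{i\in J_\tau}\ell_\tau^{j,i} \leq L_\tau^j$ points in the \emph{wrong direction} for this: it makes $L_\tau^j - \sum_{i}\ell_\tau^{j,i}$ nonnegative but gives no upper bound. And the required inequality is actually false. Consider a time step where two queues in $J_\tau$ both send $\tau$-old packets to~$j$ and $j$ clears one; then $L_\tau^j$ increments but no $\ell_\tau^{j,i}$ does (no unique sender), so the gap $L_\tau^j - \sum_i\ell_\tau^{j,i}$ grows, while $\sum_{i\notin J_\tau}N_\tau^i$ can be zero (e.g., when every queue is in $J_\tau$).

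The deeper issue is that your counterfactual lower bound is strictly weaker than what is actually available. Replacing $X_{jt}^\tau$ by $X_{jt}^{\tau,-i}$ \emph{loses} a term rather than gaining one: a coefficient-by-coefficient comparison shows your per-pair inequality is implied by the sharper bound that also credits the counterfactual with the time steps where $i$'s packet was \emph{actually} cleared by~$j$ in the real run (i.e., the paper's $a_{ij}$ term), even when another queue also sent a $\tau$-old packet — $i$ was observed to win, so it still wins in the counterfactual. That extra term is exactly what the paper needs: their summed inequality has a genuine $-\sum_{i\in J_\tau}\sum_{j\in[l]}a_{ij}$ term whose coefficient, when set against $(2k-1)N_\tau = (2k-1)\sum_{ij}a_{ij}$, brings the coefficient on the $(i\in J_\tau, j\in[l])$ cell down from $l+k$ to $l+k-1 \leq 2k-1$. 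Your $-\sum\ell_\tau^{j,i}$ is only a partial proxy for that term ($\ell_\tau^{j,i} \leq a_{ij}$), so the coefficient on that cell after your bound can reach $l+k = 2k$ when $l=k$, overshooting $2k-1$. To fix the proposal you should go back to the paper's counterfactual, which keeps the full $a_{ij}$ credit, and then do the coefficient bookkeeping over all $(i,j)$ pairs in $(2k-1)N_\tau$ rather than attempting a per-server intermediate inequality.
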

\begin{proof}
Recall that $N_{\tau}$ is the number of $\tau$-old packets cleared during the time window. Let $k=|J_{\tau}|,l=\min \{k,m\}$. Recall that $X_{jt}^{\tau}$ is the indicator variable for the event that \emph{some queue} (which may not be $i$) sends a $\tau$-old packet to server $j$ at time step $t$.
Then at any time step~$t$ when $X_{jt}^\tau = 0$, server~$i$'s packet would have been picked up by server~$j$ had $i$ sent a request, because no other packet sent to~$j$ is $\tau$-old, so the packet from~$i$ has priority.
Recall that $C_t^j$ is the indicator variable for server~$j$ succeeding in processing a packet if it picks one up.
So queue~$i$ would have gained utility $1$ at time~$t$ by sending a request to~$j$ if $C_t^j = 1$ and $X_{jt}^\tau = 0$. Besides, queue~$i$ would have gained utility $1$ at time~$t$ cleared a packet successfully by server $j$ on the real sample path.
Let $a_{ij}$ denote the number of $\tau$-old packets from queue $i$ cleared by server $j$, we have
\begin{align}
    \sum_{j\in [m]}a_{ij}\geq a_{ij}+ \sum_{t= \ell \cdot w}^{(\ell+1) \cdot w -1}C_{t}^{j}(1-X_{jt}^{\tau})-\reg_{i}(w,(\ell+1) \cdot w) ,
\end{align}
since for any server $j$, a queue would succeed on a time step \emph{when this queue cleared a packet successfully by server $j$ on the real sample path} and when no $\tau$-old packets were sent there.
Summing over queue $i\in J_{\tau}$ , server $j \in [l]$, we have
\begin{align*}
l\sum_{i\in J_{\tau}}\sum_{j\in [m]}a_{ij} \geq \sum_{i\in J_{\tau}}\sum_{j\in [l]} a_{ij}+k\sum_{j\in [l]}\sum_{t= \ell \cdot w}^{(\ell+1) \cdot w -1}C_{t}^{j}(1-X_{jt}^{\tau})-l \sum_{i \in [n]}\reg_{i}(w,(\ell+1) \cdot w)
\end{align*}
Rearranging terms on two sides, we have
\begin{equation}\label{eq:packet number}
\begin{aligned}
&l\sum_{i\in J_{\tau}}\sum_{j\in [m]}a_{ij}+k\sum_{j\in [l]}\sum_{t= \ell \cdot w}^{(\ell+1) \cdot w -1}C_{t}^{j}X_{jt}^{\tau}-\sum_{i\in J_{\tau}}\sum_{j\in [l]} a_{ij}\\
\geq& k\sum_{j\in [l]}\sum_{t= \ell \cdot w}^{(\ell+1) \cdot w -1}C_{t}^{j}-l \sum_{i \in [n]} \reg_{i}(w,(\ell+1) \cdot w)\\
\geq & k (1-\epsilon)  \sum_{j\in [l]}\process_j w-l \sum_{i \in [n]} \reg_{i}(w,(\ell+1) \cdot w),
\end{aligned}
\end{equation}
where the second inequality is due to the condition of this lemma: $\sum_{t = \ell \cdot w}^{(\ell + 1)\cdot w - 1} C_t^j \geq (1 - \eps) \process_j w$ for each~$j$.
Here are some facts based on definitions,
\begin{align}\label{eq:packets equal}
N_{\tau}=\sum_{i\in [n]}\sum_{j\in [m]}a_{ij},
\sum_{j\in [l]}\sum_{t= \ell \cdot w}^{(\ell+1) \cdot w -1}C_{t}^{j}X_{jt}^{\tau}=\sum_{i\in [n]}\sum_{j\in [l]}a_{ij},
\end{align}
where the left and right sides of the second equality represents number of $\tau$-old packets cleared by these $l$ servers, since server $j$ clears a $\tau$-old packet if and only if  $C_{t}^{j}X_{jt}^{\tau}=1$. Then, we give a lower bound of $N_{\tau}$:
\begin{align*}
    &(2k-1)N_{\tau}\\
= & (k-1)\sum_{i\in [n]}\sum_{j\in [m]}a_{ij}+k \sum_{i\in [n]} \sum_{j\in [m]}a_{ij}\\
\geq &(l-1)\sum_{i\in J_{\tau}}\sum_{j\in [l]}a_{ij}+l\sum_{i\in J_{\tau}}\sum_{j\in [m] \backslash [l]}a_{ij}+k\sum_{i\in [n]}\sum_{j\in [l]}a_{ij}\\
=&l\sum_{i\in J_{\tau}}\sum_{j\in [m]}a_{ij}+k\sum_{j\in [l]}\sum_{t= \ell \cdot w}^{(\ell+1) \cdot w -1}C_{t}^{j}X_{jt}^{\tau}-\sum_{i\in J_{\tau}}\sum_{j\in [l]} a_{ij} \\
\geq & k\sum_{j\in [l]}\sum_{t= \ell \cdot w}^{(\ell+1) \cdot w -1}C_{t}^{j}-\sum_{i \in [n]} \reg_{i}(w,(\ell+1) \cdot w)\\
\geq & k (1-\epsilon) w \sum_{j\in [l]}\process_j -\sum_{i \in [n]} \reg_{i}(w,(\ell+1) \cdot w),
\end{align*}
where the first and second equality uses \eqref{eq:packets equal}, the first inequality is due to $ l \leq k$, the second inequality uses \eqref{eq:packet number} and the last inequality uses the condition of this lemma.
Using the condition of Theorem, we have
\begin{align*}
    &\frac{k}{2k-1} \sum_{j\in [l]} \process_{j} 
    \geq \frac{l}{2l-1}\sum_{j\in [l]} \process_{j} \\
    > &\frac{1}{1-\beta}\sum_{i\in [l]} \arrival_{i} 
    \geq \frac{1}{1-\beta}\sum_{i\in J_{\tau}} \arrival_{i}.
\end{align*}
Therefore,
\begin{align*}
    N_{\tau}&\geq \frac{k}{2k-1} (1-\epsilon)\sum_{j\in [l]} \process_{j} w-\sum_{i \in [n]} \reg_{i}(w,(\ell+1) \cdot w)\\
    &> \frac{1-\epsilon}{1-\beta} \sum_{i \in J_{\tau}} \arrival_i w-\sum_{i \in [n]} \reg_{i}(w,(\ell+1) \cdot w).
\end{align*}
\end{proof}

\end{document}